\documentclass{article} 
\usepackage{iclr2025_conference,times}

\usepackage{amsmath,amsfonts,bm}

\def\eqref#1{equation~\ref{#1}}

\def\1{\bm{1}}

\DeclareMathAlphabet{\mathsfit}{\encodingdefault}{\sfdefault}{m}{sl}
\SetMathAlphabet{\mathsfit}{bold}{\encodingdefault}{\sfdefault}{bx}{n}

\usepackage{bm}
\usepackage{amsmath}

\def\Dcal{{\mathcal{D}}}

\def\Ncal{{\mathcal{N}}}

\def\Scal{{\mathcal{S}}}
\def\Tcal{{\mathcal{T}}}

\def\Vcal{{\mathcal{V}}}

\def\Xcal{{\mathcal{X}}}
\def\Ycal{{\mathcal{Y}}}

\def\Ebb{{\mathbb{E}}}

\def\Pbb{{\mathbb{P}}}

\def\Rbb{{\mathbb{R}}}

\def\pbm{{\bm{p}}}

\def\Ibf{{\mathbf{I}}}

\def\b0{{\mathbf{0}}}

\usepackage{amsthm}
\usepackage[table]{xcolor}
\usepackage{adjustbox}
\usepackage[colorlinks=true, linkcolor=blue, citecolor=green]{hyperref}
\usepackage[utf8]{inputenc} 
\usepackage[T1]{fontenc}    
\usepackage{hyperref}       
\usepackage{url}            
\usepackage{booktabs}       
\usepackage{amsfonts}       
\usepackage{nicefrac}       
\usepackage{microtype}      
\usepackage{xcolor}         
\usepackage{graphicx}
\usepackage{caption}
\usepackage{graphicx}
\usepackage{booktabs}
\usepackage{amssymb}   
\usepackage{amsfonts}  
\usepackage{amsmath}
\usepackage{cleveref}

\newtheorem{theorem}{Theorem}[section]

\newtheorem{lemma}[theorem]{Lemma}
\newtheorem{corollary}[theorem]{Corollary}
\usepackage{graphicx}
\usepackage{caption}
\usepackage{amsmath}
\usepackage{ragged2e}
\usepackage{booktabs}
\usepackage{multirow}          
\usepackage{adjustbox}         
\usepackage[table]{xcolor}     
\usepackage{array}             

\newtheorem{definition}[theorem]{Definition}
\newtheorem{assumption}[theorem]{Assumption}
\usepackage{algorithm}
\usepackage{algorithmic}
\usepackage{wrapfig}
\definecolor{SkyBlue}{RGB}{135,206,235}

\usepackage{amsmath, amssymb}

\title{RPP: A Certified Poisoned-Sample Detection Framework for Backdoor Attacks under Dataset Imbalance}
\iclrfinalcopy

\author{
\textbf{
Miao Lin$^{1}$, Feng Yu$^{2}$, Rui Ning$^{1}$, Lusi Li$^{1}$, Jiawei Chen$^{1}$, Qian Lou$^{3}$}\\
\textbf{Mengxin Zheng$^{3}$, Chunsheng Xin$^{4}$, Hongyi Wu$^{5}$
}\\
$^{1}$Old Dominion University \;
$^{2}$University of Texas at El Paso \\
$^{3}$University of Central Florida \;
$^{4}$Iowa State University \;
$^{5}$University of Arizona
}

\begin{document}

\maketitle

\begin{abstract}
  Deep neural networks are highly susceptible to backdoor attacks, yet most defense methods to date rely on balanced data, overlooking the pervasive class imbalance in real-world scenarios that can amplify backdoor threats. This paper presents the first in-depth investigation of how the \textit{ dataset imbalance} amplifies backdoor vulnerability, showing that (i) the imbalance induces a majority-class bias that increases susceptibility and (ii) conventional defenses degrade significantly as the imbalance grows. To address this, we propose \textbf{Randomized Probability Perturbation (RPP)}, a \emph{certified poisoned-sample detection framework} that operates in a black-box setting using only model output probabilities. For any inspected sample, RPP determines whether the input has been backdoor-manipulated, while offering provable within-domain detectability guarantees and a probabilistic upper bound on the false positive rate. Extensive experiments on five benchmarks (MNIST, SVHN, CIFAR-10, TinyImageNet and ImageNet10) covering 10 backdoor attacks and 12 baseline defenses show that RPP achieves significantly higher detection accuracy than state-of-the-art defenses, particularly under dataset imbalance. RPP establishes a theoretical and practical foundation for defending against backdoor attacks in real-world environments with imbalanced data.
\end{abstract}

\section{Introduction}\label{introduction}
Deep neural networks (DNNs) are increasingly embedded in safety-critical applications ranging from autonomous driving \citep{kong2020physgan, wen2022deep} to facial recognition systems \citep{yang2021larnet, anusudha2024real}. However, they are known to be vulnerable to backdoor attacks \citep{gu2019Badnets, goldblum2022dataset, Zeng2023Narcissus, le2024comprehensive}. The core concept of a neural backdoor involves embedding a unique pattern, known as a trigger, into the training data. The resulting model behaves normally on clean inputs but, when the trigger appears, reliably redirects predictions to an attacker-chosen target class. For example, a small sticker on a stop sign can induce a traffic system to read “go” \citep{liu2018trojaning}.

One major way of defending against backdoor attack, called prior-training defense, is to detect and remove poisoned samples from the dataset before the training process, evidenced by a number of defenses \citep{tran2018spectral, chen2021pois, qi2023towards, pan2023asset, guo2023scale, pal2024backdoor}. However, existing methods typically assume an idealized scenario where defenses operate on balanced training sets (or their derived models)—a condition rarely found in real-world domains. In practice, data are often highly imbalanced, with some classes far more prevalent than others \citep{he2009learning, fernandez2018learning, johnson2019survey, aguiar2024survey}. For example, in medical diagnosis datasets, healthy samples might outnumber cancer cases by more than 1000:1. Likewise, in autonomous driving, minority classes (e.g., rare traffic signs or road anomalies) are inherently underrepresented. 

In Sec.~\ref{sec:invest}, we conducted a comprehensive investigation that reveals two key phenomena in this imbalance context: (1) We identify a significant correlation between the efficacy of backdoor attacks and the degree of dataset imbalance, demonstrating that imbalanced datasets are inherently more susceptible to backdoor attacks. (2) We further observe that the performance of existing backdoor defense mechanisms degrades significantly as data imbalance intensifies. On the other hand, existing approaches lack formal theoretical guarantees and remain susceptible to adaptive adversaries~\citep{Athalye2018Obfuscated,Liu2022AutoAttack,Zhong2025SacFL}, thereby perpetuating a cat-and-mouse dynamic between attackers and defenders, which is not affordable for safety-critical applications. \textit{Therefore, there is an urgent need for a certified prior-training defense that is resilient to imbalanced data.}

To this end, we propose \textit{Randomized Probability Perturbation (RPP)}, \textbf{a certified prior-training poisoned sample detection that maintains its robustness across both balanced and imbalanced data.} Unlike existing defenses that rely on distribution-level signals (such as clustering or class-level metrics to separate benign and poisoned samples) which often fail with imbalanced data, RPP examines individual sample behaviors under noise, making it resilient to distribution shifts. Specifically, we observe that poisoned inputs exhibit distinctive, consistent patterns in their prediction probabilities as random perturbations are applied—patterns that are noticeably different from those of benign samples. By quantifying the per-input stability of the predictive probability vector and calibrating detection thresholds via a conformal prediction framework, RPP identifies suspicious samples even when they are rare in imbalanced datasets. This \textit{sample-level} perspective thus bypasses the heavy reliance on uniform or abundant class representations, making RPP naturally suited to both balanced and imbalanced scenarios. Moreover, Sec.~\ref{theory} derives a certified condition under which backdoored samples are guaranteed to be detectable, yielding provable performance guarantees and enabling practical deployment, especially under severe class imbalance.

Our \textbf{contributions} are summarized as follows: \textbf{(1)} We investigate and report a strong correlation between the degree of data imbalance and the success rate of backdoor attacks, demonstrating that highly imbalanced datasets are intrinsically more vulnerable to such threats. \textbf{(2)} We show that conventional backdoor defense mechanisms degrade significantly in performance as data imbalance intensifies, highlighting the need for new defensive approaches tailored to skewed class distributions. \textbf{(3)} Guided by these observations, we introduce RPP, a prior-training poisoned sample detection technique that adapts effectively to severe class imbalance. We further provide certified detection guarantees and a rigorous theoretical foundation, offering a robust framework for future backdoor defense strategies.

\section{Related Work}
Backdoor attacks can be categorized into two primary types based on their threat models.

\noindent\textbf{(1) Model-Manipulation Backdoor.} 
The first type, model manipulation attacks, operates under a strong but less practical assumption where attackers can control the training process, such as Input-aware \cite{nguyen2020input}, and LIRA \cite{doan2021lira}. A range of \textit{empirical} works have been proposed to detect or repair trained models to mitigate backdoor threats. Currently, common approaches include Neural Cleanse~\citep{wang2019neural} and Artificial Brain Stimulation~\citep{liu2019abs}, which recover potential triggers to erase the backdoor, as well as fine-tuning on clean auxiliary data~\citep{liu2017neural} or directly pruning backdoor-related neurons~\citep{liu2018fine}. Some certified defense methods also be proposed \citep{wang2020certifying, xie2021crfl, xiang2023cbd}. \textit{In this work, we do not consider such model manipulation backdoor.}

\noindent\textbf{(2) Data-Poisoning Backdoor.} 
In our work, we focus on the second type with a more practical assumption, data poisoning attacks, which presents a reasonable scenario where attackers have no access to the training process, but can only poison a small portion of the training data by acting as malicious data providers. This category ranges from simple pixel or blend triggers (e.g., BadNets~\citep{gu2019Badnets}, Blend~\citep{chen2017targeted}) to more stealthy or geometric triggers (e.g., ISSBA~\citep{li2021invisible}, WaNet~\citep{nguyen2021wanet}) and clean-label strategies (e.g., CL~\citep{turner2019label}, Sleeper Agent~\citep{souri2022sleeper}, SIG~\citep{barni2019new}, Narcissus~\citep{Zeng2023Narcissus}). 

\underline{\textit{Empirical Defenses.}} The research community has reported a number of empirical defenses aimed at detecting and removing poisoned samples during the training process. These defenses span feature-outlier pruning (SS~\citep{tran2018spectral}), reverse-engineering optimization for training-set cleansing(RE~\citep{xiang2021reverse}), maximum-margin activation clipping (MMDF \citep{wang2024improved}), unlearning to break trigger–label ties (ABL~\citep{li2021anti}), simulator-discrepancy filtering with a clean proxy (De-Pois~\citep{chen2021pois}), correlation decoupling via random labels (CT~\citep{qi2023towards}), test-time entropy/perturbation screening (STRIP, BBCaL~\citep{gao2019strip, hu2024bbcal}), and topology/margin heuristics (MM-BD, TED, IBD-PSC~\citep{{Wang2024MMBD, mo2024robust, hou2024ibd}}).

\underline{\textit{Certified Defenses.}} However, these empirical approaches lack formal guarantees and are vulnerable to adaptive attacks \citep{Athalye2018Obfuscated, Liu2022AutoAttack, Zhong2025SacFL}, fueling a cat-and-mouse dynamic that is unacceptable in safety-critical settings. Inspired by certified defenses for adversarial examples \citep{rosenfeld2020certified, levine2020deep, jia2022certified, weber2023rab} that provide guarantee robustness to small input perturbations rather than backdoors, CBD \citep{xiang2023cbd} is the first \textit{certified} detector of backdoored models. In specific, CBD targets \textit{post-training} model inspection to determine whether a trained model is compromised, rather than detecting and removing poisoned training samples \emph{before} training. A detailed comparison between CBD and our method, RPP, is provided in App.~\ref{differencewithcbd}. \textit{To the best of our knowledge, no prior work provides certified identification and removing of mailicous training samples in data-poisoning backdoor attacks.}

Moreover, existing countermeasures implicitly assume an idealized scenario where defenses operate on balanced training sets—an assumption rarely met in real-world domains. In practice, datasets are often highly imbalanced, with some classes far more than others \citep{he2009learning, fernandez2018learning, johnson2019survey, aguiar2024survey}. Our threat investigation (Sec.~\ref{sec:invest}) demonstrates that imbalanced datasets not only significantly increase attack risks (i.e., high attack success rates with low attack budgets) but also drastically reduce the defense performance of existing countermeasures (i.e., low detection accuracy and high false positive rates). In sharp contrast, our proposed method, RPP, is a certified poisoned sample detector that maintains its performance across both balanced and imbalanced data (Sec.~\ref{sec:data_poisoning}). 

\section{Threat Investigation}\label{sec:invest}
\noindent\textbf{Threat Model.} 
We focus on backdoor attacks in classification settings under a representative \emph{data-poisoning} threat model, where the adversary can inject poisoned training samples but cannot manipulate the training procedure or final model~\citep{oprea2022poisoning, cina2024machine}.  We further consider a practical yet challenging aspect of real-world settings: dataset imbalance. We assume that the attacker is aware of the general distribution of the victim's dataset and how the majority bias in imbalanced datasets affects the success rate of backdoor attacks. As such, the attacker strategically poisons data in minority classes by altering their labels to those of majority classes, which has proven to enhance the attack effectiveness and reduce the attack budget (Sec.~\ref{investigation}). \textit{In this paper, we do not consider multi-trigger or multi-target backdoors~\citep{xue2020one}, as even empirical detection remains challenging~\citep{xiang2023cbd}.}

\noindent\textbf{Formal Backdoor Attack Setting.}
We consider $K$-class classification with inputs $\Xcal\subset\Rbb^d$ and labels $\Ycal=\{1,\ldots,K\}$. A classifier $f(\cdot\mid w)$ maps $x\in\Xcal$ to a predictive distribution $\pbm(x\mid w)\in[0,1]^K$. The $y$-th entry, $p_y(x\mid w)$, is defined as $p_y(x\mid w)=\Pbb(f(x\mid w)=y), \forall y\in\Ycal$.

Let $\Dcal=\{(x_i,y_i)\}_{i=1}^N$ be the collected dataset. In a backdoor attack, an adversary chooses a trigger $\delta$ and target class $y_t$, producing a poisoned set $\Dcal(x+\delta)$ by stamping $\delta$ and assigning label $y_t$, so that $f(\cdot\mid w)=y$ for benign inputs while $f(x+\delta\mid w)=y_t$ for triggered ones. We also assume a clean calibration set $\Vcal=\{(x_i,y_i)\}_{i=1}^n$ drawn i.i.d.\ from the same distribution as $\Dcal$.

\noindent\textbf{Investigation.}
\label{investigation} Previous studies \citep{wang2020attack, pang2024long} have shown that minority classes in long-tail distribution are particularly susceptible to data poisoning attacks. Building on these insights, we investigate whether the inherent classification bias toward majority classes further amplifies this vulnerability in different distributions: specifically, if an attacker deliberately targets underrepresented classes to be misclassified as a majority class, does the existing bias increase the likelihood of the attack’s success?

To simulate imbalanced scenarios, we modify the training set in two ways: \textit{long-tailed} \citep{cui2019class} and \textit{step imbalance} \citep{buda2018systematic}. The imbalance ratio $\rho$ quantifies the degree of imbalance, defined as the ratio between the largest and smallest class sizes:$ \rho = \frac{\max_i \{n_i\}}{\min_i \{n_i\}} $. Unless otherwise specified, we consider $\rho = 2, 10, 100, 200 $ (see Fig.~\ref{fig11}). In the long-tailed setting, the sizes of the classes decay exponentially from head to tail, leading to a progressive decline across different classes. In the step imbalance setting, we assign a uniform, reduced sample size to all minority classes. We define $\mu$ as the proportion of minority classes, typically set at $0.9$ (i.e., $9$ out of $10$ classes are designated as minority) across all experiments to balance the distribution and guarantee a thorough evaluation. More details are provided in App.~\ref{imbalanceddatasets}. In both cases, we apply Badnets~\citep{gu2019Badnets} attacks by randomly selecting an equal number of samples from the minority classes 
(classes 1--9) in each imbalanced training set (or preserving the same poisoning ratio), then relabel them as the majority class (class 0).

\textbf{Observation 1: Imbalanced datasets are more susceptible to backdoor attacks.}
Fig.~\ref{fig:imbalance_auc_asr} shows that increasing the imbalance ratio $\rho$ consistently raises ASR. Even with a fixed poisoning rate $p$, imbalanced datasets are more vulnerable than balanced ones (App.~\ref{asr with same p}), corroborating the hypothesis that majority-class bias amplifies backdoor susceptibility.

\textbf{Observation 2: Existing defenses deteriorate under imbalance due to distribution-level signals.}\label{sec: ob2}
Both empirical and certified defenses, e.g., AC~\citep{chen2018detecting}, ASSET~\citep{pan2023asset}, and SCALE-UP~\citep{guo2023scale}, exhibit declining performance as $\rho$ grows (Tab.~\ref{table:defense}; App.~\ref{Backdoor Sample Detection}; App.~\ref{Certified Defense}). A key reason is that many defenses rely on \underline{\emph{distribution-level}} signals, such as classwise statistics, clustering structure, or global thresholds. They become biased and unstable under class imbalance as majority classes dominate these estimates while minority classes yield noisy, underpowered signals, leading to missed triggers. Overrepresentation of majority classes thus steers models and defenses toward majority patterns, under-detecting subtle triggers in minority classes (Tab.~\ref{table:baselinedefense}). Recent work~\citep{wang2025maximum} adapts a backdoor defense to address post-training bias under data imbalance; its finding on imbalance-induced margin distortion supports our observation of degraded defense reliability.

In practical, datasets often exhibit significant imbalance \citep{fernandez2018learning, johnson2019survey, aguiar2024survey}. 
Hence, there is a pressing need for backdoor defenses that remain robust to such imbalance. 
Our proposed RPP method addresses this challenge by providing a certified poisoned samples detection using sample-level signal and calibrating detection thresholds via a conformal prediction framework, whose performance holds in both balanced and imbalanced settings.

\captionsetup[figure]{aboveskip=0pt}
\captionsetup[table]{aboveskip=6pt}

\begin{figure*}[t]
\centering

\begin{minipage}[t]{0.52\textwidth}
\vspace*{0pt} 
\centering
\includegraphics[width=\linewidth]{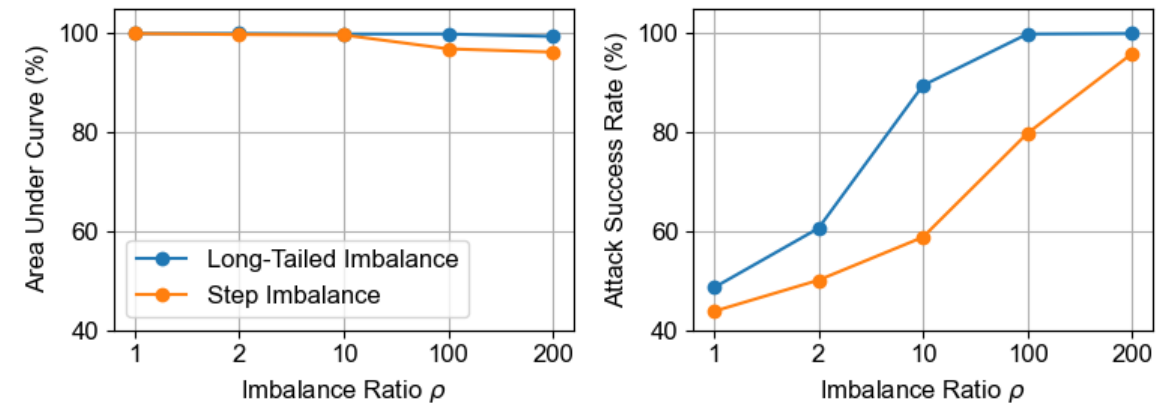}
\captionof{figure}{AUC/ASR of Badnets on MNIST with balanced ($\rho=1$) and imbalanced ($\rho=2,10,100,200$) training sets; long-tailed ($r=45$, $p=0.4\%$) and step-imbalanced ($r=18$, $p=0.3\%$).}
\label{fig:imbalance_auc_asr}
\end{minipage}
\hfill
\begin{minipage}[t]{0.46\textwidth}
\vspace*{0pt} 
\centering
\scriptsize
\begin{tabular}{lcccccc}
\toprule
& \multicolumn{2}{c}{AC} & \multicolumn{2}{c}{ASSET} & \multicolumn{2}{c}{SCALE-UP} \\
\cmidrule(r){2-3}\cmidrule(r){4-5}\cmidrule(r){6-7}
& TPR & FPR & TPR & FPR & TPR & FPR \\
\midrule
$\rho{=}2$   & 61.8 & 10.4 & 88.1 & 38.0 & 98.9 & 17.5 \\
$\rho{=}10$  & 57.2 &  8.9 & 97.5 & 53.1 & 100.0 & 56.8 \\
$\rho{=}100$ & 31.5 &  0.8 & 33.3 & 52.7 & 100.0 & 84.3 \\
$\rho{=}200$ &  0.1 &  0.1 &  0.0 &  0.0 &  0.1  &  0.2 \\
\bottomrule
\end{tabular}
\captionof{table}{Comparison with AC, ASSET, and SCALE-UP on imbalanced MNIST ($\mu=0.9$, $\rho=2,10,100,200$) against BadNets.}
\label{table:defense}
\end{minipage}

\end{figure*}

\section{RPP Detection Using Sample-Level Signal}
\label{sec:data_poisoning}

\subsection{Overview of RPP Detection}\label{RPPoverview}
\noindent \textbf{Key Intuition of Sample-Level Detection under Imbalance.}\label{motivation} 
Backdoor triggers are engineered to be \emph{robust}: once present, they reliably drive a poisoned image to the target class and are largely insensitive to simple input transformations (e.g., random noise or blurring)~\citep{chen2017targeted, liu2018trojaning}, keeping its predicted probability vector relatively stable. In contrast, clean images rely on naturally occurring correlations without artificially reinforced features; mild perturbations therefore induce noticeably larger changes in their probability vectors. 

Motivated by the stability gap above and the limitations under imbalance, we introduce RPP—a \emph{sample-level} robustness metric that measures the expected change in an image’s probability vector under injected random noise (formalized in Sec.~\ref{sec:def}). Because RPP evaluates robustness \emph{per instance} rather than from global distributional cues, it remains effective even with severe class imbalance. We further integrate RPP with a conformal prediction framework to calibrate detection thresholds in a distribution-adaptive manner, ensuring principled performance even under severe class imbalance. Empirically, poisoned samples yield consistently lower RPP than clean ones across diverse imbalance ratios; see Fig.~\ref{fig3} and additional results in App.~\ref{relativefrequencydistribution}.

\begin{wrapfigure}[11]{r}{0.48\textwidth}\vspace{-3mm}
\centering
\includegraphics[width=\linewidth]{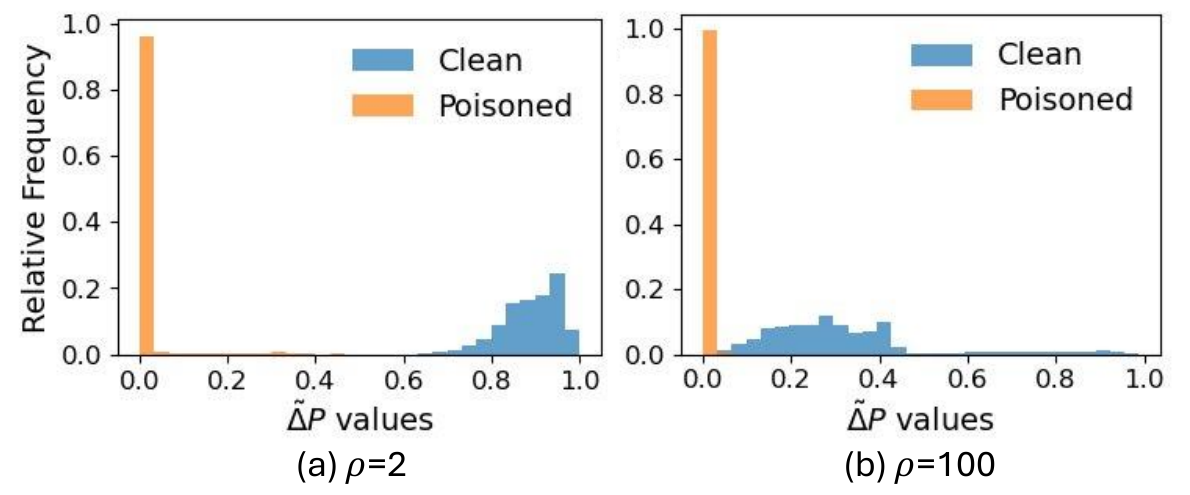}
\caption{Relative frequency distribution of $\tilde{\Delta}P$ on the SVHN dataset with $\mu = 0.9$ under imbalance ratios: (a)~$\rho = 2$ and (b)~$\rho = 100$.}
\label{fig3}
\end{wrapfigure}

\noindent\textbf{Goal of RPP Detection.}\label{goal}
Our objective is to determine whether a sample is backdoor-poisoned using RPP, which quantifies the expected perturbation of its predicted probability vector under random noise. Empirically (Fig.~\ref{fig3}; App.~\ref{relativefrequencydistribution}), RPP exhibits a clear separation between clean and poisoned samples across a range of imbalance ratios, enabling effective identification. Beyond empirical detection, we seek a \emph{certified} condition guaranteeing correct identification of poisoned samples. Crucially, certification must be coupled with control of the false positive rate (FPR); thus, we explicitly target a balance between detection power and a pre-specified FPR for practical deployment.

\begin{wrapfigure}[18]{r}{0.65\textwidth}
\centering
\includegraphics[width=\linewidth]{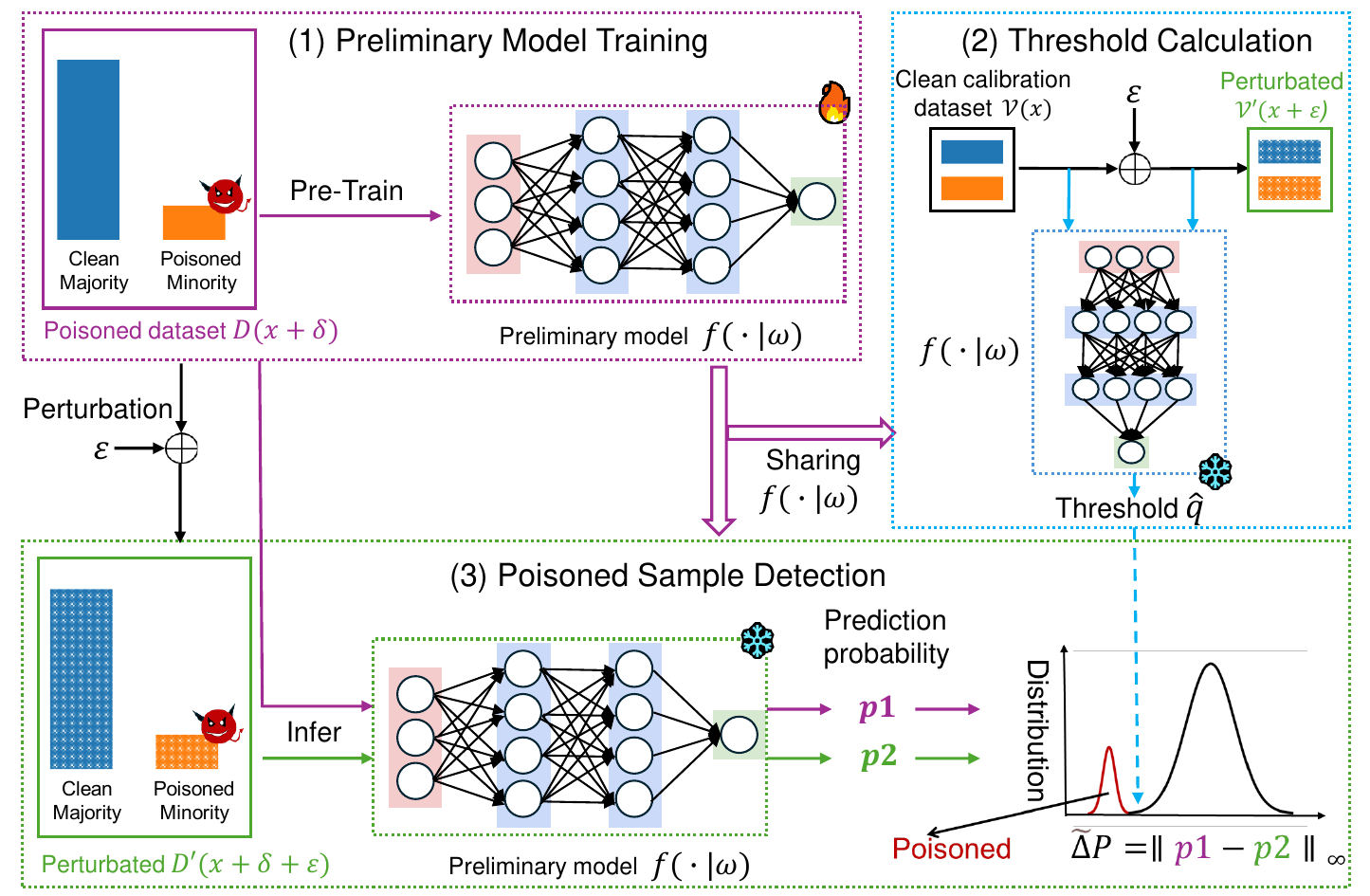}
\caption{Overview of RPP method. }
\label{fig4}
\end{wrapfigure}
\noindent \textbf{Outline of RPP Detection.}\label{outline} 
Suppose $\mathcal{D}(x + \delta)$ denotes the potentially poisoned dataset to be inspected, and $\mathcal{V}$ represents a clean calibration set.
The detection pipeline is shown in Fig.~\ref{fig4}. \textbf{\textit{(1) Preliminary Model Training.}} First, we perform a brief training phase on the dataset to be inspected to obtain a preliminary model with acceptable performance. If the dataset contains poisoned samples, we assume that the model will have already learned the backdoor trigger (indicated by a relatively high attack success rate).
\textbf{\textit{(2) Threshold Calculation.}} For each $x$ in the clean calibration set $\mathcal{V}$, we inject noise $J$ times, feed the perturbed inputs to the preliminary model $f(\cdot\mid w)$, and compute the empirical $\tilde{\Delta}P$ (Sec.~\ref{sec:def}) from the resulting probability vectors. The threshold is specified by the quantile linked to a significance level $\alpha$ over the calibration set of size $n$. \textbf{\textit{(3) Poisoned Sample Detection.}} Next, we compare the empirical $\tilde{\Delta}P$ value computed for each sample in $\Dcal(x+\delta)$ to the threshold. If $\tilde{\Delta}P$ is at or below this threshold, $x$ is labeled as poisoned; otherwise, it is classified as clean. The detection threshold is obtained via split conformal prediction \citep{aljanabi2023data}, providing distribution-free calibration that remains valid under class imbalance (Sec.~\ref{sec:imbalanced}). The step-by-step procedure is summarized as pseudocode in App. \ref{algorithm}. 

\subsection{Definition of RPP}\label{sec:def}
Inspired by~\citep{xiang2023cbd}, we first define the Samplewise Probability Vector (SPV), representing the prediction probability vector for each sample. Building on this, we define RPP and its empirical version based on the probability vector after noise injection.
\begin{definition}[Samplewise Probability Vector (SPV)]\label{def:spv}
Let $f(\cdot|w): \mathcal{X} \rightarrow \mathcal{Y}$ be a pre-trained $K$-class model parameterized by $w$,  
where $\mathcal{Y} = \{1, \ldots, K\}$.  
For any input $x \in \mathcal{X}$, the \emph{SPV} is the predictive probability vector  
$p(x | w) \in [0,1]^K$, whose $y$-th component
$p_y(x | w) = \mathbb{P}(f(x| w) = y)$  
denotes the probability that $f(x| w)$ assigns $x$ to class $y$.  
When isotropic Gaussian noise $\varepsilon \sim \mathcal{N}(0, \sigma^2 \mathbf{I})$ is added,  
the corresponding noisy prediction is  
$p_y(x+\varepsilon | w) = \mathbb{P}(f(x+\varepsilon| w) = y)$. If a trigger $\delta$ is embedded into the input sample $x$, the prediction probability for the target class $y_t$ is given by $p_{y_t}(x+\delta | w) = \mathbb{P}\big(f(x+\delta| w) = y_t\big)$, where $y_t$ denotes the attacker-specified target class.  
\end{definition}

\begin{definition}[Randomized Probability Perturbation (RPP) and empirical RPP]\label{def:rpp}

Based on the SPV defined in Def.~\ref{def:spv}, let $\varepsilon \sim \mathcal{N}(0, \sigma^2 \mathbf{I})$ denote isotropic Gaussian noise with $\sigma > 0$.  
For a pre-trained model $f(\cdot| w)$ and input $x$, the RPP is defined as
\begin{align}
        \Delta P(x|w, \sigma) = \mathbb{E}_{\varepsilon \sim \mathcal{N}(0, \sigma^2 \mathbf{I})} \|\pbm(x | w) - \pbm(x + \varepsilon | w)\|_\infty
        \label{eq:RPP}
\end{align}
    Moreover, if $\varepsilon_j\in\Rbb^{d}, j \in [J]$ are $J$ samples independently drawn from the Gaussian distribution $\Ncal(0,\sigma^2\Ibf)$, the empirical RPP (eRPP) is defined as
\begin{align}
    \tilde{\Delta} P(x|w,\sigma) = \frac{1}{J} \sum_{j=1}^J \left\| \pbm(x|w) - \pbm(x + \varepsilon_j | w) \right\|_\infty \label{eq:eRPP}
\end{align}

If the input sample $x$ is embedded with a trigger $\delta$, the corresponding eRPP is given by $\tilde{\Delta} P(x+\delta|w,\sigma)
    = \frac{1}{J} \sum_{j=1}^J
    \left\| \pbm(x+\delta|w) - \pbm(x + \delta + \varepsilon_j | w) \right\|_\infty$.
\end{definition}

\textbf{Remarks:} 
    The RPP quantifies the bias of the probability vector by adding the randomization on the input. It is trivial that $\Delta P(x | w,\sigma)\rightarrow0$ as $\sigma\rightarrow0$ for any $x$ and $w$. In practice, we estimate the RPP by it empirical version.

\subsection{Threshold Determination of RPP}\label{sec:procedure}

Although the $\tilde{\Delta} P$ distributions for benign and poisoned samples are distinct (Fig. \ref{fig3} and App. \ref{relativefrequencydistribution}) , determining an appropriate threshold under data imbalance setting for identifying a malicious example remains challenging. To address this challenge, we adapt conformal prediction techniques~\citep{vovk2005algorithmic, angelopoulos2023conformal}, 
which are particularly well-suited for backdoor detection in imbalanced settings because they require only a single model fit and a small calibration dataset that shares a similar distribution with the test points—a requirement that aligns well with our threat model settings. The specific implementation details of our approach are described below.

Conformal prediction is traditionally used to construct prediction sets for any model~\citep{angelopoulos2023conformal}.
Given an i.i.d.\ calibration set $\{(x_i, y_i)\}_{i=1}^n$, it computes a score function $s(x_i, y_i)$ for each point, 
then selects a threshold $\hat{q}$ as the $\frac{\lceil (n+1)(1-\alpha)\rceil}{n}$ quantile of these scores.
For a new test example $(x_{\text{test}}, y_{\text{test}})$, the corresponding prediction set 
$\mathcal{T}(x_{\text{test}})$ satisfies $\Pbb(y_{\text{test}} \in \mathcal{T}(x_{\text{test}})) \geq 1-\alpha$, 
where $\alpha\in [0,1]$ is the user-specified error rate.
In our approach, instead of producing prediction sets, we focus on the distribution of $\Delta P(x\mid w,\sigma)$, 
enabling us to directly apply conformal methodology without building class-conditioned sets.
We further preserve the i.i.d.\ assumption by storing a small, balanced, and clean calibration dataset offline, which can be readily satisfied during the initial model deployment phase, where the calibration data is pre-loaded alongside the model; we then re-sample from this dataset---adjusting to the local data distribution---to form a calibration subset that remains contamination-free. 


Under this framework, if $x_{\text{test}}$ is benign, the usual conformal properties hold; 
we prove in Sec. \ref{sec:guarantee} that $\Pbb(\Delta P(x_{\text{test}} \mid w,\sigma) \geq \hat{q}) \geq \min\left(1,\ 1 + \frac{1}{n+1} - \alpha\right)$, 
where $\hat{q}$ is the $\frac{\lceil \alpha (n+1)\rceil}{n}$ quantile of the calibration scores. 
Equivalently, if $\Delta P(x_{\text{test}} \mid w,\sigma)\leq \hat{q}$, then with high probability $x_{\text{test}}$ has been poisoned. \textit{This makes conformal prediction well-suited for threshold selection under data imbalance, offering a principled basis for backdoor trigger detection in skewed datasets.}

\noindent \textbf{RPP for Imbalanced Dataset.}\label{sec:imbalanced}
Unlike traditional methods reliant on distribution-level characteristics, RPP evaluates each sample individually based on its stability under noise perturbations. This sample-centric evaluation inherently mitigates the adverse effects of dataset imbalance. The conformal prediction framework further enhances this approach in imbalanced settings by calibrating the detection threshold using empirical RPP distributions from clean validation data. When model bias arises due to imbalance, both RPP values and the corresponding threshold adjust accordingly, preserving detection validity. This \textbf{\textit{self-normalizing property}} enables RPP to adapt automatically to varying class distributions without requiring explicit adjustment. As shown in Fig. \ref{fig3} and App. \ref{relativefrequencydistribution}, the threshold $\hat{q}$ of RPP distributions of clean and poisoned samples are clearly distinguishable with approximately 0.001 for imbalanced datasets with $\rho = 200$, while around 0.5 for balanced datasets.

\section{Theoretical Discussion}\label{theory}

\subsection{RPP Certification}
Beyond detection, we provide a certification that guarantees detectability of poisoned samples under specific conditions, with detailed proofs in App.~\ref{theoryproof}.

\begin{theorem}\label{them:main_1}
    Let $f(\cdot\mid w):\Xcal\rightarrow\Ycal$ be the classifier with the parameter $w$. Let $y_t$ be the target class of the attacker. Define 
        \begin{align}
            p(x) = p_{y_t}(x+\delta \mid w) 
        \end{align}
    Suppose that for any specific $x\in\Xcal$ and classes $y_t \in\Ycal$, there exist $\overline{p_t} \in (0, 1)$  such that 
    \begin{align}\label{eq:p_t_bar}
        \Pbb(f(x+\varepsilon\mid w)=y_t) \leq \overline{p_t},
    \end{align}
    where $\varepsilon\sim\Ncal(0,\sigma^2\Ibf)$ is an injected Gaussian noise.
    
    Let $\zeta(x,\delta)$ represent the upper bound of $\Delta P(x+\delta\mid w,\sigma)$ and suppose Assumption (A.1) is satisfied. 

    A sample attacked by a backdoor with trigger $\delta$ and the target class $y_t$ is guaranteed to be detected if 
    \begin{align}
    \|\delta\|_2 \geq \sigma \left(\Phi^{-1}(p(x) - \zeta(x, \delta)) - \Phi^{-1}(\overline{p_t})\right) \label{eq:lowerbound}
    \end{align}
    where $\Phi^{-1}$ is the inverse of the standard normal cumulative distribution function.
\end{theorem}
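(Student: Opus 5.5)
The plan is to read ``guaranteed to be detected'' as the statement that the RPP of the triggered input falls at or below the conformal threshold $\hat q$ of Sec.~\ref{sec:procedure}, that is, $\Delta P(x+\delta\mid w,\sigma)\le \hat q$. The proof will use three ingredients. The first is a Gaussian-smoothing Lipschitz fact for the noisy target-class probability $g(z):=\Pbb(f(z+\varepsilon\mid w)=y_t)$. The second is a lower bound on $\Delta P$ in terms of $g$. The third is Assumption (A.1), which I take to link the RPP upper bound $\zeta(x,\delta)$ to the calibrated threshold. Concretely, I expect (A.1) to say that the trigger is learned, so that $\Delta P(x+\delta\mid w,\sigma)\le\zeta(x,\delta)$ is attained, and that $\zeta(x,\delta)\le\hat q$.

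\textbf{Step 1 (shift bound).} I will use the standard randomized-smoothing fact (Cohen et al.; Salman et al.) that $z\mapsto\sigma\,\Phi^{-1}(g(z))$ is $1$-Lipschitz in $\|\cdot\|_2$. It can be proved via the Neyman--Pearson lemma for the two Gaussians $\Ncal(x,\sigma^2\Ibf)$ and $\Ncal(x+\delta,\sigma^2\Ibf)$. Combining it with \eqref{eq:p_t_bar} gives
\begin{align*}
g(x+\delta)\le \Phi\!\left(\Phi^{-1}(\overline{p_t})+\|\delta\|_2/\sigma\right).
\end{align*}

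\textbf{Step 2 (RPP versus target probability).} Since the $\ell_\infty$ norm dominates the $y_t$-coordinate, Jensen's inequality gives
\begin{align*}
\Delta P(x+\delta\mid w,\sigma)\ge \bigl|p(x)-g(x+\delta)\bigr|\ge p(x)-g(x+\delta).
\end{align*}
Hence any admissible RPP value $\le\zeta(x,\delta)$ forces $g(x+\delta)\ge p(x)-\zeta(x,\delta)$.

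\textbf{Step 3 (sufficiency).} Step 2 gives a lower bound on $g(x+\delta)$ and Step 1 an upper bound. Together they define the feasible interval
\begin{align*}
\bigl[p(x)-\zeta(x,\delta),\ \Phi\bigl(\Phi^{-1}(\overline{p_t})+\|\delta\|_2/\sigma\bigr)\bigr].
\end{align*}
Applying the monotone map $\Phi^{-1}$, this interval is nonempty exactly when \eqref{eq:lowerbound} holds. So under \eqref{eq:lowerbound} the regime $\Delta P(x+\delta\mid w,\sigma)\le\zeta(x,\delta)$ is consistent with the smoothing constraint. When the inequality fails, the two constraints contradict each other, so the RPP cannot be as small as $\zeta$. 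Invoking (A.1) in the first case then yields
\begin{align*}
\Delta P(x+\delta\mid w,\sigma)\le\zeta(x,\delta)\le\hat q,
\end{align*}
so the conformal rule of Sec.~\ref{sec:procedure} flags the sample.

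\textbf{Main obstacle.} The hard part is the logical direction in Step 3. Steps 1--2 by themselves only show that small RPP \emph{implies} the norm inequality, which is a necessary condition. Turning this into a sufficient condition rests entirely on (A.1) supplying the realized bound $\Delta P\le\zeta\le\hat q$ on the region where it is feasible. I would first try to state (A.1) in exactly that form. If (A.1) instead only bounds $\zeta$ relative to $\hat q$, the sufficiency cannot come from Steps 1--2 alone. I would then add an explicit hypothesis that the learned backdoor attains the smallest RPP compatible with the smoothing bound, that is, $g(x+\delta)$ as large as Step 1 permits. That attainment would give $\Delta P(x+\delta\mid w,\sigma)\le\zeta(x,\delta)$ whenever \eqref{eq:lowerbound} holds.
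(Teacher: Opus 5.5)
Your Steps 1--2 are, in substance, the paper's entire proof. The paper obtains Step 1 from part (2) of the Neyman--Pearson lemma of Cohen et al. (their Lemma 4), applied to the half-space $A=\{z:\delta^\top(z-x)\ge\sigma\|\delta\|_2\Phi^{-1}(1-\overline{p_t})\}$ with $\Pbb(x+\varepsilon\in A)=\overline{p_t}$. This gives $\Pbb(f(x+\delta+\varepsilon\mid w)=y_t)\le\Phi(\Phi^{-1}(\overline{p_t})+\|\delta\|_2/\sigma)$, the same fact as your Lipschitz formulation. The paper then lower-bounds the $\ell_\infty$ norm by the $y_t$ coordinate, uses (A.1) only to remove the absolute value, chains with $\Delta P(x+\delta\mid w,\sigma)\le\zeta(x,\delta)$, and inverts $\Phi$. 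Your Jensen step with $|a|\ge a$ is slightly cleaner: it needs no (A.1), and it avoids the paper's conflation of the pointwise $p_{y_t}(x+\delta+\varepsilon\mid w)$ with the smoothed probability. So what the paper actually proves is the implication you call the necessary direction, $\Delta P(x+\delta\mid w,\sigma)\le\zeta(x,\delta)\Rightarrow\|\delta\|_2\ge R$ with $R=\sigma(\Phi^{-1}(p(x)-\zeta(x,\delta))-\Phi^{-1}(\overline{p_t}))$. It never involves $\hat q$, and its own remark after the theorem describes $R$ as a bound ``necessary to detect''.

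The genuine gap is your Step 3. In the paper, (A.1) is only $p_{y_t}(x+\delta+\varepsilon\mid w)<p_{y_t}(x+\delta\mid w)$. It says nothing about $\zeta$, $\hat q$, or attainment, so ``invoking (A.1)'' supplies nothing.

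Nor can the feasibility argument be repaired within the stated hypotheses. Nonemptiness of your interval shows only that the small-RPP regime is not excluded, not that it occurs. More decisively, $\zeta(x,\delta)$ is by hypothesis an upper bound on $\Delta P(x+\delta\mid w,\sigma)$, so Steps 1--2 show that $\|\delta\|_2\ge R$ holds automatically. Your case split is therefore empty, and the norm condition carries no information about detection. Whether the sample is flagged hinges solely on $\zeta(x,\delta)\le\hat q$, plus passing from $\Delta P$ to the empirical $\tilde{\Delta}P$ computed with $J$ draws, and no hypothesis provides this. The paper gets it only operationally, by estimating $\zeta$ with the calibration quantile; at that point detection is definitional and the trigger norm plays no role.

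Your fallback attainment hypothesis does not close the gap either. Maximizing $g(x+\delta)$ only makes the Step 2 lower bound $p(x)-g(x+\delta)$ small, and that yields no upper bound on $\Delta P$: the $\ell_\infty$ norm may be attained at a non-target coordinate, and Jensen may be strict. In any case $\Delta P\le\zeta$ is already given; the missing ingredient is $\zeta\le\hat q$.

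What is provable here is the necessary-direction implication that you and the paper share. The ``if'' reading requires an added hypothesis such as $\zeta(x,\delta)\le\hat q$, which is not part of the theorem.
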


\textbf{Assumption (A.1)} Let $y_t$ be the target class of the attack. 
We assume the following inequality holds:
\begin{align}
    p_{y_t}(x+\delta+\varepsilon \mid w)<p_{y_t}(x+\delta \mid w)
\end{align} 
We note that Assumption (A.1) is a reasonable and common assumption that aligns with standard practices in the field, as evidenced by its adoption in numerous defense mechanisms~\citep{cohen2019certified,xiang2023cbd}. Experiments also used to support the assumption, shown in App. \ref{assumption}.

Thm.~\ref{them:main_1} establishes a lower bound on the trigger size necessary to detect the poisoned sample. Specifically, a sample attacked by a backdoor is guaranteed to be detected with trigger size $\|\delta\|_2\geq R = \sigma \left(\Phi^{-1}(p(x) - \zeta(x, \delta)) - \Phi^{-1}(\overline{p_t})\right)$. Notably, the lower bound $R$ in Thm. \ref{them:main_1} is relatively small and ensures that most poisoned samples can be detected. In practice, if the trigger size is too small ($\|\delta\|_2\leq R$), the attack success rate remains low under normal settings and attack budgets (see App. \ref{asr when l2lessthan0.8}). 

We provide an upper bound of the trigger size in Cor. \ref{cor:upper_bound}. As long as the backdoor is successfully injected, it implies that $p(x)\rightarrow 1$ and $\overline{p_t}\rightarrow 0$ (note that $\overline{p_t}$ is the upper bound of $\Pbb(f(x+\varepsilon\mid w)=y_t)$ in \eqref{eq:p_t_bar}), leading to $\left(\Phi^{-1}(p(x))-\Phi^{-1}\left(\overline{p_t}\right)\right) \rightarrow \infty$ for any given $x$ and $w$. This means that the upper bound of $ \|\delta\|_2 $ is an approximately infinite number. 

\begin{corollary}\label{cor:upper_bound}
    Suppose all conditions in Thm. \ref{them:main_1} are satisfied and Assumption (A.1) is satisfied. The trigger $\delta$ of a backdoor attack satisfies that 
    \begin{align}
        \|\delta\|_2 \leq \sigma \!\left(\!\Phi^{-1}\!(p(x)) - \Phi^{-1}\!\left(\overline{p_t}\right)\!\right)  \label{eq:upperbound}
    \end{align}
\end{corollary}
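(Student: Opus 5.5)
The plan is to obtain the corollary from the Neyman--Pearson/Gaussian-shift machinery behind Theorem~\ref{them:main_1}, together with the monotonicity of $\Phi^{-1}$ and the nonnegativity of $\zeta(x,\delta)$. \emph{I have not found a derivation that yields the stated upper bound}, so step (iii) below, which carries the whole content of the corollary, is an open gap and not a finished argument. I flag it as such rather than present a reversed bound as the statement.

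\emph{(i) Nonnegativity of $\zeta$.} By Def.~\ref{def:rpp}, $\Delta P(x+\delta\mid w,\sigma)$ is an expectation of an $\ell_\infty$ norm, hence nonnegative. Its upper bound $\zeta(x,\delta)$ is therefore also $\geq 0$. Since $\Phi^{-1}$ is increasing, this gives
\begin{align*}
\Phi^{-1}\big(p(x)-\zeta(x,\delta)\big)\leq \Phi^{-1}\big(p(x)\big).
\end{align*}
So the radius $R$ in \eqref{eq:lowerbound} never exceeds the quantity $\sigma\big(\Phi^{-1}(p(x))-\Phi^{-1}(\overline{p_t})\big)$ appearing in \eqref{eq:upperbound}. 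This step is routine.

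\emph{(ii) The Gaussian-shift comparison.} Let $g(z)=\Pbb(f(z+\varepsilon\mid w)=y_t)$. Let $A=\{u: f(u\mid w)=y_t\}$ be the region where $f$ predicts $y_t$. By \eqref{eq:p_t_bar}, $A$ has mass $g(x)\leq\overline{p_t}$ under $\Ncal(x,\sigma^2\Ibf)$. The Neyman--Pearson lemma (Cohen et al.) bounds the mass of $A$ under the shifted measure $\Ncal(x+\delta,\sigma^2\Ibf)$, with extremal set a half-space orthogonal to $\delta$:
\begin{align*}
g(x+\delta)\leq \Phi\big(\Phi^{-1}(\overline{p_t})+\|\delta\|_2/\sigma\big).
\end{align*}
Equivalently, $\Phi^{-1}(g(x+\delta))-\Phi^{-1}(\overline{p_t})\leq\|\delta\|_2/\sigma$. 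This is the inequality that drives Theorem~\ref{them:main_1}.

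\emph{(iii) The main obstacle, and why (ii) does not close it.} The inequality in (ii) is a \emph{lower} bound on $\|\delta\|_2$ in terms of $g(x+\delta)$. Applying the same comparison to the complement $A^c$ returns the identical inequality. Combining it with Assumption (A.1), which only gives $g(x+\delta)<p(x)$, also points the wrong way. Thus the Neyman--Pearson/Lipschitz route produces
\begin{align*}
\|\delta\|_2\geq\sigma\big(\Phi^{-1}(g(x+\delta))-\Phi^{-1}(\overline{p_t})\big),
\end{align*}
never \eqref{eq:upperbound}. To prove the corollary as worded, one needs an ingredient bounding $\|\delta\|_2$ from \emph{above}. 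The candidate I would try first is a structural hypothesis on the trained classifier near $x$: the decision region $A$ coincides with (or is contained in) the Neyman--Pearson extremal half-space, so that the Gaussian-shift bound is \emph{attained}, namely
\begin{align*}
g(x+\delta)=\Phi\big(\Phi^{-1}(g(x))+\|\delta\|_2/\sigma\big).
\end{align*}
Under that equality one could try to chain with $g(x)\leq\overline{p_t}$ and $g(x+\delta)\leq p(x)$ (A.1) to bound $\|\delta\|_2$. However, $g(x)\leq\overline{p_t}$ lower-bounds the increment $\Phi^{-1}(g(x+\delta))-\Phi^{-1}(g(x))$ only via $-\Phi^{-1}(g(x))\geq-\Phi^{-1}(\overline{p_t})$. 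That is the wrong direction for an upper bound on $\|\delta\|_2$. Closing \eqref{eq:upperbound} would instead need the reverse inequality $g(x)\geq\overline{p_t}$, i.e.\ equality in \eqref{eq:p_t_bar}. That is an additional tightness condition, not something supplied by Theorem~\ref{them:main_1} or Assumption (A.1). I do not currently see how to avoid it.

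\emph{(iv) What the argument actually delivers.} Without such an extra tightness or half-space hypothesis, (i)--(ii) give only the consistency statement $R\leq\sigma\big(\Phi^{-1}(p(x))-\Phi^{-1}(\overline{p_t})\big)$. That statement concerns the certification radius and says nothing about $\|\delta\|_2$. It is not the corollary. Given the discussion after Theorem~\ref{them:main_1}, I also expect the statement to be nearly vacuous: as $p(x)\to1$ and $\overline{p_t}\to0$, the right-hand side tends to infinity. The substantive work is therefore identifying and justifying the extra hypothesis in (iii), and I regard the corollary as unproved until that is done.
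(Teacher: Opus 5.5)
\paragraph{The gap in (iii) cannot be closed.} Your steps (i) and (ii) are correct. The obstacle you isolate in (iii) is genuine, and it is exactly the step the paper's own proof leaves unsupported. The paper's argument is one line. It takes the left half of \eqref{eq:bound_ineq_delta} from the proof of Theorem~\ref{them:main_1}, namely $0<p(x)-\Phi(\Phi^{-1}(\overline{p_t})+\|\delta\|_2/\sigma)$, and applies the increasing map $\Phi^{-1}$. That inequality is the corollary itself (in strict form), and nothing before it establishes it. Taking expectations in \eqref{eq:1norm_upper} yields only $p(x)-\Phi(\Phi^{-1}(\overline{p_t})+\|\delta\|_2/\sigma)\le\Delta P(x+\delta\mid w,\sigma)\le\zeta(x,\delta)$; the leading ``$0<$'' is simply written down. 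The only ingredients that could supply it are Assumption (A.1), $p_{y_t}(x+\delta+\varepsilon\mid w)<p(x)$, and the Neyman--Pearson step, $p_{y_t}(x+\delta+\varepsilon\mid w)\le\Phi(\Phi^{-1}(\overline{p_t})+\|\delta\|_2/\sigma)$. As you observed, these are two upper bounds on the same quantity, so they say nothing about how $p(x)$ and $\Phi(\cdots)$ compare. You are not missing an idea that the paper has.

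\paragraph{Counterexample and repair.} The corollary is in fact false under its stated hypotheses, so the right conclusion is a refutation rather than ``unproved''. Take $K=2$, any $\delta\neq0$, the randomized classifier with $p_{y_t}(z\mid w)=\tfrac12 e^{-\|z-x-\delta\|_2^2}$, and $\overline{p_t}=\tfrac12$.
\begin{itemize}
\item $\Pbb(f(x+\varepsilon\mid w)=y_t)<\tfrac12$, so \eqref{eq:p_t_bar} holds.
\item $p_{y_t}(x+\delta+\varepsilon\mid w)=\tfrac12 e^{-\|\varepsilon\|_2^2}<\tfrac12=p(x)$ almost surely, so (A.1) holds.
\item $\zeta(x,\delta):=\Delta P(x+\delta\mid w,\sigma)\in(0,\tfrac12)$ is admissible and gives $p(x)-\zeta(x,\delta)\in(0,1)$, so even Theorem~\ref{them:main_1} is well posed.
\end{itemize}
Yet the right-hand side of \eqref{eq:upperbound} equals $\sigma(\Phi^{-1}(\tfrac12)-\Phi^{-1}(\tfrac12))=0<\|\delta\|_2$. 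Choosing the tightest admissible $\overline{p_t}$ does not help. With $d=1$, $\sigma=1$, $\|\delta\|_2=3$, it equals $\tfrac{1}{2\sqrt3}e^{-3}\approx0.0144$, and the right-hand side is about $2.19<3$.

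The corollary therefore needs an added hypothesis. The minimal fix is to assume $\Phi(\Phi^{-1}(\overline{p_t})+\|\delta\|_2/\sigma)\le p(x)$ outright, which is precisely what the paper tacitly writes into \eqref{eq:bound_ineq_delta}. Your condition (half-space attainment plus equality in \eqref{eq:p_t_bar}) is a legitimate sufficient condition for it. Under it, $\Phi(\Phi^{-1}(\overline{p_t})+\|\delta\|_2/\sigma)=\Pbb(f(x+\delta+\varepsilon\mid w)=y_t)<p(x)$ by (A.1). Stating the corollary under either hypothesis, together with the counterexample above, would turn your write-up into a complete and correct treatment.
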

According to Thm. \ref{them:main_1} and Cor. \ref{cor:upper_bound}, we can establish the range of trigger sizes that guarantee detection. This requires estimating $p(x), \zeta(x, \delta)$, and $\overline{p_t}$. Here, $p(x)$ denotes the probability that the sample $ x + \delta $ is classified into the target class. Since the upper bound $ \zeta(x, \delta) $ is not directly accessible, we estimate it using the $ \frac{\lceil \alpha(n+1) \rceil}{n} $-quantile of the calibration set $ \Scal $. The term $ \overline{p_t} $ represents the upper bound of $ \mathbb{P}(f(x+\varepsilon_j\mid w) = y_t)$. We let $y_t$ denote the class that appears most frequently among $J$ noise-injected copies of $x$, where $\epsilon_1, \dots, \epsilon_J \sim \mathcal{N}(0, \sigma^2 I)$. Specifically, we process each $x + \epsilon_j$ through the pre-trained classifier $f(\cdot\mid w)$ and count the frequency of each class across the $J$ predictions. The class with the highest frequency is selected as $y_t$ and the count of its occurrences is denoted by $n_t$. For a given $\alpha$, we compute $\overline{p_t}$ using one-sided $(1-\alpha)$ upper confidence intervals of the binomial distribution $B(n_t,J)$, as implemented in the function \textsc{UpperConfBound}($n_t, J, 1 - \alpha$). The complete algorithm procedure is outlined in App. \ref{algorithm}.

\subsection{Theoretical Guarantees}\label{sec:guarantee}

\textbf{Conformal Calibration Guarantee:}
We formally present the calibration guarantee, ensuring that a benign example exceeds the threshold with high probability.
\begin{theorem}\label{thm:conformal_guarantee}
    Let $\mathcal{V} = \{(x_i, y_i)\}_{i=1}^{n}$ be a calibration data set that $x_i$ is benign and i.i.d. drawn from from $\Dcal$. If the test data $x_{\text{test}}$ is clean and $\hat{q}$ be the $\frac{\lceil \alpha(n+1)\rceil}{n}$ quantile as 
    \begin{align}
        \hat{q} = \inf\left\{q:\frac{|\{i:s_i\leq q\}|}{n}\geq \frac{\lceil \alpha(n+1)\rceil}{n} \right\}.
    \end{align}
    Then we have that 
    \begin{align}
        \Pbb\Big(\Delta P(x_{\text{test}}\mid w,\sigma)\geq \hat{q}\Big)\geq 1+\frac{1}{n+1}-\alpha.
    \end{align}
\end{theorem}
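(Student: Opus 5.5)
The plan is the standard split-conformal rank argument, applied to the \emph{lower} tail of the score distribution. Set $s_i=\Delta P(x_i\mid w,\sigma)$ for $(x_i,y_i)\in\Vcal$ and $s_{n+1}=\Delta P(x_{\text{test}}\mid w,\sigma)$. The model $f(\cdot\mid w)$ and $\sigma$ are fixed independently of $\Vcal$ and $x_{\text{test}}$, and all $n+1$ points are clean i.i.d.\ draws from $\Dcal$. Hence $(s_1,\dots,s_{n+1})$ is i.i.d., in particular exchangeable.

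First I will identify $\hat q$ concretely. Let $k=\lceil\alpha(n+1)\rceil$ and let $s_{(1)}\le\dots\le s_{(n)}$ be the sorted calibration scores. The definition of $\hat q$ as an infimum gives $\hat q=s_{(k)}$ whenever $1\le k\le n$. If $k=0$, then $\hat q=-\infty$, and if $k>n$, then $\hat q=+\infty$; these edge cases are disposed of separately and trivially.

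Second, I will convert the event into a rank event. The failure event $s_{n+1}<s_{(k)}$ means that fewer than $k$ calibration scores are $\le s_{n+1}$. So, among all $n+1$ scores, $s_{n+1}$ has rank at most $k-1$ from below when ties are broken in its disfavour.

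Third, I will bound the probability of that rank event. Exchangeability makes the position of $s_{n+1}$ in the sorted list uniform over $\{1,\dots,n+1\}$ after random tie-breaking. Ties only help the event $s_{n+1}\ge\hat q$, so
\begin{align}
\Pbb\big(s_{n+1}<\hat q\big)\le\frac{k-1}{n+1},\qquad\text{hence}\qquad \Pbb\big(\Delta P(x_{\text{test}}\mid w,\sigma)\ge\hat q\big)\ge 1-\frac{\lceil\alpha(n+1)\rceil-1}{n+1}.
\end{align}

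The final step, which I expect to be the main obstacle, is turning this into the stated $1+\frac{1}{n+1}-\alpha$. This requires $\lceil\alpha(n+1)\rceil\le\alpha(n+1)$. That holds exactly when $\alpha(n+1)$ is an integer, and in that case the bound above reads $1-\alpha+\frac{1}{n+1}$ verbatim.

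I would therefore first prove the statement under the integrality condition on $\alpha(n+1)$. I would then check whether the paper intends $\alpha$ to be chosen so that this holds, which is a standard convention in conformal calibration.

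In general, only $\lceil\alpha(n+1)\rceil<\alpha(n+1)+1$ is available, and that yields the guarantee $\Pbb(\cdot)>1-\alpha$. For continuous (tie-free) scores the rank argument is in fact tight: the probability equals $1-\frac{\lceil\alpha(n+1)\rceil-1}{n+1}$. So when $\alpha(n+1)$ is not an integer, the $+\frac{1}{n+1}$ slack cannot be obtained, and the statement must be restricted to the integer case.
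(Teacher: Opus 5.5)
Your route (exchangeability of the $n+1$ scores, identifying $\hat q=s_{(k)}$ with $k=\lceil\alpha(n+1)\rceil$, reducing to the rank of the test score) is the same as the paper's. You are also right that the final step would need $\lceil\alpha(n+1)\rceil\le\alpha(n+1)$; the paper's proof simply asserts $\frac{n+2-\lceil\alpha(n+1)\rceil}{n+1}\ge 1+\frac{1}{n+1}-\alpha$, which reverses the ceiling inequality. However, your rank step has an off-by-one error, the same one the paper makes when it writes $\Pbb(s_0\ge s_{(k)})=\frac{n+2-k}{n+1}$. If at most $k-1$ calibration scores are $\le s_{n+1}$, then the position of $s_{n+1}$ among all $n+1$ scores is at most $(k-1)+1=k$, not $k-1$, because that position counts $s_{n+1}$ itself. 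So the argument gives $\Pbb(s_{n+1}<\hat q)\le\frac{k}{n+1}$, that is, $\Pbb(\Delta P(x_{\text{test}}\mid w,\sigma)\ge\hat q)\ge 1-\frac{\lceil\alpha(n+1)\rceil}{n+1}$, and this holds with equality for an atomless score distribution. Quick check: for $k=n$, the event $s_{n+1}\ge s_{(n)}$ says $s_{n+1}$ is the largest of $n+1$ i.i.d.\ continuous values, which has probability $\frac{1}{n+1}$, not $\frac{2}{n+1}$.

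Consequently, restricting to integer $\alpha(n+1)$ does not rescue the statement. With atomless scores it fails for every $n$ and every $\alpha\in[0,1]$:
\begin{itemize}
\item For $0<\alpha\le\frac{n}{n+1}$ the exact probability is $1-\frac{\lceil\alpha(n+1)\rceil}{n+1}\le 1-\alpha<1-\alpha+\frac{1}{n+1}$. For example, $n=1$, $\alpha=\frac12$ gives $\hat q=s_1$ and $\Pbb(s_2\ge s_1)=\frac12$, whereas the statement and your integer-case bound both claim $1$.
\item The edge cases you set aside are counterexamples too. $\alpha=0$ gives probability $1$ against a claimed $1+\frac{1}{n+1}$; indeed the right-hand side exceeds $1$ whenever $\alpha<\frac{1}{n+1}$ (the main text writes $\min(1,\cdot)$, but the theorem does not). $\alpha>\frac{n}{n+1}$ gives $\hat q=+\infty$ and probability $0$ against a positive bound.
\end{itemize}
So the theorem as written cannot be proved. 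Once the count is fixed, your argument does establish $1-\frac{\lceil\alpha(n+1)\rceil}{n+1}>1-\alpha-\frac{1}{n+1}$. A genuine $1-\alpha$ guarantee requires the order statistic $k=\lfloor\alpha(n+1)\rfloor$ instead. The corrected count also matches the rank computation the paper itself uses for Theorem~\ref{upperbound}, namely $\Pbb(s_0\le\hat q)=\frac{\lceil\alpha(n+1)\rceil}{n+1}$ for tie-free scores. Combined with the claimed bound, the events $\{s_0\ge\hat q\}$ and $\{s_0\le\hat q\}$, whose intersection is null, would have probabilities summing to at least $1+\frac{1}{n+1}$.
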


\textbf{Performance Guarantee:} 
The theorem establishes a performance guarantee for our proposed robust poisoned sample detection method. Specifically, Thm. \ref{upperbound} derives an upper bound for the FPR and outlines its asymptotic behavior. A numerical validation of Thm. \ref{upperbound} is provided in App. \ref{higher t increase the fpr}.

\begin{theorem}\label{upperbound}
    Let $\Scal = \{s_1, \ldots, s_n\}$ be the calibration set where $s_i=\Delta P(x_i\mid w,\sigma)$ for $i=1,\ldots,n$ and $\alpha$ be a pre-selected confidence level. Then, the FPR can be bounded as the following: 
    \begin{align}
    \Pbb\Big(\Delta P(x_{\text{test}}\mid w,\sigma)\leq \hat{q}\mid \mathcal{S}\Big) \leq \alpha + \frac{1}{n+1}
    \end{align}
    Moreover, if we denote $Z_n=\Pbb\Big(\Delta P(x_{\text{test}}\mid w,\sigma)\leq \hat{q}| \mathcal{S}\Big)$, then for any $\xi>0$ the following probability holds $\lim_{n\rightarrow\infty}\mathbb{P}(Z_n \leq \alpha + \xi) = 1$.
\end{theorem}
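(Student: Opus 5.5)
We prove Thm.~\ref{upperbound} with the standard split-conformal exchangeability argument, applied to the lower tail instead of the upper tail. Write $s_{n+1}=\Delta P(x_{\text{test}}\mid w,\sigma)$ for the clean test score. Since the calibration points and $x_{\text{test}}$ are i.i.d.\ clean draws and the model $w$ is fixed, the scores $s_1,\ldots,s_n,s_{n+1}$ are i.i.d., hence exchangeable. We assume, as is standard, that ties occur with probability zero; otherwise we break ties by an independent uniform variable. Set $k=\lceil \alpha(n+1)\rceil$. By the definition of $\hat q$ in Thm.~\ref{thm:conformal_guarantee}, $\hat q=s_{(k)}$, the $k$-th smallest calibration score.

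\textbf{Marginal bound.} The event $\{s_{n+1}\le s_{(k)}\}$ holds exactly when the rank of $s_{n+1}$ among all $n+1$ scores is at most $k$. By exchangeability this rank is uniform on $\{1,\ldots,n+1\}$. Therefore
\begin{align}
\Pbb(s_{n+1}\le \hat q)=\frac{k}{n+1}\le \frac{\alpha(n+1)+1}{n+1}=\alpha+\frac{1}{n+1}.
\end{align}
This is the complement of Thm.~\ref{thm:conformal_guarantee}. Taken literally, the displayed inequality conditions on $\mathcal S$, but the rank argument only controls the average over $\mathcal S$. We therefore read the first claim as $\Ebb[Z_n]\le \alpha+\frac{1}{n+1}$, where $Z_n=\Pbb(s_{n+1}\le\hat q\mid\mathcal S)=F(s_{(k)})$ and $F$ is the CDF of a clean score. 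A genuinely conditional statement can only hold with high probability, and the second claim supplies exactly that.

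\textbf{Asymptotic statement.} Assume $F$ is continuous. Then $U_i=F(s_i)$ are i.i.d.\ Uniform$(0,1)$, and $Z_n=U_{(k)}\sim\mathrm{Beta}(k,n+1-k)$. Its mean is $k/(n+1)\to\alpha$ and its variance is $\frac{k(n+1-k)}{(n+1)^2(n+2)}\le \frac{1}{4(n+2)}\to 0$. For fixed $\xi>0$, once $n$ is large enough that $k/(n+1)\le\alpha+\xi/2$, Chebyshev's inequality gives
\begin{align}
\Pbb(Z_n>\alpha+\xi)\le \Pbb\big(|Z_n-\Ebb Z_n|>\xi/2\big)\le \frac{1}{(n+2)\xi^2}\to 0,
\end{align}
so $\lim_n\Pbb(Z_n\le\alpha+\xi)=1$. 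As an alternative, the Glivenko--Cantelli theorem shows that the empirical $k/n$-quantile converges to the true $\alpha$-quantile, and the conclusion follows from continuity of $F$.

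\textbf{Main obstacle.} The delicate step is the conditional-versus-marginal mismatch in the first inequality. For a fixed $\mathcal S$, $Z_n=F(\hat q)$ can exceed $\alpha+\frac{1}{n+1}$. The plan is to state the bound marginally, equivalently in expectation over $\mathcal S$, and let the Beta concentration carry the conditional, high-probability version.

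A second technical point is that Beta$(k,n+1-k)$ requires continuity of $F$. If $F$ has atoms, we use randomized tie-breaking, or note that $F(s_{(k)})$ is stochastically dominated by the uniform order statistic $U_{(k)}$, so both bounds still hold.

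We also take the scores to be the exact $\Delta P$. If the empirical $\tilde{\Delta}P$ is used instead, the argument is unchanged provided it is used consistently for both calibration and test points, because exchangeability is preserved.
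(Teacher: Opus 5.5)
Your proof follows the paper's route. Uniform rank of the test score among the $n+1$ exchangeable scores gives $k/(n+1)\le\alpha+\frac{1}{n+1}$ with $k=\lceil\alpha(n+1)\rceil$. The asymptotic claim then comes from $Z_n\sim\mathrm{Beta}(k,n+1-k)$ plus Chebyshev. The paper writes this law as $\mathrm{Beta}(n+1-\lfloor(1-\alpha)(n+1)\rfloor,\lfloor(1-\alpha)(n+1)\rfloor)$, which is the same, and cites Vovk rather than deriving it from $U_i=F(s_i)$.

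Your caveat about the first claim is correct, and it identifies a real defect in the paper's own proof. The paper asserts $\Pbb(s_0\le s_{(k)}\mid\Scal)=k/(n+1)$ exactly. A few lines later it treats that same conditional probability as a nondegenerate Beta random variable. Only the marginal statement $\Ebb[Z_n]\le\alpha+\frac{1}{n+1}$, together with the high-probability version, is actually provable. Your $\xi/2$ bookkeeping for replacing $k/(n+1)$ by $\alpha$ is also cleaner than the paper's limit manipulation.

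One side remark should be removed, because with atoms in $F$ the dominance goes the other way. Write $s_i=F^{-1}(U_i)$ with the generalized inverse. Then $s_{(k)}=F^{-1}(U_{(k)})$ and $F(F^{-1}(u))\ge u$, so $Z_n=F(s_{(k)})\ge U_{(k)}$. Ties can therefore only inflate $\Pbb(s_{n+1}\le\hat q)$; constant scores give an FPR equal to $1$. The bound genuinely needs a continuous score distribution, or randomized tie-breaking built into the detection rule itself, as in your main assumption. The paper makes the same no-ties assumption tacitly when it says $s_0$ is equally likely to fall anywhere among the calibration scores.
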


\section{Experiment}\label{experiment}
\subsection{Experiment Setting}\label{sec:exp_setting}
\noindent\textbf{Imbalanced Datasets.}
We convert the originally balanced MNIST, SVHN, CIFAR-10, TinyImageNet and ImageNet10 training sets into long-tailed and step-imbalanced versions based on the imbalance ratio \(\rho\) and the fraction of minority classes \(\mu\), while keeping the test sets unchanged (see Fig.\ref{fig11}). A balanced calibration set of 100 i.i.d.\ samples is drawn from the same distribution. More details about these datasets are deferred to App. \ref{imbalanceddatasets}. The results for MNIST are presented in the App. \ref{mnist_performance}.




\noindent\textbf{Backdoor Attacks.} For \textit{certification} performance, we consider two well-known backdoor attacks: Badnets~\citep{gu2019Badnets} and the Blend attack~\citep{chen2017targeted}.
In both cases, a trigger $\delta$ is added to the original image $x$ such that $\|\delta\|_2 \ge 0.8$ (i.e., $x \mapsto x + \delta$). Although other patterns could be used, our certified robustness depends primarily on the magnitude of the backdoor perturbation 
and the fraction of poisoned training data, so these patterns suffice to illustrate our approach. 
Meanwhile, to \textit{empirically} validate the RPP method’s detection efficacy, we evaluate \textit{ten} additional types of triggers, detailed in App.~\ref{posionedsamples}.


\noindent\textbf{Evaluation Metric.} Detection performance is measured by TPR (correctly detected poisoned samples) and FPR (benign samples incorrectly flagged as poisoned).

\noindent\textbf{Training.} We use PreActResNet18 for SVHN and CIFAR10, ResNet34 for TinyImageNet and ViT-B/16 for ImageNet10 as classification models. The results for different architectures (VGG16, DenseNet161 and EfficientNetB0) are shown in App. \ref{different_architecture}. Training details are provided in App.~\ref{Training Model Architecture}. We ensured that the ASR in the pre-trained models remained above a certain threshold, e.g., 50\% (App.~\ref{Different Trigger Size}) and 90\% (Tab.~\ref{table:baselinedefense}), confirming the effectiveness of the backdoor. All reported results are averaged over multiple independent runs to ensure statistical robustness.

\subsection{Certification Performance}
In Fig. \ref{fig_certification}, we present the TPR and FPR across dataset SVHN, CIFAR-10, TinyImageNet and ImageNet10 under balanced (\(\rho = 1\)) and various imbalance ratios (\(\rho = 2, 10, 100, 200\)). 
To ensure robustness, we apply isotropic Gaussian noise with dataset-specific standard deviations \(\sigma\): 0.6 for SVHN, 1.0 for CIFAR-10, TinyImageNet and ImageNet10. The certification process employs two different values for the user-defined error rate parameter \(\alpha\): 0.05 and 0.1. For each input sample, we generate independent Gaussian noise samples ($J$ = 3) to effectively measure classification probability variations while maintaining computational efficiency, more results using different $J$ are shown in App. \ref{different_noisesamples}.

Our certification method demonstrates strong effectiveness across all three datasets under various class balance conditions. When \(\alpha\) = 0.05, for SVHN, RPP achieves certification coverage ranges from 98.8\% to 100.0\% across all balance scenarios, with FPRs between 1.2\% and 23.9\%. The CIFAR-10 dataset shows similarly robust results, with certification rates ranging from 90.4\% to 99.9\% and FPRs between 5.9\% and 29.6\%. For TinyImageNet, RPP certifies up from 99.3\% to 100.0\% and FPRs between 2.0\% and 31.2\%. Even in ImageNet10, the method achieves certification from 84.5\% to 96.7\%. The corresponding FPR remains relatively low, ranging from 11.6\% to 24.6\% across different balance ratios.  All these optimal results were achieved with \(\alpha = 0.05\). When increasing \(\alpha\) to 0.1, we observed that, while the TPR remains largely stable, the FPR increases. Based on this trade-off analysis, we selected \(\alpha = 0.05\) as the optimal parameter for our subsequent experiments. Additionally, we demonstrate the performance of the RPP across varying trigger sizes, see App. \ref{Different Trigger Size}. 

\begin{figure}[ht] \vspace{-1mm}
    \centering
    \includegraphics[width=1.0\linewidth]{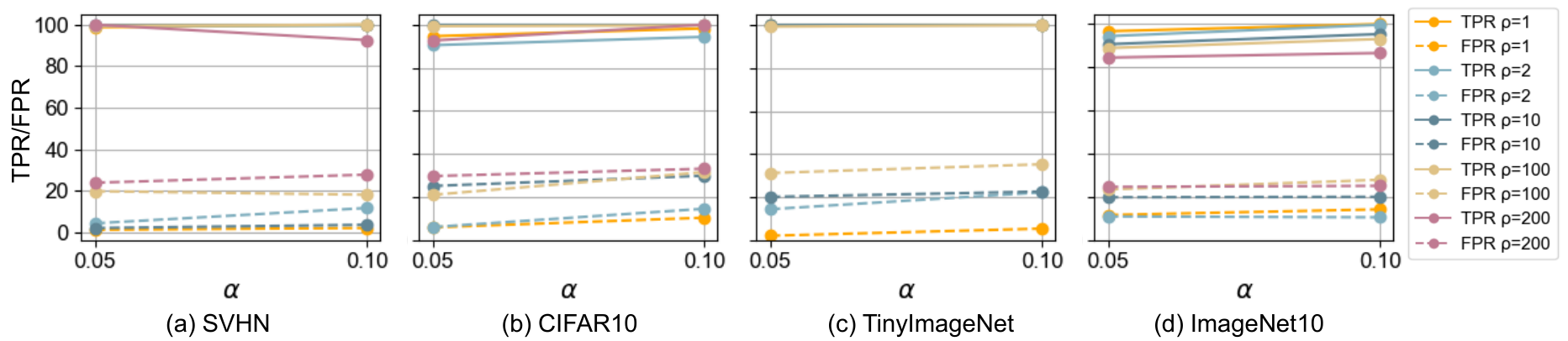}
    \caption{Performance of RPP against BadNets attacks with perturbation magnitude $\|\delta\|_2 \geq 0.8$, measured by TPR and FPR across balanced ($\rho = 1$) and imbalanced ($\rho = 2, 10, 100, 200$) settings, with $n = 100$ and varying $\alpha$ on SVHN, CIFAR-10, TinyImageNet and ImageNet10.} 
    \label{fig_certification}
\end{figure}

\vspace{-1mm}\subsection{Empirical Performance} \label{baselines}\vspace{-1mm}
Here, we present the detection performance of the RPP method against 10 types of backdoor attacks with $\alpha$ = 0.05, $n$=100 and \(\sigma = 1.0\). For all experiments, we ensure the dataset contains sufficient poisoned samples to allow ASR exceeding 90\%. We also compare our approach with 12 empirical backdoor defense methods, as shown in Tab.~\ref{table:baselinedefense} and Tab.~\ref{table:baselinedefense1}. We found that RPP achieves comparable or even higher TPRs than the SOTA detection methods for all trigger types. However, in several instances, methods like SCALE-UP and MSPC demonstrated higher TPR, as highlighted in red in Tab.~\ref{table:baselinedefense} and Tab.~\ref{table:baselinedefense1}, albeit at the cost of significantly increased FPR. A detailed explanation on why existing methods underperform is presented in App. \ref{Backdoor Sample Detection}

\begin{table*}[t]
\centering
\scriptsize
\begin{adjustbox}{width=\textwidth}
\begin{tabular}{l l *{10}{cc}}
\toprule
& & \multicolumn{2}{c}{Badnets} & \multicolumn{2}{c}{Blend} & \multicolumn{2}{c}{Trojan} & \multicolumn{2}{c}{SIG} & \multicolumn{2}{c}{ISSBA} & \multicolumn{2}{c}{WaNet} & \multicolumn{2}{c}{Sleeper Agent} & \multicolumn{2}{c}{AdaPatch} & \multicolumn{2}{c}{AdaBlend} & \multicolumn{2}{c}{Narci.} \\
\cmidrule(r){3-4}\cmidrule(r){5-6}\cmidrule(r){7-8}\cmidrule(r){9-10}\cmidrule(r){11-12}\cmidrule(r){13-14}\cmidrule(r){15-16}\cmidrule(r){17-18}\cmidrule(r){19-20}\cmidrule(r){21-22}
& \textbf{Defenses} & TPR & FPR & TPR & FPR & TPR & FPR & TPR & FPR & TPR & FPR & TPR & FPR & TPR & FPR & TPR & FPR & TPR & FPR & TPR & FPR \\
\midrule
\multirow{12}{*}{$\rho=2$}
& SS        & 55.6 &  1.2 & 57.9 &  9.6 & 54.7 &  2.0 & 62.1 & 25.2 & 39.5 &  1.9 & 60.0 &  8.2 & 60.0 &  2.3 & 11.7 &  2.8 & 16.1 &  3.9 &  8.3 &  7.7 \\
& AC        & 56.1 & 16.1 & 42.7 & 13.5 & 24.0 & 49.6 & 66.7 & 16.3 & 44.1 &  9.6 & 58.3 &  6.7 & 32.1 & 20.2 & 20.7 & 17.3 & 21.3 & 12.6 &  3.6 & 21.6 \\
& RE        & 73.9 & 17.0 & 72.7 & 20.5 & 56.5 & 16.9 & 38.7 & 11.0 & 41.5 & 16.4 & 50.4 & 12.3 & 52.9 & 11.7 & 57.8 & 17.5 & 50.4 & 20.8 & 10.1 & 12.3 \\
& ABL       & 23.6 &  2.6 & 36.2 &  8.0 & 15.2 &  0.3 & 10.9 &  3.2 & 33.7 &  0.6 & 60.3 & 11.5 &  4.9 &  1.3 & 12.1 &  0.6 & 10.3 &  4.7 &  0.0 &  0.5 \\
& CT        & 75.1 & 12.6 & 84.3 & 10.2 & 82.1 &  3.2 & 90.1 & 37.6 & 91.6 & 18.3 & \textbf{92.5} & 15.7 & 75.3 & 10.4 & 68.7 &  0.6 & 77.2 &  8.5 & 10.6 & 16.3 \\
& ASSET     & 53.3 & 31.4 & 55.1 & 27.6 & 71.9 & 37.1 & 62.2 & 36.4 & 80.4 & 15.6 & 81.1 & 16.4 & 30.5 & 29.4 & 32.9 & 36.9 & 65.9 & 22.8 & 89.0 &  6.2 \\
& SCALE-UP  & 95.8 & 44.0 & 75.7 & 28.4 & 55.7 & 38.9 & \textbf{99.0} & \textbf{\textcolor{red}{40.1}} & 90.6 & 29.8 & 85.4 & 21.1 & 32.6 &  4.6 & 76.7 & 69.2 & 72.4 & 39.7 & 60.5 & 17.2 \\
& MSPC      & 88.2 & 45.1 & 90.7 & 20.4 & 59.8 & 25.9 & 80.1 & 38.9 & 78.3 & 18.8 & 89.8 &  9.6 & 67.6 & 33.8 & 64.3 & 34.8 & 70.3 & 27.8 & 77.5 & 27.1 \\
& BBCaL     & 95.7 & 18.4 & \textbf{94.2} & 19.3 & 72.8 & 13.0 & 89.4 & 19.6 & \textbf{93.1} & 17.7 & 81.8 & 20.3 & 80.4 & 17.1 & 70.1 & 24.4 & 67.3 & 20.5 & 84.0 & 16.1 \\
& STRIP     & 90.2 & 18.9 & 77.8 & 26.3 & 20.7 & 10.5 & 88.6 & 30.7 & 40.2 & 22.8 & 10.5 & 33.2 & 17.6 & 10.3 & 85.4 & 24.0 & 80.9 & 26.3 &  0.0 &  0.8 \\
& TED       & 96.2 &  4.8 & 90.9 &  8.2 & 76.5 & 12.1 & 88.0 & 12.9 & 90.3 & 18.4 & 88.4 & 11.9 & 89.9 & 13.3 & 86.0 & 13.6 & 83.9 & 14.0 & 90.3 & 14.0 \\
& IBD-PSC   & 95.3 &  6.8 & 93.6 &  9.5 & \textbf{90.3} & 10.0 & 85.2 & 12.6 & 90.3 & 13.6 & 88.5 & 12.1 & 64.3 & 14.4 & 85.3 &  9.7 & 85.6 & 14.9 & 90.5 & 22.9 \\
\rowcolor{blue!10}
& Ours      & \textbf{96.7} &  3.2 & 89.5 &  9.3 & 89.9 & 12.9 & 93.0 &  4.7 & 89.0 & 10.9 & 90.6 &  8.8 & \textbf{93.8} & 24.7 & \textbf{87.4} &  4.9 & \textbf{86.7} & 11.2 & \textbf{95.6} & 16.9 \\
\midrule
\multirow{12}{*}{$\rho=100$}
& SS        & 29.7 &  5.3 & 29.5 &  6.7 & 58.3 &  4.2 & 10.0 &  5.0 & 20.8 & 18.7 & 30.6 &  5.5 & 33.6 & 20.5 &  6.7 &  3.2 & 20.3 &  5.2 &  0.0 &  1.2 \\
& AC        & 39.2 &  0.4 & 21.2 &  2.9 &  0.0 &  0.1 &  0.0 &  0.2 & 30.0 &  7.9 & 32.5 &  6.8 & 10.3 &  0.3 &  4.6 &  0.1 & 10.5 &  4.6 &  0.0 &  0.0 \\
& RE        & 13.6 & 22.5 & 10.4 & 14.3 & 20.2 & 11.0 & 14.4 & 15.9 & 16.4 & 15.9 & 17.0 & 9.4 & 10.8 & 8.1 & 9.7 & 2.8 & 8.9 & 8.4 & 0.0 & 3.1 \\
& ABL       &  0.0 &  0.2 &  0.0 &  0.2 &  0.0 &  0.0 &  0.0 &  0.0 &  0.0 &  2.1 & 10.2 &  1.6 &  0.0 &  2.6 &  0.0 &  1.5 &  0.0 &  1.8 &  0.0 &  1.6 \\
& CT        &  0.0 & 14.0 &  0.0 &  4.4 &  0.0 &  2.4 &  0.0 &  0.1 &  6.1 &  7.3 & 21.6 &  5.2 &  0.0 & 12.5 &  0.0 &  0.0 &  9.3 &  9.3 &  0.0 &  9.3 \\
& ASSET     &  0.0 &  0.0 & 12.1 &  4.7 &  0.0 &  0.0 &  0.0 &  8.7 & 10.1 &  6.4 & 13.3 &  5.8 &  0.0 &  4.2 & 21.9 & 22.8 & 17.6 &  6.4 &  5.2 &  2.2 \\
& SCALE-UP  & 90.7 & 45.3 & 52.0 & 41.8 & 52.1 & 20.2 & 48.9 & 10.8 & \textbf{63.7} & \textbf{\textcolor{red}{32.9}} & 50.1 & 24.6 & \textbf{76.3} & \textbf{\textcolor{red}{42.1}} & 47.1 & 22.6 & 33.3 & 30.5 & 19.8 & 17.4 \\
& MSPC      & 23.4 & 10.6 & 36.1 &  9.5 & 31.4 & 11.3 & 40.8 &  9.4 & 21.6 &  3.3 & 50.2 & 16.7 & 21.5 &  0.8 & 19.8 &  0.9 & 22.7 &  6.4 & 41.1 & 28.6 \\
& BBCaL     & 67.6 & 29.9 & 68.1 & 30.6 & 50.7 & 19.9 & 52.0 & 30.7 & 58.4 & 28.9 & 46.8 & 31.3 & 60.3 & 22.0 & 39.8 & 30.1 & 37.2 & 28.4 & 55.7 & 21.3 \\
& STRIP     & 49.5 & 22.4 & 37.6 & 29.1 & 12.0 & 17.4 & 42.2 & 31.1 & 18.2 & 20.6 &  7.9 & 16.2 &  8.2 & 10.4 & 39.9 & 32.2 & 27.6 & 29.9 &  0.0 &  0.5 \\
& TED       & 76.9 &  9.3 & 71.4 & 16.3 & 38.7 & 22.6 & 70.3 & 20.4 & 60.7 & 20.1 & 54.3 & 17.7 & 61.2 & 16.9 & 50.5 & 20.0 & 49.4 & 22.3 & 65.5 & 19.5 \\
& IBD-PSC   & 85.4 & 12.0 & 74.7 & 17.5 & 47.1 & 22.9 & 69.5 & 16.7 & 58.8 & 19.5 & 63.4 & 20.3 & 44.7 & 15.8 & 58.9 & 16.6 & 63.1 & 21.4 & 61.4 & 29.7 \\
\rowcolor{blue!10}
& Ours      & \textbf{100.0} &  9.2 & \textbf{79.9} & 16.3 & \textbf{75.0} &  3.4 & \textbf{73.9} & 12.9 & 52.2 & 11.1 & \textbf{70.3} & 24.9 & 59.2 &  3.8 & \textbf{66.1} & 15.9 & \textbf{69.2} & 15.4 & \textbf{67.4} & 18.7 \\
\bottomrule
\end{tabular}
\end{adjustbox}
\caption{Comparison with SoTA defenses on imbalanced CIFAR-10 datasets ($\mu = 0.9$, $\rho = 2, 100$). Additional results for $\rho=1, 10, 200$ are reported in Tab.~\ref{table:baselinedefense1}.}
\label{table:baselinedefense}
\end{table*}

\subsection{Additional Experiment} \label{additional experiments}
\paragraph{Impact of Noise Level.} 
To assess the impact of isotropic Gaussian noise standard deviation (\(\sigma\)) on RPP, experiments were conducted with fixed \(\alpha = 0.05\) and \(n = 100\). Fig. \ref{fig_differentnoise} illustrates the TPR and FPR across four datasets. Optimal performance requires dataset-specific \(\sigma\) ranges: 0.2 to 1.2 for SVHN, 0.1 to 3.0 for CIFAR10, 0.1 to 2.0 for TinyImageNet and ImageNet10, with variability under 20\%, demonstrating RPP's robustness to noise variations. 

\begin{figure*}[ht]\vspace{-1mm}
    \centering
    \includegraphics[width=1.0\linewidth]{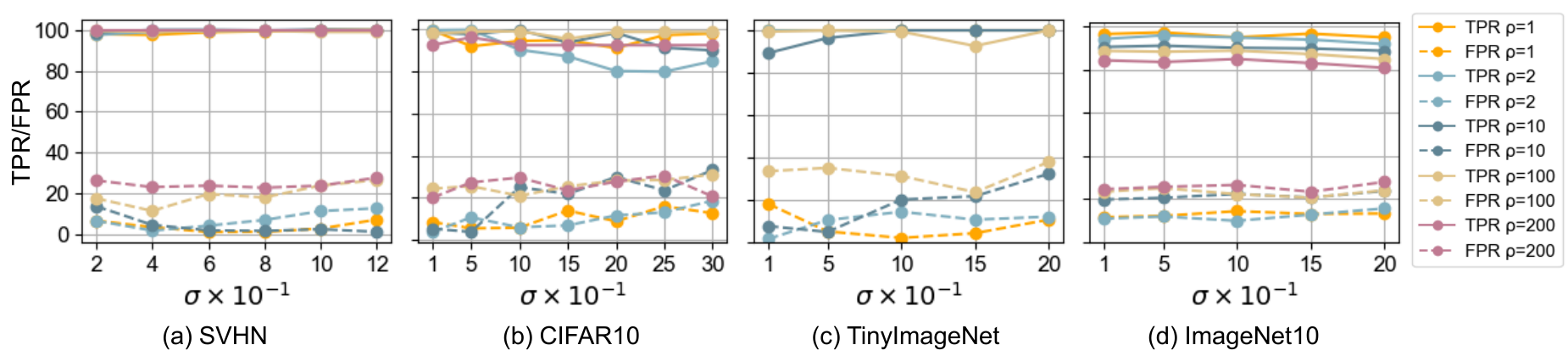}
    \caption{Performance of RPP against BadNets attacks with $\|\delta\|_2 \geq 0.8$, measured by TPR and FPR on SVHN, CIFAR-10, TinyImageNet and ImageNet10 under balanced ($\rho = 1$) and imbalanced ($\rho = 2, 10, 100, 200$) settings for varying $\sigma$, with $\alpha = 0.05$ and $n = 100$.
}
    \label{fig_differentnoise}
\end{figure*}

\vspace{-1mm}\paragraph{Resistance to Potential Adaptive Attacks.}

\textbf{(1) Small-norm trigger attack:} Consistent with prevailing threat models \citep{souri2022sleeper, Zeng2023Narcissus}, we employ \emph{low} poisoning rates to preserve stealth while sustaining a high ASR. Within the theoretical regime $\lVert\delta\rVert_2 \ge 0.8$, RPP satisfies its certified detectability conditions. We further examine a worst-case \emph{adaptive} scenario in which the adversary is defense-aware and reduces trigger strength to $\lVert\delta\rVert_2 < 0.8$; to maintain ASR, the poisoning rate is increased to $p=10\%$ (ASR $>\!90\%$). Even under this setting, RPP frequently detects poisoned samples empirically (App.~\ref{adaptive_attacks}). Moreover, exceedingly small triggers typically necessitate impractically high poisoning budgets, limiting their relevance in realistic deployments.

\textbf{(2) Hybrid-label adaptive attack:} We also assess a \emph{hybrid-label adaptive attack}, where a defense-aware adversary poisons 80\% of samples via label flipping and 20\% by adding triggers with Gaussian noise while keeping original labels to induce prediction ambiguity. Despite this strategy, RPP remains robust (App.~\ref{adaptive_attacks}), as the clean-only conformal calibration effectively filters out samples with atypical confidence fluctuations, preserving strong detection performance.

\vspace{-1mm}\paragraph{More Experiments.} In subsequent ablation studies, we systematically investigate several key factors that may affect the performance of RPP. Specifically, we vary the size of the calibration set (App.~\ref{Different Validate Set Size}) and examine the impact of using imbalanced calibration sets (App.~\ref{imbalanced validate set}). We further explore RPP's robustness when the backdoor target label belongs to a minority or intermediate class (App.~\ref{Different Target label}), and assess its effectiveness using class-imbalance mitigation techniques such as logits adjustment (App.~\ref{Logits Adjustment}).

\vspace{-1mm}\section{Conclusion}\vspace{-1mm}
In this paper, we show that imbalanced datasets are more vulnerable to backdoor attacks than balanced ones, and existing defenses, designed for balanced data, perform poorly under imbalance. To address this, we propose RPP, the first certified poisoned samples detector that quantifies changes in prediction probabilities caused by random noise. RPP not only enables accurate detection but also provides a certified condition under which backdoors are guaranteed to be detectable. Our analysis shows that if the backdoor perturbation exceeds a threshold, ensuring high ASR with minimal poisoning, RPP can reliably distinguish poisoned from clean samples. Extensive experiments on five benchmarks confirm its strong detection and certification performance, even in imbalanced scenarios.

\bibliography{iclr2025_conference}

@article{he2009learning,
  title={Learning from imbalanced data},
  author={He, Haibo and Garcia, Edwardo A},
  journal={IEEE Transactions on knowledge and data engineering},
  volume={21},
  number={9},
  pages={1263--1284},
  year={2009}
}

@article{johnson2019survey,
  title={Survey on deep learning with class imbalance},
  author={Johnson, Justin M and Khoshgoftaar, Taghi M},
  journal={Journal of big data},
  volume={6},
  number={1},
  pages={1--54},
  year={2019}
}

@article{gu2019badnets,
  title={Badnets: Evaluating backdooring attacks on deep neural networks},
  author={Gu, Tianyu and Liu, Kang and Dolan-Gavitt, Brendan and Garg, Siddharth},
  journal={IEEE Access},
  volume={7},
  pages={47230--47244},
  year={2019}
}

@article{chen2017targeted,
  title={Targeted backdoor attacks on deep learning systems using data poisoning},
  author={Chen, Xinyun and Liu, Chang and Li, Bo and Lu, Kimberly and Song, Dawn},
  journal={arXiv preprint arXiv:1712.05526
        
        
        
        
        
        
        
        
        
        
        
        
        
        
        
        
        
        
        
        
        
        
        
        
        
        
        
        
        
        
        
        
        
        
        
        
        
        },
  year={2017}
}

@inproceedings{li2021invisible,
  title={Invisible backdoor attack with sample-specific triggers},
  author={Li, Yuezun and Li, Yiming and Wu, Baoyuan and Li, Longkang and He, Ran and Lyu, Siwei},
  booktitle={Proceedings of the IEEE/CVF international conference on computer vision (ICCV)},
  pages={16463--16472},
  year={2021}
}

@article{turner2019label,
  title={Label-consistent backdoor attacks},
  author={Turner, Alexander and Tsipras, Dimitris and Madry, Aleksander},
  journal={arXiv preprint arXiv:1912.02771
        
        
        
        
        
        
        
        
        
        
        
        
        
        
        
        
        
        
        
        
        
        
        
        
        
        
        
        
        
        
        
        
        
        
        
        
        
        
        
        
        
        
        
        },
  year={2019}
}

@inproceedings{barni2019new,
  title={A new backdoor attack in cnns by training set corruption without label poisoning},
  author={Barni, Mauro and Kallas, Kassem and Tondi, Benedetta},
  booktitle={Proceedings of the IEEE International Conference on Image Processing (ICIP)},
  pages={101--105},
  year={2019}
}

@article{nguyen2021wanet,
  title={Wanet--imperceptible warping-based backdoor attack},
  author={Nguyen, Anh and Tran, Anh},
  journal={arXiv preprint arXiv:2102.10369
        
        
        
        
        
        
        
        
        
        
        
        
        
        
        
        
        
        
        
        
        
        
        
        
        
        
        
        },
  year={2021}
}

@article{nguyen2020input,
  title={Input-aware dynamic backdoor attack},
  author={Nguyen, Tuan Anh and Tran, Anh},
  journal={Advances in Neural Information Processing Systems},
  volume={33},
  pages={3454--3464},
  year={2020}
}

@inproceedings{doan2021lira,
  title={Lira: Learnable, imperceptible and robust backdoor attacks},
  author={Doan, Khoa and Lao, Yingjie and Zhao, Weijie and Li, Ping},
  booktitle={Proceedings of the IEEE/CVF international conference on computer vision (ICCV)},
  pages={11966--11976},
  year={2021}
}

@article{souri2022sleeper,
  title={Sleeper agent: Scalable hidden trigger backdoors for neural networks trained from scratch},
  author={Souri, Hossein and Fowl, Liam and Chellappa, Rama and Goldblum, Micah and Goldstein, Tom},
  journal={Advances in Neural Information Processing Systems},
  volume={35},
  pages={19165--19178},
  year={2022}
}

@article{le2024comprehensive,
  title={A comprehensive survey on backdoor attacks and their defenses in face recognition systems},
  author={Le Roux, Quentin and Bourbao, Eric and Teglia, Yannick and Kallas, Kassem},
  journal={IEEE Access},
  volume={12},
  pages={47433--47468},
  year={2024}
}

@inproceedings{liu2018trojaning,
  title={Trojaning attack on neural networks},
  author={Liu, Yingqi and Ma, Shiqing and Aafer, Yousra and Lee, Wen-Chuan and Zhai, Juan and Wang, Weihang and Zhang, Xiangyu},
  booktitle={Proceedings of the 25th Annual Network And Distributed System Security Symposium (NDSS)},
  year={2018}
}

@article{chen2018detecting,
  title={Detecting backdoor attacks on deep neural networks by activation clustering},
  author={Chen, Bryant and Carvalho, Wilka and Baracaldo, Nathalie and Ludwig, Heiko and Edwards, Benjamin and Lee, Taesung and Molloy, Ian and Srivastava, Biplav},
  journal={arXiv preprint arXiv:1811.03728
        
        
        
        
        
        
        
        
        
        
        
        
        
        
        
        
        
        
        
        
        
        
        
        
        
        
        
        
        
        
        
        
        
        
        
        
        
        
        
        
        
        
        
        
        
        
        
        
        
        
        
        },
  year={2018}
}

@article{li2021anti,
  title={Anti-backdoor learning: Training clean models on poisoned data},
  author={Li, Yige and Lyu, Xixiang and Koren, Nodens and Lyu, Lingjuan and Li, Bo and Ma, Xingjun},
  journal={Advances in Neural Information Processing Systems},
  volume={34},
  pages={14900--14912},
  year={2021}
}

@article{chen2021pois,
  title={De-pois: An attack-agnostic defense against data poisoning attacks},
  author={Chen, Jian and Zhang, Xuxin and Zhang, Rui and Wang, Chen and Liu, Ling},
  journal={IEEE Transactions on Information Forensics and Security},
  volume={16},
  pages={3412--3425},
  year={2021}
}

@inproceedings{qi2023towards,
  title={Towards a proactive $\{$ML$\}$ approach for detecting backdoor poison samples},
  author={Qi, Xiangyu and Xie, Tinghao and Wang, Jiachen T and Wu, Tong and Mahloujifar, Saeed and Mittal, Prateek},
  booktitle={Proceedings of the 32nd USENIX Security Symposium (USENIX Security 23)},
  pages={1685--1702},
  year={2023}
}

@inproceedings{pan2023asset,
  title={$\{$ASSET$\}$: Robust backdoor data detection across a multiplicity of deep learning paradigms},
  author={Pan, Minzhou and Zeng, Yi and Lyu, Lingjuan and Lin, Xue and Jia, Ruoxi},
  booktitle={Proceedings of the 32nd USENIX Security Symposium (USENIX Security 23)},
  pages={2725--2742},
  year={2023}
}

@inproceedings{wang2019neural,
  title={Neural cleanse: Identifying and mitigating backdoor attacks in neural networks},
  author={Wang, Bolun and Yao, Yuanshun and Shan, Shawn and Li, Huiying and Viswanath, Bimal and Zheng, Haitao and Zhao, Ben Y},
  booktitle={Proceedings of the 2019 IEEE symposium on security and privacy (S\&P)},
  pages={707--723},
  year={2019}
}

@inproceedings{liu2019abs,
  title={Abs: Scanning neural networks for back-doors by artificial brain stimulation},
  author={Liu, Yingqi and Lee, Wen-Chuan and Tao, Guanhong and Ma, Shiqing and Aafer, Yousra and Zhang, Xiangyu},
  booktitle={Proceedings of the 2019 ACM SIGSAC Conference on Computer and Communications Security (CCS)},
  pages={1265--1282},
  year={2019}
}

@inproceedings{liu2017neural,
  title={Neural trojans},
  author={Liu, Yuntao and Xie, Yang and Srivastava, Ankur},
  booktitle={Proceedings of the 2017 IEEE International Conference on Computer Design (ICCD)},
  pages={45--48},
  year={2017}
}

@inproceedings{liu2018fine,
  title={Fine-pruning: Defending against backdooring attacks on deep neural networks},
  author={Liu, Kang and Dolan-Gavitt, Brendan and Garg, Siddharth},
  booktitle={Proceedings of the International symposium on research in attacks, intrusions, and defenses (RAID)},
  pages={273--294},
  year={2018}
}

@article{tran2018spectral,
  title={Spectral signatures in backdoor attacks},
  author={Tran, Brandon and Li, Jerry and Madry, Aleksander},
  journal={Advances in neural information processing systems},
  volume={31},
  year={2018}
}

@article{xue2020one,
  title={One-to-N \& N-to-One: Two advanced backdoor attacks against deep learning models},
  author={Xue, Mingfu and He, Can and Wang, Jian and Liu, Weiqiang},
  journal={IEEE Transactions on Dependable and Secure Computing},
  volume={19},
  number={3},
  pages={1562--1578},
  year={2020}
}

@inproceedings{cohen2019certified,
  title={Certified adversarial robustness via randomized smoothing},
  author={Cohen, Jeremy and Rosenfeld, Elan and Kolter, Zico},
  booktitle={Proceedings of the International conference on machine learning (ICML)},
  pages={1310--1320},
  year={2019}
}

@inproceedings{he2016deep,
  title={Deep residual learning for image recognition},
  author={He, Kaiming and Zhang, Xiangyu and Ren, Shaoqing and Sun, Jian},
  booktitle={Proceedings of the IEEE conference on computer vision and pattern recognition (CVPR)},
  pages={770--778},
  year={2016}
}

@inproceedings{cui2019class,
  title={Class-balanced loss based on effective number of samples},
  author={Cui, Yin and Jia, Menglin and Lin, Tsung-Yi and Song, Yang and Belongie, Serge},
  booktitle={Proceedings of the IEEE/CVF conference on computer vision and pattern recognition (CVPR)},
  pages={9268--9277},
  year={2019}
}

@article{buda2018systematic,
  title={A systematic study of the class imbalance problem in convolutional neural networks},
  author={Buda, Mateusz and Maki, Atsuto and Mazurowski, Maciej A},
  journal={Neural networks},
  volume={106},
  pages={249--259},
  year={2018},
  publisher={Elsevier}
}

@article{xiang2023cbd,
  title={Cbd: A certified backdoor detector based on local dominant probability},
  author={Xiang, Zhen and Xiong, Zidi and Li, Bo},
  journal={Advances in Neural Information Processing Systems},
  volume={36},
  pages={4937--4951},
  year={2023}
}

@inproceedings{kong2020physgan,
  title={Physgan: Generating physical-world-resilient adversarial examples for autonomous driving},
  author={Kong, Zelun and Guo, Junfeng and Li, Ang and Liu, Cong},
  booktitle={Proceedings of the IEEE/CVF conference on computer vision and pattern recognition (CVPR)},
  pages={14254--14263},
  year={2020}
}

@article{wen2022deep,
  title={Deep learning-based perception systems for autonomous driving: A comprehensive survey},
  author={Wen, Li-Hua and Jo, Kang-Hyun},
  journal={Neurocomputing},
  volume={489},
  pages={255--270},
  year={2022},
  publisher={Elsevier}
}

@inproceedings{yang2021larnet,
  title={Larnet: Lie algebra residual network for face recognition},
  author={Yang, Xiaolong and Jia, Xiaohong and Gong, Dihong and Yan, Dong-Ming and Li, Zhifeng and Liu, Wei},
  booktitle={Proceedings of the International conference on machine learning (ICML)},
  pages={11738--11750},
  year={2021}
}

@article{goldblum2022dataset,
  title={Dataset security for machine learning: Data poisoning, backdoor attacks, and defenses},
  author={Goldblum, Micah and Tsipras, Dimitris and Xie, Chulin and Chen, Xinyun and Schwarzschild, Avi and Song, Dawn, Aleksander and Li, Bo and Goldstein, Tom},
  journal={IEEE Transactions on Pattern Analysis and Machine Intelligence},
  volume={45},
  number={2},
  pages={1563--1580},
  year={2022}
}

@book{fernandez2018learning,
  title={Learning from imbalanced data sets},
  author={Fern{\'a}ndez, Alberto and Garc{\'\i}a, Salvador and Galar, Mikel and Prati, Ronaldo C and Krawczyk, Bartosz and Herrera, Francisco},
  volume={10},
  year={2018},
  publisher={Springer}
}

@article{guo2023scale,
  title={Scale-up: An efficient black-box input-level backdoor detection via analyzing scaled prediction consistency},
  author={Guo, Junfeng and Li, Yiming and Chen, Xun and Guo, Hanqing and Sun, Lichao and Liu, Cong},
  journal={arXiv preprint arXiv:2302.03251
        
        
        
        
        
        
        
        
        
        
        
        },
  year={2023}
}

@article{pal2024backdoor,
  title={Backdoor secrets unveiled: Identifying backdoor data with optimized scaled prediction consistency},
  author={Pal, Soumyadeep and Yao, Yuguang and Wang, Ren and Shen, Bingquan and Liu, Sijia},
  journal={arXiv preprint arXiv:2403.10717
        
        
        
        
        
        
        
        
        
        
        
        
        
        
        
        
        
        
        
        
        
        },
  year={2024}
}

@inproceedings{weber2023rab,
  title={Rab: Provable robustness against backdoor attacks},
  author={Weber, Maurice and Xu, Xiaojun and Karla{\v{s}}, Bojan and Zhang, Ce and Li, Bo},
  booktitle={Proceedings of the 2023 IEEE Symposium on Security and Privacy (S\&P)},
  pages={1311--1328},
  year={2023}
}

@article{xiang2020detection,
  title={Detection of backdoors in trained classifiers without access to the training set},
  author={Xiang, Zhen and Miller, David J and Kesidis, George},
  journal={IEEE Transactions on Neural Networks and Learning Systems},
  volume={33},
  number={3},
  pages={1177--1191},
  year={2020}
}

@article{angelopoulos2023conformal,
  title={Conformal prediction: A gentle introduction},
  author={Angelopoulos, Anastasios N and Bates, Stephen and others},
  journal={Foundations and Trends{\textregistered} in Machine Learning},
  volume={16},
  number={4},
  pages={494--591},
  year={2023},
  publisher={Now Publishers, Inc.}
}

@book{vovk2005algorithmic,
  title={Algorithmic learning in a random world},
  author={Vovk, Vladimir and Gammerman, Alexander and Shafer, Glenn},
  volume={29},
  year={2005},
  publisher={Springer}
}

@inproceedings{qi2023revisiting,
  title={Revisiting the assumption of latent separability for backdoor defenses},
  author={Qi, Xiangyu and Xie, Tinghao and Li, Yiming and Mahloujifar, Saeed and Mittal, Prateek},
  booktitle={Proceedings of the International Conference on Learning Representations (ICLR)},
  year={2023}
}

@article{anusudha2024real,
  title={Real time face recognition system based on YOLO and InsightFace},
  author={Anusudha, K and others},
  journal={Multimedia Tools and Applications},
  volume={83},
  number={11},
  pages={31893--31910},
  year={2024},
  publisher={Springer}
}

@article{aguiar2024survey,
  title={A survey on learning from imbalanced data streams: taxonomy, challenges, empirical study, and reproducible experimental framework},
  author={Aguiar, Gabriel and Krawczyk, Bartosz and Cano, Alberto},
  journal={Machine learning},
  volume={113},
  number={7},
  pages={4165--4243},
  year={2024},
  publisher={Springer}
}

@inproceedings{vovk2012conditional,
  title={Conditional validity of inductive conformal predictors},
  author={Vovk, Vladimir},
  booktitle={Proceedings of the Asian conference on machine learning (ACML)},
  pages={475--490},
  year={2012},
  organization={PMLR}
}

@inproceedings{rosenfeld2020certified,
  title={Certified robustness to label-flipping attacks via randomized smoothing},
  author={Rosenfeld, Elan and Winston, Ezra and Ravikumar, Pradeep and Kolter, Zico},
  booktitle={Proceedings of the International conference on machine learning (ICML)},
  pages={8230--8241},
  year={2020}
}

@inproceedings{jia2022certified,
  title={Certified robustness of nearest neighbors against data poisoning and backdoor attacks},
  author={Jia, Jinyuan and Liu, Yupei and Cao, Xiaoyu and Gong, Neil Zhenqiang},
  booktitle={Proceedings of the AAAI Conference on Artificial Intelligence},
  volume={36},
  pages={9575--9583},
  year={2022}
}

@article{levine2020deep,
  title={Deep partition aggregation: Provable defense against general poisoning attacks},
  author={Levine, Alexander and Feizi, Soheil},
  journal={arXiv preprint arXiv:2006.14768
        
        
        
        
        
        
        
        
        
        
        
        
        
        
        
        
        
        
        
        
        
        
        
        
        
        },
  year={2020}
}

@article{cina2024machine,
  title={Machine learning security against data poisoning: Are we there yet?},
  author={Cin{\`a}, Antonio Emanuele and Grosse, Kathrin and Demontis, Ambra and Biggio, Battista and Roli, Fabio and Pelillo, Marcello},
  journal={Computer},
  volume={57},
  number={3},
  pages={26--34},
  year={2024},
  publisher={IEEE}
}

@article{oprea2022poisoning,
  title={Poisoning attacks against machine learning: Can machine learning be trustworthy?},
  author={Oprea, Alina and Singhal, Anoop and Vassilev, Apostol},
  journal={Computer},
  volume={55},
  number={11},
  pages={94--99},
  year={2022},
  publisher={IEEE}
}

@article{wang2020attack,
  title={Attack of the tails: Yes, you really can backdoor federated learning},
  author={Wang, Hongyi and Sreenivasan, Kartik and Rajput, Shashank and Vishwakarma, Harit and Agarwal, Saurabh and Sohn, Jy-yong and Lee, Kangwook and Papailiopoulos, Dimitris},
  journal={Advances in neural information processing systems},
  volume={33},
  pages={16070--16084},
  year={2020}
}

@article{pang2024long,
  title={Long-Tailed Backdoor Attack Using Dynamic Data Augmentation Operations},
  author={Pang, Lu and Sun, Tao and Lyu, Weimin and Ling, Haibin and Chen, Chao},
  journal={arXiv preprint arXiv:2410.12955
        
        
        
        
        
        
        
        
        
        
        
        
        
        
        
        
        
        
        
        
        
        
        
        
        
        
        
        
        
        
        
        
        
        
        
        
        
        
        
        
        
        
        
        
        
        
        
        
        
        
        
        },
  year={2024}
}

@inproceedings{aljanabi2023data,
  title={Data poisoning: issues, challenges, and needs},
  author={Aljanabi, Mohammad and Omran, Alaa Hamza and Mijwil, Maad M and Abotaleb, Mostafa and El-kenawy, El-Sayed M and Mohammed, Sahar Yousif and Ibrahim, Abdelhameed},
  booktitle={Proceedings of the 7th IET Smart Cities Symposium (SCS 2023)},
  volume={2023},
  pages={359--363},
  year={2023},
  organization={IET}
}

@inproceedings{gao2019strip,
  title={Strip: A defence against trojan attacks on deep neural networks},
  author={Gao, Yansong and Xu, Change and Wang, Derui and Chen, Shiping and Ranasinghe, Damith C and Nepal, Surya},
  booktitle={Proceedings of the 35th annual computer security applications conference (ACSAC)},
  pages={113--125},
  year={2019}
}

@article{wang2020certifying,
  title={On certifying robustness against backdoor attacks via randomized smoothing},
  author={Wang, Binghui and Cao, Xiaoyu and Gong, Neil Zhenqiang and others},
  journal={arXiv preprint arXiv:2002.11750
        
        
        
        
        
        
        
        
        
        
        
        
        
        
        
        
        
        
        
        
        
        
        
        
        
        
        
        },
  year={2020}
}

@inproceedings{xie2021crfl,
  title={Crfl: Certifiably robust federated learning against backdoor attacks},
  author={Xie, Chulin and Chen, Minghao and Chen, Pin-Yu and Li, Bo},
  booktitle={Proceedings of the International conference on machine learning (ICML)},
  pages={11372--11382},
  year={2021}
}

@article{menon2020long,
  title={Long-tail learning via logit adjustment},
  author={Menon, Aditya Krishna and Jayasumana, Sadeep and Rawat, Ankit Singh and Jain, Himanshu and Veit, Andreas and Kumar, Sanjiv},
  journal={arXiv preprint arXiv:2007.07314
        
        
        
        
        
        
        
        
        
        
        
        
        
        
        
        
        
        
        
        
        
        
        
        
        
        
        
        
        
        
        
        
        
        
        
        
        
        
        
        
        
        
        
        
        
        
        
        
        
        
        
        
        
        
        
        
        
        
        
        
        
        },
  year={2020}
}

@article{dosovitskiy2020image,
  title={An image is worth 16x16 words: Transformers for image recognition at scale},
  author={Dosovitskiy, Alexey},
  journal={arXiv preprint arXiv:2010.11929
        
        
        
        
        
        
        
        
        
        
        
        
        
        
        
        
        
        
        
        },
  year={2020}
}

@inproceedings{zeng2023narcissus,
  title={Narcissus: A practical clean-label backdoor attack with limited information},
  author={Zeng, Yi and Pan, Minzhou and Just, Hoang Anh and Lyu, Lingjuan and Qiu, Meikang and Jia, Ruoxi},
  booktitle={Proceedings of the 2023 ACM SIGSAC Conference on Computer and Communications Security (CCS)},
  pages={771--785},
  year={2023}
}

@inproceedings{Wang2024MMBD,
  author    = {Hang Wang and Zhen Xiang and David J. Miller and George Kesidis},
  title     = {MM-BD: Post-Training Detection of Backdoor Attacks with Arbitrary Backdoor Pattern Types Using a Maximum Margin Statistic},
  booktitle = {Proceedings of the 45th IEEE Symposium on Security and Privacy (S\&P)},
  year      = {2024},
  pages     = {1994--2012},
  doi       = {10.1109/SP54263.2024.00015}
}

@article{hu2024bbcal,
  title={BBCaL: Black-box Backdoor Detection under the Causality Lens},
  author={Hu, Mengxuan and Guan, Zihan and Guo, Junfeng and Zhou, Zhongliang and Zhang, Jielu and Li, Sheng},
  journal={Transactions on Machine Learning Research},
  year={2024}
}

@article{hou2024ibd,
  title={Ibd-psc: Input-level backdoor detection via parameter-oriented scaling consistency},
  author={Hou, Linshan and Feng, Ruili and Hua, Zhongyun and Luo, Wei and Zhang, Leo Yu and Li, Yiming},
  journal={arXiv preprint arXiv:2405.09786
        
        
        
        
        
        
        
        
        
        
        
        
        
        
        
        
        
        
        
        
        
        
        
        
        
        
        
        
        
        
        
        
        
        
        
        
        
        
        
        
        
        
        
        
        
        
        
        
        
        
        
        
        
        
        
        
        
        
        
        
        
        
        
        
        
        
        
        
        
        
        
        },
  year={2024}
}

@inproceedings{athalye2018obfuscated,
  title={Obfuscated gradients give a false sense of security: Circumventing defenses to adversarial examples},
  author={Athalye, Anish and Carlini, Nicholas and Wagner, David},
  booktitle={Proceedings of the International conference on machine learning (ICML)},
  pages={274--283},
  year={2018}
}

@article{Zhong2025SacFL,
  author  = {Zhong, Zhuang and Bao, Wen and Wang, Jun and Chen, Jie and Lyu, Lingjuan and Lim, W. Y. B.},
  title   = {SacFL: Self-adaptive federated continual learning for resource-constrained end devices},
  journal = {IEEE Transactions on Neural Networks and Learning Systems},
  year    = {2025},
  note    = {Early Access}
}

@inproceedings{Liu2022AutoAttack,
  author    = {Yinpeng Liu and Yao Cheng and Li Gao and Xiaoyu Liu and Qiang Zhang and Junchi Song},
  title     = {Practical evaluation of adversarial robustness via adaptive auto attack},
  booktitle = {Proceedings of the IEEE/CVF Conference on Computer Vision and Pattern Recognition (CVPR)},
  pages     = {15105--15114},
  year      = {2022}
}

@inproceedings{mo2024robust,
  title={Robust backdoor detection for deep learning via topological evolution dynamics},
  author={Mo, Xiaoxing and Zhang, Yechao and Zhang, Leo Yu and Luo, Wei and Sun, Nan and Hu, Shengshan and Gao, Shang and Xiang, Yang},
  booktitle={Proceedings of the 2024 IEEE Symposium on Security and Privacy (S\&P)},
  pages={2048--2066},
  year={2024}
}

@article{wang2025maximum,
  title={Maximum Margin Based Activation Clipping for Post-Training Overfitting Mitigation in DNN Classifiers},
  author={Wang, Hang and Miller, David J and Kesidis, George},
  journal={IEEE Transactions on Artificial Intelligence},
  year={2025},
  publisher={IEEE}
}

@inproceedings{wang2024improved,
  title={Improved activation clipping for universal backdoor mitigation and test-time detection},
  author={Wang, Hang and Xiang, Zhen and Miller, David J and Kesidis, George},
  booktitle={Proceedings of the IEEE 34th International Workshop on Machine Learning for Signal Processing (MLSP)},
  pages={1--6},
  year={2024}
}

@article{xiang2021reverse,
  title={Reverse engineering imperceptible backdoor attacks on deep neural networks for detection and training set cleansing},
  author={Xiang, Zhen and Miller, David J and Kesidis, George},
  journal={Computers \& Security},
  volume={106},
  pages={102280},
  year={2021},
  publisher={Elsevier}
}

@inproceedings{vanhorn2018inaturalist,
  title     = {The iNaturalist Species Classification and Detection Dataset},
  author    = {Van Horn, Grant and Mac Aodha, Oisin and Song, Yang and Cui, Yin and Sun, Chen and Shepard, Alex and Adam, Hartwig and Perona, Pietro and Belongie, Serge},
  booktitle = {Proceedings of the IEEE Conference on Computer Vision and Pattern Recognition (CVPR)},
  year      = {2018},
  pages     = {8769--8778}
}
\bibliographystyle{iclr2025_conference}

\newpage
\appendix
\onecolumn

\section*{Appendices} 
\addcontentsline{toc}{section}{Appendix} 
\subsection*{Contents}
\addcontentsline{toc}{subsection}{List of Appendix Contents}
\vskip3pt\hrule\vskip5pt
\begin{itemize}
    \setlength\itemsep{1em}
    \item \hyperref[theoryproof]{\textbf{Section A: Proof of Theorems}}
    \begin{itemize}
    \setlength\itemsep{1em}
    \item \hyperref[theoryproof:1]{A.1: Proof of Theorem 5.1}
    \item \hyperref[theoryproof:2]{A.2: Proof of Corollary 5.2}
    \item \hyperref[theoryproof:3]{A.3: Proof of Theorem 5.3}
    \item \hyperref[theoryproof:4]{A.4: Proof of Theorem 5.4}
    \end{itemize}
    \item \hyperref[experiments]{\textbf{Section B: Details for the Experimental Setting}}
    \begin{itemize}
    \setlength\itemsep{1em}
    \item \hyperref[imbalanceddatasets]{B.1: Details for the Imbalanced Datasets}
    \item \hyperref[posionedsamples]{B.2: Examples of Poisoned Samples}
    \item \hyperref[Training Model Architecture]{B.3: Details for the Model Training}
    \end{itemize}
    \item \hyperref[moreexperiments]{\textbf{Section C: More Experiments}}
    \begin{itemize}
    \setlength\itemsep{1em}
    \item \hyperref[asr with same p]{C.1: Attack Success Rate with Same Poisoning Ratio}
    \item \hyperref[asr when l2lessthan0.8]{C.2: Attack Success Rate with Smaller Trigger Size}
    \item \hyperref[relativefrequencydistribution]{C.3: Relative Frequency Distributions of RPP}
    \item \hyperref[assumption]{C.4: Empirical Validation for Assumption (A.1)}
    \item \hyperref[different_noisesamples]{C.5: Performance of Different Noise Samples}
    \item \hyperref[different_architecture]{C.6: Performance on Different Model Architectures}
    \item \hyperref[Different Validate Set Size]{C.7: Impact of  Calibration Set Size}
    \item \hyperref[mnist_performance]{C.8: Performance on MNIST}
    \item \hyperref[iNaturalist_performance]{C.9: Performance on iNaturalist}
    \item \hyperref[imbalanced validate set]{C.10: Performance under Imbalanced Calibration Set}
    \item \hyperref[adaptive_attacks]{C.11: Resistance to Potential Adaptive Attacks}
    \item \hyperref[algorithm]{C.12: Algorithm for RPP Detection and Certification}
    \item \hyperref[Backdoor Sample Detection]{C.13: Baselines for Backdoor Sample Detection}
    \item \hyperref[Certified Defense]{C.14: Certified Defenses under Imbalanced Dataset}
    \item \hyperref[Different Target label]{C.15: Impact of Target Label Class}
    \item \hyperref[Logits Adjustment]{C.16: Performance under Logits-Adjusted Training}
    \item \hyperref[Different Trigger Size]{C.17: Performance against Different Trigger Sizes}
    \item \hyperref[higher t increase the fpr]{C.18: Validation of the FPR Upper Bound Guarantee}
    \end{itemize}
    \item \hyperref[differencewithcbd]{\textbf{Section D: Differences between RPP and CBD}
    \item \hyperref[complexity]{\textbf{Section E: Computational Complexity}}
    \item \hyperref[impact]{\textbf{Section F: Broader Impact}}
    \item \hyperref[limitations]{\textbf{Section G: Limitations \& Future Work}}
    \item \hyperref[llms]{\textbf{Section H: Use of Large Language Models (LLMs)}}
    }
\end{itemize}

\vskip3pt\hrule\vskip5pt
\clearpage

\section{Proof of Theorems} \label{theoryproof}

\subsection{Proof of \cref{them:main_1}}\label{theoryproof:1}
We restate \cref{them:main_1} as \cref{them:restated_main_1} to enhance clarity and convenience. The proof is motivated by the approach in \citet{cohen2019certified}, which leverages Lemma 4. For completeness, we restate this lemma as \cref{lem:np} and introduce \cref{assumption:1} (Assumption A.1) below.
\begin{theorem}[\cref{them:main_1}]\label{them:restated_main_1}
    Let $f(\cdot\mid w):\Xcal\rightarrow\Ycal$ be the classifier with the parameter $w$. Let $y_t$ be the target class of the attacker. Define 
        \begin{align}
            p(x) = p_{y_t}(x+\delta \mid w)  
        \end{align}
    Suppose that for any specific $x\in\Xcal$ and classes $y_t \in\Ycal$, there exist 
    \begin{align}
        \Pbb(f(x+\varepsilon \mid w)=y_t) \leq \overline{p_t} 
        \label{eq:p_t&p_b}
    \end{align}
    Let $\zeta(x,\delta)$ represent the upper bound of $\Delta P(x+\delta\mid w,\sigma)$ and suppose \cref{assumption:1} is satisfied. A sample attacked by a backdoor with a trigger $\delta$ and the target class $y_t$ is guaranteed to be detected if 
    \begin{align}
    \|\delta\|_2 \geq \sigma \!\left(\!\Phi^{-1}(p(x)-\zeta(x,\delta)\!\right)- \!\Phi^{-1}\!\left(\overline{p_t}\right)\!)   
    \end{align}
    where $\Phi^{-1}$ is the inverse of the standard normal cumulative distribution function.
\end{theorem}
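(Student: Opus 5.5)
The plan follows the randomized-smoothing argument of \citet{cohen2019certified}. The idea is to sandwich the smoothed target-class probability at the triggered point $x+\delta$ between two bounds and read off the condition on $\|\delta\|_2$. Write $g(z)=\Pbb(f(z+\varepsilon\mid w)=y_t)=p_{y_t}(z+\varepsilon\mid w)$ averaged over $\varepsilon\sim\Ncal(0,\sigma^2\Ibf)$. Detection of $x+\delta$ means its score $\Delta P(x+\delta\mid w,\sigma)$ falls at or below the threshold. That threshold is represented in the statement by $\zeta(x,\delta)$, which is estimated in practice by the conformal quantile $\hat q$.

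\emph{Step 1 (upper bound via Neyman--Pearson).} Apply the restated Lemma 4 of \citet{cohen2019certified} (\cref{lem:np}) to the base point $x$ and the shifted point $x+\delta$. By hypothesis \eqref{eq:p_t&p_b}, $g(x)\le\overline{p_t}$. The worst case over all classifiers is a half-space, which gives
\begin{align*}
\Ebb_\varepsilon\, p_{y_t}(x+\delta+\varepsilon\mid w)=g(x+\delta)\le \Phi\!\left(\Phi^{-1}(\overline{p_t})+\frac{\|\delta\|_2}{\sigma}\right).
\end{align*}

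\emph{Step 2 (lower bound via Assumption A.1).} Take the $\ell_\infty$ norm in \eqref{eq:RPP} and keep only the $y_t$ coordinate. Assumption A.1 says $p_{y_t}(x+\delta\mid w)-p_{y_t}(x+\delta+\varepsilon\mid w)>0$, so the absolute value can be dropped, giving
\begin{align*}
\Delta P(x+\delta\mid w,\sigma)\ \ge\ p(x)-\Ebb_\varepsilon\, p_{y_t}(x+\delta+\varepsilon\mid w).
\end{align*}
Combining this with Step 1 yields
\begin{align*}
p(x)-\Delta P(x+\delta\mid w,\sigma)\ \le\ \Phi\!\left(\Phi^{-1}(\overline{p_t})+\frac{\|\delta\|_2}{\sigma}\right).
\end{align*}

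\emph{Step 3 (translate into the radius).} Substitute the upper bound $\zeta(x,\delta)$ for the score. Since $\Phi$ is monotone, the requirement $p(x)-\zeta(x,\delta)\le\Phi\big(\Phi^{-1}(\overline{p_t})+\|\delta\|_2/\sigma\big)$ is equivalent to
\begin{align*}
\|\delta\|_2\ \ge\ \sigma\left(\Phi^{-1}\big(p(x)-\zeta(x,\delta)\big)-\Phi^{-1}(\overline{p_t})\right),
\end{align*}
which is \eqref{eq:lowerbound}.

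\emph{The main obstacle is the logical direction of Step 3.} The Neyman--Pearson inequality gives a \emph{lower} bound on the poisoned score, whereas detection requires the score to be \emph{small}. So the argument cannot derive $\Delta P\le\zeta$ from these inequalities alone. It must use the definition of $\zeta(x,\delta)$ as an upper bound on $\Delta P(x+\delta\mid w,\sigma)$, calibrated to the detection threshold, to get $\Delta P(x+\delta\mid w,\sigma)\le\zeta(x,\delta)$. It must then show that \eqref{eq:lowerbound} is exactly the condition under which this is consistent with the forced lower bound $p(x)-\Phi\big(\Phi^{-1}(\overline{p_t})+\|\delta\|_2/\sigma\big)$. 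When \eqref{eq:lowerbound} fails, the trigger is too small for any score below $\zeta$ to be attainable, and detection cannot be certified.

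I would state this interpretation of ``guaranteed to be detected'' explicitly. I would also check the side conditions needed for $\Phi^{-1}$ to be well defined, namely $p(x)-\zeta(x,\delta)\in(0,1)$ and $\overline{p_t}\in(0,1)$.
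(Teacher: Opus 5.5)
Your proposal is essentially the paper's proof: the same half-space $A$ with $\Pbb(x+\varepsilon\in A)=\overline{p_t}$ fed into part~2 of \cref{lem:np}, the same restriction of the $\ell_\infty$ norm to the $y_t$ coordinate, and the same substitution of $\zeta(x,\delta)$ followed by monotone inversion of $\Phi$. Two minor differences: since $|a|\ge a$ holds unconditionally, Assumption~(A.1) is not actually needed at that step, and you apply the Neyman--Pearson bound to the averaged quantity $g(x+\delta)$ rather than pointwise in $\varepsilon$, which is slightly cleaner than the paper. Your caveat about the logical direction is correct and applies equally to the paper: its proof also starts from $\Delta P(x+\delta\mid w,\sigma)\le\zeta(x,\delta)$ and derives \eqref{eq:lowerbound} as a consequence, a necessary consistency condition rather than a sufficient condition for detection, so it proves nothing beyond what your argument does.
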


\begin{lemma}[Lemma 4, \citep{cohen2019certified}]\label{lem:np}
    Let $X \sim \mathcal{N}(x, \sigma^2 I)$ and $Y \sim \mathcal{N}(x + \delta, \sigma^2 I)$. 
Let $h : \mathbb{R}^d \to \{0,1\}$ be any deterministic or random function. Then:
\begin{enumerate}
    \item If $S = \{ z \in \mathbb{R}^d : \delta^\top z \leq \beta \}$ for some $\beta$ and $\mathbb{P}(h(X) = 1) \geq \mathbb{P}(X \in S)$, then $\mathbb{P}(h(Y) = 1) \geq \mathbb{P}(Y \in S)$.
    \item If $S = \{ z \in \mathbb{R}^d : \delta^\top z \geq \beta \}$ for some $\beta$ and $\mathbb{P}(h(X) = 1) \leq \mathbb{P}(X \in S)$, then $\mathbb{P}(h(Y) = 1) \leq \mathbb{P}(Y \in S)$.
\end{enumerate}
\end{lemma}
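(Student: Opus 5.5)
The plan is to recognize the lemma as the Neyman--Pearson lemma specialized to two isotropic Gaussians with the same covariance. For such Gaussians the likelihood ratio is a monotone function of the linear statistic $\delta^\top z$, so half-spaces in the direction $\delta$ are exactly the likelihood-ratio level sets. Concretely, let $\mu_X,\mu_Y$ be the laws of $X,Y$ with densities $\phi_X,\phi_Y$. Expanding the Gaussian exponents gives
\begin{align}
r(z):=\frac{\phi_Y(z)}{\phi_X(z)}=\exp\!\Big(\frac{\delta^\top z-\delta^\top x}{\sigma^2}-\frac{\|\delta\|_2^2}{2\sigma^2}\Big),
\end{align}
which is strictly increasing in $\delta^\top z$ whenever $\delta\neq \mathbf{0}$.

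The case $\delta=\mathbf{0}$ is trivial, since then $X$ and $Y$ have the same law and both claims are immediate. For $\delta\neq\mathbf{0}$, set $t:=\exp\big((\beta-\delta^\top x)/\sigma^2-\|\delta\|_2^2/(2\sigma^2)\big)>0$. Then the set $S=\{\delta^\top z\le\beta\}$ of part 1 satisfies $r\le t$ on $S$ and $r>t$ on $S^c$.

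To cover a random $h$, replace it by its conditional success probability $\bar h(z):=\Pbb(h(z)=1)\in[0,1]$. This is legitimate because the internal randomness of $h$ is independent of the input, so $\Pbb(h(X)=1)=\int \bar h\,d\mu_X$ and likewise for $Y$.

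For part 1, write $g=\1_S$. The key pointwise observation is $(\bar h-g)(r-t)\ge 0$ everywhere. On $S$ we have $\bar h-g\le 0$ and $r-t\le 0$; off $S$ we have $\bar h-g\ge 0$ and $r-t>0$. Integrating against $\mu_X$ then gives
\begin{align}
\Pbb(h(Y)=1)-\Pbb(Y\in S)=\int(\bar h-g)\,r\,d\mu_X\;\ge\; t\int(\bar h-g)\,d\mu_X = t\big(\Pbb(h(X)=1)-\Pbb(X\in S)\big)\ge 0,
\end{align}
where the final inequality uses the hypothesis and $t>0$. This proves part 1.

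For part 2 I will argue symmetrically, with $S=\{\delta^\top z\ge\beta\}$. Now $r\ge t$ on $S$ and $r<t$ off $S$, and the hypothesis $\Pbb(h(X)=1)\le \Pbb(X\in S)$ makes $(\bar h-g)(r-t)\le 0$ pointwise. The same integration then yields $\Pbb(h(Y)=1)\le \Pbb(Y\in S)$.

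An alternative for part 2 is to apply part 1 to $1-h$ and the complementary half-space. That route needs the boundary hyperplane $\{\delta^\top z=\beta\}$ to have Gaussian measure zero, which holds since $\delta\neq \mathbf{0}$.

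The main point of care, rather than a real obstacle, is bookkeeping. One must handle randomized $h$ through $\bar h$, treat the degenerate $\delta=\mathbf{0}$ separately, and track strict versus non-strict inequalities at the boundary. All the substance is in the pointwise sign inequality combined with the monotonicity of $r$ in $\delta^\top z$.
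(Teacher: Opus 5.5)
Your proof is correct, and it is essentially the argument behind the cited result: the paper gives no proof of its own and imports this lemma from \citet{cohen2019certified}, where it is obtained by noting that the isotropic-Gaussian likelihood ratio is increasing in $\delta^\top z$ (so these half-spaces are likelihood-ratio level sets) and invoking a Neyman--Pearson lemma whose proof is the same pointwise comparison against $t$ that you inline, with a randomized $h$ handled through $z\mapsto\Pbb(h(z)=1)$. One wording fix: in part 2 the sign condition $(\bar h-g)(r-t)\le 0$ holds pointwise with no hypothesis needed (it follows from $0\le\bar h\le 1$ and the shape of $S$), and the hypothesis $\Pbb(h(X)=1)\le\Pbb(X\in S)$ enters only in the final inequality, exactly as in part 1.
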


\begin{assumption}\label{assumption:1}
    Let $y_t$ be the target class of the attack. 
    We assume the following inequality holds:
    \begin{align}
        p_{y_t}(x+\delta+\varepsilon \mid w)<p_{y_t}(x+\delta \mid w) 
    \end{align} 
\end{assumption}

\begin{proof}[Proof of \cref{them:restated_main_1}]

We define the following half-spaces:
\begin{align}
    A &:= \{ z : \delta^\top (z - x) \geq \sigma \|\delta\|_2 \Phi^{-1}(1-\overline{p_t}) \} \label{eq:half-space-A}  
\end{align}
The following lemma will be used in the proof, with its proof provided at the end.
\begin{lemma}\label{lem:claims}
    Let $\overline{p_t} \in(0,1)$ is a number defined in \eqref{eq:p_t&p_b} and $A$ are defined by \eqref{eq:half-space-A}. The following equality holds:
    \begin{align}
        \Pbb(x+\varepsilon\in A)= \overline{p_t}
    \end{align}    
\end{lemma}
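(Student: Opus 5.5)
The plan is a direct computation: reduce the event $\{x+\varepsilon\in A\}$ to a tail event of a one-dimensional standard Gaussian. By the definition of $A$ in \eqref{eq:half-space-A}, substituting $z=x+\varepsilon$ gives
\begin{align}
    \{x+\varepsilon\in A\} = \{\delta^\top\varepsilon \geq \sigma\|\delta\|_2\,\Phi^{-1}(1-\overline{p_t})\}.
\end{align}
So the whole argument rests on the distribution of the scalar projection $\delta^\top\varepsilon$.

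First, since $\varepsilon\sim\Ncal(0,\sigma^2\Ibf)$ and $\delta$ is fixed, the random variable $\delta^\top\varepsilon$ is a linear functional of an isotropic Gaussian. Hence $\delta^\top\varepsilon\sim\Ncal(0,\sigma^2\|\delta\|_2^2)$. Here I will assume $\delta\neq\b0$, which holds for any genuine trigger; if $\delta=\b0$ the half-space degenerates and the statement is vacuous for the theorem anyway. Next, I standardize by setting $Z=\delta^\top\varepsilon/(\sigma\|\delta\|_2)\sim\Ncal(0,1)$. Dividing both sides of the inequality by the positive quantity $\sigma\|\delta\|_2$ rewrites the event as $\{Z\geq \Phi^{-1}(1-\overline{p_t})\}$.

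Finally, because $\overline{p_t}\in(0,1)$, the quantile $\Phi^{-1}(1-\overline{p_t})$ is a finite real number. By continuity of $\Phi$,
\begin{align}
    \Pbb(x+\varepsilon\in A) = 1-\Phi\big(\Phi^{-1}(1-\overline{p_t})\big) = 1-(1-\overline{p_t}) = \overline{p_t}.
\end{align}
Equivalently, one can use the symmetry $\Phi^{-1}(1-\overline{p_t})=-\Phi^{-1}(\overline{p_t})$. That form is the one that will later combine with the shifted Gaussian $x+\delta+\varepsilon$ and produce the $\Phi^{-1}(\overline{p_t})$ term in the certified radius.

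There is no real obstacle here. The only points needing care are the nondegeneracy conditions: $\delta\neq\b0$ so that the division is legitimate, and $\overline{p_t}\in(0,1)$ so that the quantile is finite. It is also worth being careful with the direction of the inequality, since the half-space is written with $\geq$. Continuity of the Gaussian law means that using $\geq$ versus $>$ does not affect the probability.
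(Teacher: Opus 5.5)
Your proof is correct and follows the same direct computation as the paper: project $\varepsilon$ onto $\delta$, standardize to $Z\sim\Ncal(0,1)$, and evaluate $1-\Phi(\Phi^{-1}(1-\overline{p_t}))=\overline{p_t}$. Your explicit hypothesis $\delta\neq\b0$ is genuinely needed and is only tacit in the paper: for $\delta=\b0$ the set $A$ is all of $\Rbb^d$, so the probability would be $1$ rather than $\overline{p_t}$.
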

According to \cref{lem:claims} and the definitions of $\overline{p_t}$, we have that
\begin{align}
    \Pbb(f(x+\varepsilon \mid w)=y_t) \leq \Pbb(x+\varepsilon\in A) 
\end{align}
which implies the following:
\begin{align}
    p_{y_t}(x+\delta+\varepsilon \mid w) = \Pbb(f(x+\delta+\varepsilon \mid w)=y_t)\leq \Pbb(x+\delta+\varepsilon\in A)  \label{eq:x+delta+epsion_A_upper}
\end{align}
by applying \cref{lem:np} with $h(z):=\bm{1}[f(z)=y_t]$. 

Now, 
\begin{align}\label{eq:x+delta+epsilon_A}
    \Pbb(x+\delta+\varepsilon\in A) & = \Pbb(\delta^\top(x+\delta+\varepsilon-x)\geq \sigma\|\delta\|_2\Phi^{-1}(1-\overline{p_t})) \notag \\
    & = \mathbb{P}(\delta^\top \mathcal{N}(0, \sigma^2 I) + \|\delta\|_2^2 \geq \sigma \|\delta\|_2 \Phi^{-1}(1-\overline{p_t})) \notag \\
    & = \mathbb{P}(\sigma \|\delta\|_2 Z \geq \sigma \|\delta\|_2 \Phi^{-1}(1-\overline{p_t}) - \|\delta\|_2^2) \notag \\
    & = \mathbb{P}\left( Z \geq \Phi^{-1}(1-\overline{p_t}) - \frac{\|\delta\|_2}{\sigma} \right) \notag \\
    & = \Phi(\Phi^{-1}(\overline{p_t})+\frac{\|\delta\|_2}{\sigma})
\end{align}
where $Z\sim\Ncal(0,1)$ is the standard Gaussian distribution.

For any $x + \delta \in\Ycal$, we have that 
\begin{align}\label{eq:1norm_upper}
    \|\pbm(x+ \delta \mid w) - \pbm(x+ \delta+\varepsilon \mid w)\|_\infty & = \max_{1\leq k\leq K} \left| p_k(x+\delta \mid w) - p_k(x+\delta+\varepsilon \mid w) \right| \notag\\
    & \geq \left| p_{y_t}(x+\delta \mid w) - p_{y_t}(x+\delta+\varepsilon \mid w) \right| \notag\\
    & \overset{\text{(I)}}{=} p_{y_t}(x+\delta \mid w) - p_{y_t}(x+\delta+\varepsilon \mid w) \notag\\
    & \overset{\text{(II)}}{\geq} p(x)-\Pbb(x+\delta+\varepsilon\in A) \notag\\
    & \overset{\text{(III)}}{\geq} p(x)-\Phi(\Phi^{-1}(\overline{p_t})+\frac{\|\delta\|_2}{\sigma})
\end{align}
where \text{(I)} holds because of \cref{assumption:1}, (II) follows from \eqref{eq:x+delta+epsion_A_upper} and (III) follows from \eqref{eq:x+delta+epsilon_A}. Taking the expectation over $\varepsilon\sim\Ncal(0,\sigma^2\Ibf)$, we obtain that 
\begin{align}
    p(x)-\Phi(\Phi^{-1}(\overline{p_t})+\frac{\|\delta\|_2}{\sigma}) \leq \Ebb_{\varepsilon\sim\Ncal(0,\sigma^2\Ibf)}\|\pbm(x+ \delta \mid w) - \pbm(x+ \delta+\varepsilon \mid w)\|_\infty
\end{align}
Let $\zeta=\zeta(x,\delta)$ be the upper bound of $\Delta P(x+\delta \mid w,\sigma)$. From \eqref{eq:1norm_upper}, we have that 
\begin{align}\label{eq:bound_ineq_delta}
    0<p(x)-\Phi(\Phi^{-1}(\overline{p_t})+\frac{\|\delta\|_2}{\sigma}) \leq \zeta(x,\delta) 
\end{align}
which implies that 
\begin{align}\label{eq:bound_ineq_delta_upper}
    \|\delta\|_2 \geq \sigma \!\left(\!\Phi^{-1}(p(x)-\zeta(x,\delta)\!\right)- \!\Phi^{-1}\!\left(\overline{p_t}\right)\!) 
\end{align}
\end{proof}

\begin{proof}[Proof of \cref{lem:claims}]
    The proof of \cref{lem:claims} is similar to the proofs of two claims in \citet[Theorem 2]{cohen2019certified}. We provide them here for completeness.

    Recall that $\varepsilon\sim\Ncal(0,\sigma^2\Ibf)$ and $A = \{ z : \delta^\top (z - x) \geq \sigma \|\delta\|_2 \Phi^{-1}(1-\overline{p_t})\}$. Then,
    \begin{align*}
        \Pbb(x+\varepsilon\in A) & = \Pbb(\delta^\top(x+\varepsilon-x)\geq \sigma\|\delta\|_2\Phi^{-1}(1-\overline{p_t})) \\
        & = \Pbb(\delta^\top\Ncal(0,\sigma^2\Ibf)\geq \sigma\|\delta\|_2\Phi^{-1}(1-\overline{p_t})) \\
        & = \Pbb(\sigma\|\delta\|_2 Z\geq \sigma\|\delta\|_2\Phi^{-1}(1-\overline{p_t})) \\
        & = \Pbb(Z\geq \Phi^{-1}(1-\overline{p_t})) \\
        & = 1-\Phi(\Phi^{-1}(1-\overline{p_t})) = \overline{p_t}.
    \end{align*}
\end{proof}

\subsection{Proof of \cref{cor:upper_bound}}\label{theoryproof:2}

From the inequality of \eqref{eq:bound_ineq_delta}, we have that 
    \begin{align}
        \Phi(\Phi^{-1}(\overline{p_t})+\frac{\|\delta\|_2}{\sigma})\leq p(x)  
    \end{align}
which implies that 
\begin{align*}
    \|\delta\|_2 \leq \sigma \!\left(\!\Phi^{-1}\!(p(x)) - \Phi^{-1}\!\left(\overline{p_t}\right)\!\right)
\end{align*}

\subsection{Proof of \cref{thm:conformal_guarantee}}\label{theoryproof:3}
Let $s_i=\Delta P(x_i\mid w,\sigma)$ for $i=1,\ldots,n$ and $s_0=\Delta P(x_{\text{test}}\mid w,\sigma)$. By the exchange ability of $x_i$ and $x_{\text{test}}$ (they are i.i.d.), we have that 
\begin{align*}
    \Pbb(s_0\geq s_k)=\frac{n+2-k}{n+1},\quad \forall k\in[n],
\end{align*}
which is because $s_0$ is equally likely to fall in anywhere between $s_1,\ldots,s_n$. Hence, we can conclude that 
\begin{align*}
    \Pbb(s_0\geq s_{\lceil \alpha(k+1)\rceil}) & =\frac{n+2-\lceil \alpha(n+1)\rceil}{n+1} \\
    & \geq 1+\frac{1}{n+1}-\alpha.
\end{align*}

\subsection{Proof of \cref{upperbound}}\label{theoryproof:4}

Let $\Scal = \{s_1, \ldots, s_n\}$ be the calibration set where $s_i=\Delta P(x_i\mid w,\sigma)$ for $i=1,\ldots,n$. Moreover, we denote $s_0=\Delta P(x_{\text{test}}\mid w,\sigma)$.
Similar to the proof of \cref{thm:conformal_guarantee}, by the exchangeability of $x_i$ and $x_{\text{test}}$ (assume $x_{\text{test}}$ is benign), we have that
\begin{align*}
    \Pbb\Big(\Delta P(x_{\text{test}} \mid w,\sigma)\leq \hat{q}\mid \mathcal{S}\Big)=\Pbb(s_0\leq s_{\lceil \alpha(n+1)\rceil}\mid\Scal) & =\frac{\lceil \alpha(n+1)\rceil}{n+1} \leq \alpha+\frac{1}{n+1},
\end{align*}
which proves the upper bound. Next, we discuss the asymptotic property of FPR.
In the classic conformal prediction, the distribution of coverage, $\Pbb(y_{\text{test}}\in\Tcal(x_{\text{test}}))$, has an analytic form and it was first introduced by Vladimir Vovk in \citet{vovk2012conditional}, i.e., 
\begin{align*}
    \Pbb(y_{\text{test}}\in\Tcal(x_{\text{test}})\mid\Scal)\sim\text{Beta}(n+1-\lfloor \alpha(n+1)\rfloor,\lfloor \alpha(n+1)\rfloor).
\end{align*}
The details of the proof of this fact see \citet{vovk2012conditional}. We note that the probability of coverage is actually equal to $\Pbb(s_0\leq s_{\lceil \alpha(k+1)\rceil})$. Indeed, during the proof of \cref{thm:conformal_guarantee}, we just skip to show the probability of coverage and choose the $\frac{\lfloor \alpha(n+1)\rfloor}{n}$ quantile as the threshold while the $\frac{\lfloor (1-\alpha)(n+1)\rfloor}{n}$ quantile is used in the classic conformal prediction. Therefore, we have that $\Pbb\Big(\Delta P(x_{\text{test}}\mid w,\sigma)\leq \hat{q}\mid \mathcal{S}\Big)\sim \text{Beta}(n+1-\lfloor (1-\alpha)(n+1)\rfloor,\lfloor (1-\alpha)(n+1)\rfloor)$. If we denote $Z_n=\Pbb(s_0\leq s_{\lceil \alpha(k+1)\rceil}|\Scal)$, the mean and variance of $Z_n$ are given by
\begin{align*}
    \Ebb[Z_n]&=\frac{n+1-\lfloor (1-\alpha)(n+1)\rfloor}{n+1}=\frac{\lceil \alpha(n+1)\rceil}{n+1} \\
    \text{Var}[Z_n] & = \frac{(n+1-\lfloor (1-\alpha)(n+1)\rfloor)(\lfloor (1-\alpha)(n+1)\rfloor)}{(n+1)^2(n+2)}
\end{align*}
From the Chebyshev's inequality, we have that for any $\xi>0$,
\begin{align*}
    \Pbb(Z_n\geq \frac{\lceil \alpha(n+1)\rceil}{n+1}+\xi)<\Pbb(|Z_n-\frac{\lceil \alpha(n+1)\rceil}{n+1}|\geq \xi)\leq \frac{1}{\xi^2}\frac{\lceil \alpha(n+1)\rceil)(\lfloor (1-\alpha)(n+1)\rfloor)}{(n+1)^2(n+2)},
\end{align*}
which implies that 
\begin{align*}
    \lim_{n\rightarrow\infty}\Pbb(Z_n\leq \alpha+\xi)= \lim_{n\rightarrow\infty}\Pbb(Z_n\leq \frac{\lceil \alpha(n+1)\rceil}{n+1}+\xi)=1-\lim_{n\rightarrow\infty}\Pbb(Z_n\geq \frac{\lceil \alpha(n+1)\rceil}{n+1}+\xi)=1.
\end{align*}

\section{Details for the Experimental Setting}\label{experiments}
\subsection{Details for the Imbalanced Datasets}\label{imbalanceddatasets}
The original MNIST, SVHN, and CIFAR-10 datasets consist of 60,000 training images and 10,000 test images (MNIST), 73,257 training images and 26,032 test images (SVHN), and 50,000 training images and 10,000 test images (CIFAR-10), with both datasets having 10 classes. For TinyImageNet, we selected 50 classes from TinyImageNet200 as a subset for our experiments, with each class containing 500 training images. For ImageNet-10, we constructed a 10-class subset of ILSVRC-2012 (ImageNet-1K) by selecting ten semantically diverse categories. We use the original per-class splits (approximately $1{,}300$ training images and $50$ validation images per class).To create imbalanced versions, we reduce the number of training samples per class while keeping the test sets unchanged.

We use two types of imbalance: long-tailed \citep{cui2019class} and step imbalance\citep{buda2018systematic}. The imbalance ratio $\rho$ measures the degree of imbalance, defined as the ratio between the largest and smallest class sizes:$ \rho = \frac{\max_i \{n_i\}}{\min_i \{n_i\}} $.

In our studies, we designated imbalance ratios of $\rho = 2, 10, 100, 200$, as depicted in Fig.~\ref{fig11}. For TinyImageNet, the ratios were set at $\rho = 2, 10, 100$. In scenarios with long-tailed distributions, the sizes of the classes decrease exponentially, leading to a progressive decline across different classes. Conversely, the step imbalance method assigns a uniform, reduced sample size to all minority classes, while maintaining larger sample sizes for more frequent classes, thus establishing a distinct separation between the two groups. We define $\mu$ as the proportion of minority classes, typically set at $0.9$ across all experiments to balance the distribution and guarantee a thorough evaluation, except for TinyImageNet where $\mu=0.98$.

\begin{figure}[ht]
    \centering    \includegraphics[width=0.7\linewidth]{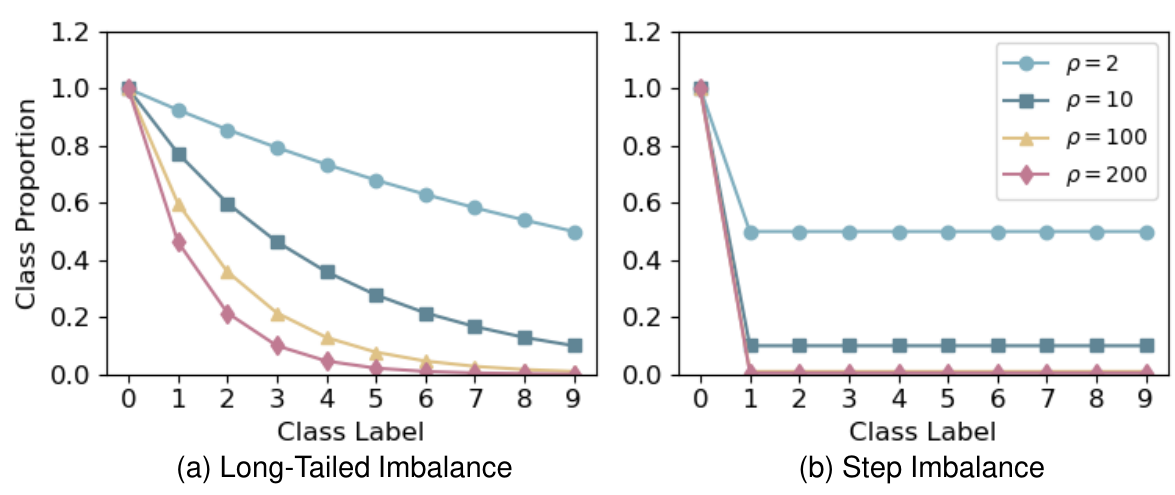}
    \caption{Class proportion types on the MNIST, SVHN, CIFAR10, TinyImageNet  and ImageNet datasets. (a) Long-Tailed Imbalance ($\mu = 1/(\rho +1)$, $\rho$ = 2, 10, 100, 200). (b) Step Imbalance ($\mu$ = 0.9, $\rho$ = 2, 10, 100, 200).}
    \label{fig11}
\end{figure}

\subsection{Examples of Poisoned Samples}\label{posionedsamples}
For \textbf{certification} performance, we consider two well-known backdoor attacks: Badnets~\citep{gu2019Badnets} and the Blend attack~\citep{chen2017targeted}.
In both cases, a trigger $\delta$ is added to the original image $x$ such that $\|\delta\|_2 \ge 0.8$ (i.e., $x \mapsto x + \delta$). For Badnets, we use a chessboard pattern~\citep{xiang2020detection} as the trigger; 
for the Blend attack, we adopt a ``hello-kitty'' image with a blending rate of 0.2. Although other patterns could be used, our certified robustness depends primarily on the magnitude of the backdoor perturbation and the fraction of poisoned training data, so these patterns suffice to illustrate our approach. 

Meanwhile, to \textbf{empirically} validate the RPP method’s detection efficacy, we evaluate 10 types of triggers. These include Chessboard \citep{xiang2020detection}, Blend \citep{chen2017targeted}, Trojan \citep{liu2018trojaning}, Sleeper Agent \citep{souri2022sleeper}, SIG \citep{barni2019new},ISSBA\citep{li2021invisible}), WaNet\citep{nguyen2021wanet}, AdaPatch \citep{qi2023revisiting}, AdaBlend Patterns \citep{qi2023revisiting} and Narcissus (Narci.) \citep{Zeng2023Narcissus}, detailed in \ref{baselines}. Here, we showcase original clean samples alongside their corresponding poisoned counterparts, each embedded with one of these nine distinct triggers, as shown in Fig. \ref{triggers}.

\begin{figure}[ht]
    \centering
    \includegraphics[width=0.8\linewidth]{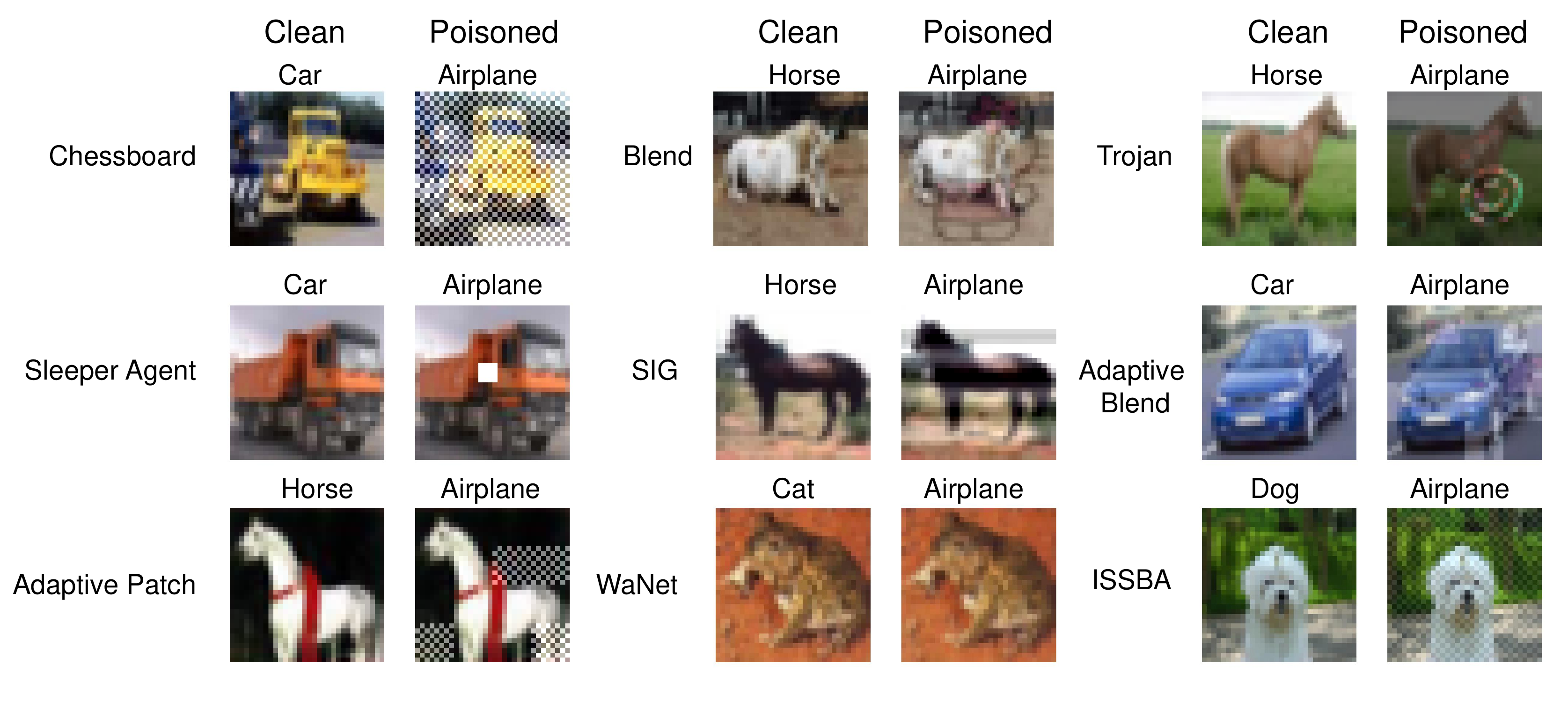}
    \caption{Examples of nine types of backdoor attacks on CIFAR10.}
    \label{triggers}
\end{figure}

\subsection{Details for the Model Training}\label{Training Model Architecture}
We employ a standard two-layer convolutional neural network (Tab.~\ref{tab:model_architecture}) for MNIST, trained for 50\, epochs with a batch size of 128 and a learning rate of \(10^{-3}\) using the Adam optimizer. 

For SVHN and CIFAR-10, we train PreActResNet18 \citep{he2016deep} for 100 epochs at a batch size of 128 and a learning rate of \(5\times10^{-4}\) using Adam. The training images are augmented by random cropping and horizontal flipping.

For TinyImageNet, we train ResNet34 \citep{he2016deep} for 100\, epochs at a batch size of 128 and a learning rate of \(5\times10^{-4}\) using Adam. Data augmentation includes random horizontal flipping and random cropping.

For ImageNet10, we train ViT-B/16 \citep{dosovitskiy2020image} for 100\, epochs at a batch size of 128 and a learning rate of \(5\times10^{-4}\) using Adam. Data augmentation includes random horizontal flipping and random cropping.

\begin{table}[ht]
\centering
\setlength{\tabcolsep}{4pt} 
\begin{tabular}{ccccccc}
\toprule
Layer & \# of Channels & Filter Size & Stride & Activation \\ 
\midrule
Conv       & 32  & $3 \times 3$ & 1 & ReLU    \\ 
MaxPool    & 32  & $2 \times 2$ & 2 & -       \\ 
Conv       & 64  & $3 \times 3$ & 1 & ReLU    \\ 
MaxPool    & 64  & $2 \times 2$ & 2 & -       \\ 
FC         & 128 & -            & - & ReLU    \\ 
FC         & 10  & -            & - & Softmax \\ 
\bottomrule
\end{tabular}\vspace{1 mm}
\caption{Model Architecture for MNIST.}
\label{tab:model_architecture}
\end{table}

\section{More Experiments}\label{moreexperiments}
\subsection{Attack Success Rate with Same Poisoning Ratio}
\label{asr with same p}
In Fig. \ref{fig:imbalance_auc_asr}, we observe that when an equal number of minority class samples (classes 1-9) are randomly selected and subjected to Badnets backdoor attacks \citep{gu2019Badnets} with their labels changed to the majority class (class 0), the ASR in imbalanced datasets is higher than in balanced datasets. In Fig. \ref{fig_same_ratio}, we preserved the same poisoning ratio (p) across balanced ($\rho = 1$) and imbalanced training set ($\rho = 200$), and implemented the same backdoor attacks. Our findings also suggest that imbalanced datasets are more vulnerable to backdoor attacks compared to their balanced counterparts.
\begin{figure}[ht]
    \centering
    \includegraphics[width=0.7\linewidth]{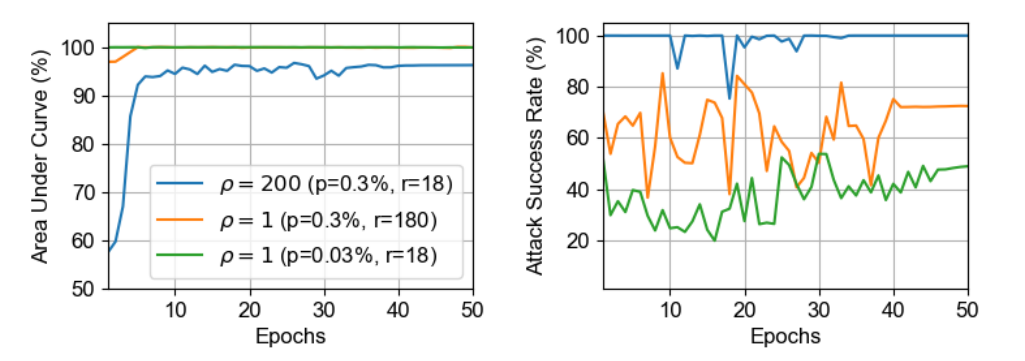}
    \caption{The AUC and ASR of the Badnets backdoor attacks on balanced ($\rho = 1$) and imbalanced  ($\mu = 0.9, \rho =200$) MNIST training datasets with the same poisoned ratios (p = 0.3\%) or same number (r = 18) of backdoor samples.}
    \label{fig_same_ratio}
\end{figure}

\subsection{Attack Success Rate with Smaller Trigger Size}
\label{asr when l2lessthan0.8}
Typically, to ensure stealth, attackers aim to minimize the amount of poisoning while maintaining a high success rate for the attack. Our preliminary findings indicate that in datasets with extreme imbalance, a poisoning rate of $0.3\%$ in step imbalance can achieve an ASR of more than $95\%$, as shown in Fig.~\ref{fig:imbalance_auc_asr}. 

As demonstrated in Thm. \ref{them:main_1}, a sample attacked by a backdoor is guaranteed to be detected with trigger size $\|\delta\|_2\geq R = \sigma \left(\Phi^{-1}(p(x) - \zeta(x, \delta)) - \Phi^{-1}(\overline{p_t})\right)$. 
In this section, we examine scenarios where the trigger size is smaller than $R_{\sigma,t}$, while keeping the poisoning rate constant. As shown in Tab. \ref{tab:asrl2lessthan0.8}, all evaluated attacks struggle to maintain high ASR under such stealthy constraints, confirming the general difficulty of successful backdoor injection when both poisoning budget and perturbation size are limited. In practice, defensive measures are not required when the ASR drops below $50\%$.

\begin{table}[t]
\centering
\small
\setlength{\tabcolsep}{6pt}
\begin{tabular}{l *{3}{cc}}
\toprule
\multirow{2}{*}{Attack} & \multicolumn{2}{c}{$\rho$ = 1} & \multicolumn{2}{c}{$\rho$ = 10} & \multicolumn{2}{c}{$\rho$ = 200} \\
\cmidrule(lr){2-3}\cmidrule(lr){4-5}\cmidrule(lr){6-7}
& AUC & ASR & AUC & ASR & AUC & ASR \\
\midrule
Badnets         & 100.0 & 13.8 & 100.0 & 14.5 & 98.9 & 21.8 \\
Blend           & 100.0 & 16.3 & 100.0 & 18.8 & 99.0 & 22.9 \\
Trojan          & 100.0 & 29.4 & 100.0 & 30.6 & 99.0 & 35.6 \\
SIG             & 100.0 & 29.0 & 100.0 & 33.7 & 99.0 & 38.2 \\
ISSBA           &  99.8 & 30.2 &  99.8 & 47.6 & 98.8 & 53.3 \\
WaNet           & 100.0 & 35.8 &  99.7 & 46.0 & 98.9 & 55.1 \\
Sleeper Agent   &  99.9 & 28.6 &  99.5 & 36.9 & 99.0 & 48.6 \\
Adaptive patch  &  99.8 & 20.1 &  99.8 & 26.9 & 98.5 & 33.0 \\
Adaptive Blend  &  99.9 & 22.7 &  99.0 & 26.1 & 98.9 & 30.4 \\
\bottomrule
\end{tabular}
\caption{The ASR and AUC under $\lVert\delta\rVert_2 < 0.8$ on the balanced MNIST dataset ($\rho$ = 1) and two types of imbalanced MNIST datasets ($\mu$ = 0.9, $\rho$ = 10, 200) with same poisoned ratio (p=0.3\%).}
\label{tab:asrl2lessthan0.8}
\end{table}

\subsection{Relative Frequency Distributions of RPP}
\label{relativefrequencydistribution}
In Sec. \ref{motivation}, we present the relative frequency distributions of $\tilde{\Delta}P$ for clean and malicious samples under datasets with imbalance ratios $\rho=2$ and $\rho=100$. In Fig. \ref{relativefrequency}, the relative frequency distribution of $ \tilde{\Delta} P $ for clean and malicious samples in balanced datasets ($\rho=1$) as well as those with imbalance ratios $\rho=10$ and $\rho=200$ are depicted. It is observed that images subjected to backdoor attacks typically exhibit a lower $ \tilde{\Delta} P $ compared to clean images. Although there is some overlap in the $ \tilde{\Delta} P $ between poisoned data and clean samples when $\rho=200$, the RPP method still achieves TPR of 90.0\% with FPR below 30\% (Fig. \ref{fig_certification}). Moreover, datasets with a step imbalance of $\rho=200$ are also rare in real-world scenarios.

\begin{figure}[ht]
    \centering
    \includegraphics[width=0.9\linewidth]{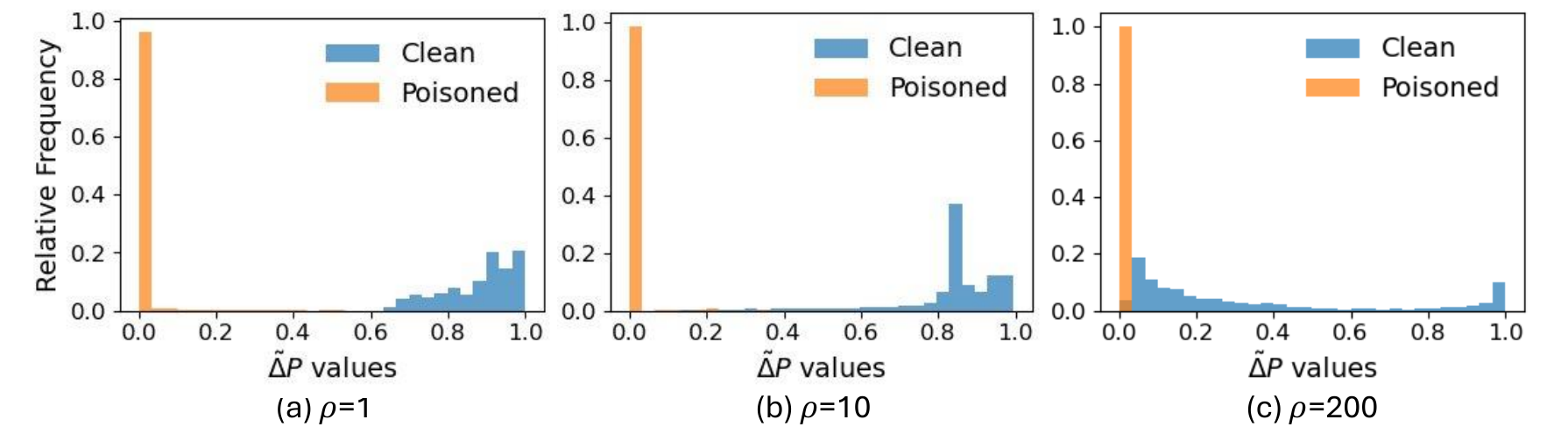}
    \caption{Relative frequency distribution of $ \tilde{\Delta} P $ for clean and poisoned samples on SVHN dataset with imbalance ratios: (a) \(\rho = 1\), (b) $\mu$ = 0.9, \(\rho = 10\) and (c) $\mu$ = 0.9, \(\rho = 200\).}
    \label{relativefrequency}
\end{figure}

\subsection{Empirical Validation for Assumption (A.1)} \label{assumption}
To empirical validate the Assumption (A.1): $ p_{y_t}(x+\delta+\varepsilon \mid w) < p_{y_t}(x+\delta \mid w) $, we evaluated the percentage of backdoored test samples satisfying this inequality on CIFAR-10 under varying imbalance ratios, with Gaussian noise ($\sigma$=1.0). Tab. \ref{tab:assump1-cifar10} showed the assumption is consistent in all settings. Further, we tested backdoor-triggered samples with and without Gaussian noise across various imbalance settings. Tab. \ref{tab:my_label} clearly demonstrate that as the noise level $\sigma$ increases, the ASR decreases across all imbalance ratios. This confirms that noise-perturbed poisoned samples are indeed less likely to be classified into the target class, providing strong empirical support for our theoretical assumption.

\begin{table}[t]
\centering
\begin{tabular}{lccccc}
\toprule
Attack & $\rho=1$ & $\rho=2$ & $\rho=10$ & $\rho=100$ & $\rho=200$ \\
\midrule
Badnets   & 100.0 & 100.0 & 100.0 & 100.0 & 99.8 \\
Blend     & 100.0 & 100.0 & 100.0 & 100.0 & 99.6 \\
Trojan    & 100.0 & 100.0 & 100.0 & 100.0 & 99.7 \\
Narcissus & 100.0 & 100.0 & 100.0 & 100.0 & 100.0 \\
\bottomrule
\end{tabular}
\caption{Percentage of poisoned samples satisfying Assumption~1 under different imbalance ratios $\rho$ with $\sigma=1.0$ on CIFAR-10, evaluated on four backdoor attacks: Badnets, Blend, Trojan, and Narcissus.}
\label{tab:assump1-cifar10}
\end{table}

\begin{table*}
\centering
{\footnotesize 
\begin{tabular}{@{}ccccc@{}}
\toprule
 & $\sigma = 0$ & $\sigma = 0.5$ & $\sigma = 1.0$ & $\sigma = 1.5$ \\ \midrule
$\rho$ = 1    & 100.0 & 84.6 & 77.4 & 68.3 \\
$\rho$ = 2    & 99.9  & 83.7 & 77.7 & 70.3 \\
$\rho$ = 10   & 99.7  & 94.8 & 89.3 & 83.9 \\
$\rho$ = 100  & 100.0 & 98.6 & 94.4 & 90.9 \\
$\rho$ = 200  & 100.0 & 99.1 & 97.5 & 93.1 \\ 
\bottomrule
\end{tabular}
} 
\caption{The ASR of Badnets backdoor attack was evaluated on the balanced CIFAR10 training dataset $(\rho = 1)$ and imbalanced CIFAR10 training datasets $(\rho = 2, 10, 100, 200)$ with poisoning ratio (p = 0.3\%) under different standard deviations of isotropic Gaussian noise ($\sigma = 0, 0.5, 1.0, 1.5$).}
\label{tab:my_label}
\end{table*}

\subsection{Performance of Different Noise Samples}\label{different_noisesamples}
In Fig.~\ref{fig:J}, we evaluate the performance of RPP across different numbers of Gaussian noise samples \( J \), which are used to estimate classification probability variations. The results demonstrate that RPP achieves over 95\% detection performance on datasets with imbalance ratios \( \rho = 1, 2, 10, \) and \( 100 \). Even under a high imbalance ratio of \( \rho = 200 \), RPP maintains a TPR exceeding 85\% while keeping the FPR below 40\%. Notably, the results indicate that a small number of noise samples (as few as \( J = 3 \)) is sufficient to achieve strong detection performance. To balance detection effectiveness and computational efficiency, we set \( J = 3 \) in all subsequent experiments.

\begin{figure}[ht]
    \centering
    \includegraphics[width=1.0\linewidth]{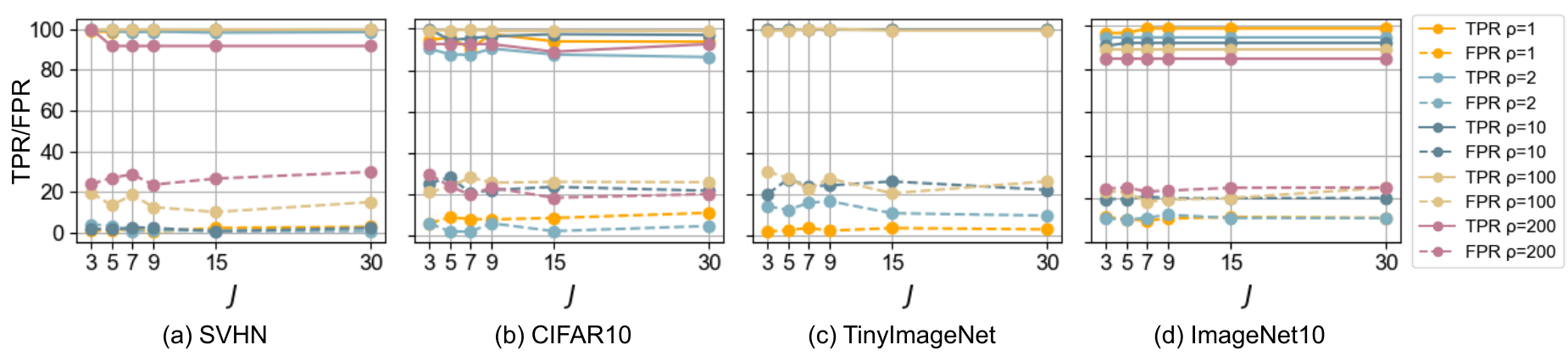} \vspace{-2mm}
    \caption{Performance of RPP against Badnets backdoor attack with perturbation magnitude  $\|\delta\|_2 \geq 0.8 $ across balanced dataset (\(\rho = 1\)) and varying imbalance ratios ($\mu$ = 0.9, \(\rho = 2, 10, 100, 200\)) under different J with $n = 100$, $\alpha = 0.05$ on SVHN, CIFAR-10, TinyImageNet and ImageNet10 datasets.}
    \label{fig:J}
\end{figure}

\subsection{Performance on Different Model Architectures}\label{different_architecture}
In Tab.~\ref{tab:model_comparison}, we present the performance of RPP across different model architectures such as VGG16, DenseNet161, and EfficientNetB0 under varying imbalance ratios ($\rho$). RPP achieves consistently high TPR across all settings, where TPRs often reach or approach 100\%. Notably, RPP maintains a strong detection capability even under severe imbalance ($\rho=100$ or $\rho=200$), demonstrating its robustness across datasets and architectures. TinyImageNet exhibits a similar trend but tends to yield lower FPRs under balanced conditions and slightly higher TPRs under extreme imbalance compared to CIFAR10.

\begin{table}[ht]
\centering
\footnotesize 
\renewcommand{\arraystretch}{1.3}
\begin{adjustbox}{width=\textwidth}
\begin{tabular}{@{}ccccccccccccccc@{}}
\toprule
& \multicolumn{6}{c}{CIFAR10} & & \multicolumn{6}{c}{TinyImageNet} \\
\cmidrule(lr){2-7} \cmidrule(lr){9-14}
& \multicolumn{2}{c}{VGG16} & \multicolumn{2}{c}{DenseNet161} & \multicolumn{2}{c}{EfficientNetB0} & 
& \multicolumn{2}{c}{VGG16} & \multicolumn{2}{c}{DenseNet161} & \multicolumn{2}{c}{EfficientNetB0} \\
\cmidrule(lr){2-3} \cmidrule(lr){4-5} \cmidrule(lr){6-7} \cmidrule(lr){9-10} \cmidrule(lr){11-12} \cmidrule(lr){13-14}
 & TPR & FPR & TPR & FPR & TPR & FPR && TPR & FPR & TPR & FPR & TPR & FPR \\
\midrule
$\rho$ = 1   & 99.9 & 10.1 & 94.4 & 18.3 & 90.3 & 13.7 && 100.0 & 4.2 & 100.0 & 9.6 & 92.2 & 7.8 \\
$\rho$ = 2   & 99.9 & 24.3 & 92.6 & 24.6 & 89.8 & 12.8 && 99.9 & 9.9 & 86.9 & 16.7 & 96.6 & 13.3 \\
$\rho$ = 10  & 93.5 & 29.8 & 87.5 & 22.0 & 86.6 & 25.5 && 100.0 & 29.3 & 91.8 & 23.7 & 100.0 & 31.0 \\
$\rho$ = 100 & 87.1 & 22.8 & 86.7 & 29.7 & 81.8 & 23.3 && 100.0 & 36.6 & 82.2 & 27.8 & 88.7 & 28.2 \\
$\rho$ = 200 & 90.0 & 26.9 & 88.9 & 30.6 & 77.8 & 23.8 && -- & -- & -- & -- & -- & -- \\
\bottomrule
\end{tabular}
\end{adjustbox}\vspace{1 mm}
\caption{Performance of RPP against Badnets backdoor attack with perturbation magnitude  $\|\delta\|_2 \geq 0.8 $ across balanced dataset (\(\rho = 1\)) and varying imbalance ratios ($\mu$ = 0.9, \(\rho = 2, 10, 100, 200\)) on CIFAR10 and TinyImageNet datasets under VGG16, DenseNet161, and EfficientNetB0 architectures with $n = 100$, $\alpha = 0.05$, and $\sigma = 1.0$ for CIFAR10 and TinyImageNet.}
\label{tab:model_comparison}
\end{table}

\subsection{Impact of Calibration Set Size}\label{Different Validate Set Size}
The selection of the calibration set size (\(n\)) significantly impacts the performance of conformal prediction. Intuitively, a larger \(n\) may seem preferable as it generally results in more stable procedures, a notion supported by \citep{vovk2012conditional}. Generally speaking,  choosing a calibration set of size $n$ = 1000 is sufficient for most purposes \citep{angelopoulos2023conformal}. However, in practical scenarios, it is challenging to obtain an additional and big clean calibration set ($n$ = 1000) that is identically distributed (i.i.d.) with the original data set. In Fig. \ref{fig_differentvalidatenumber}, we present the TPR and FPR of RPP against the Badnets attack across three datasets, evaluated with different calibration set sizes ($n$=100, 200, 400, 600, 800 and 1000) and \(\alpha = 0.05\). Our findings demonstrate that RPP's computational and data efficiency can be enhanced by reducing the size of the calibration set, without significantly compromising its detection or certification performance. Specifically, when the calibration set size is decreased from 1000 to 100, the TPR remains nearly unchanged across datasets with varying degrees of imbalance. This robustness of RPP to the calibration set size further validates the clear separation between benign and backdoored samples, underscoring its effectiveness in practical scenarios.

\begin{figure}[ht]
    \centering
    \includegraphics[width=1.0\linewidth]{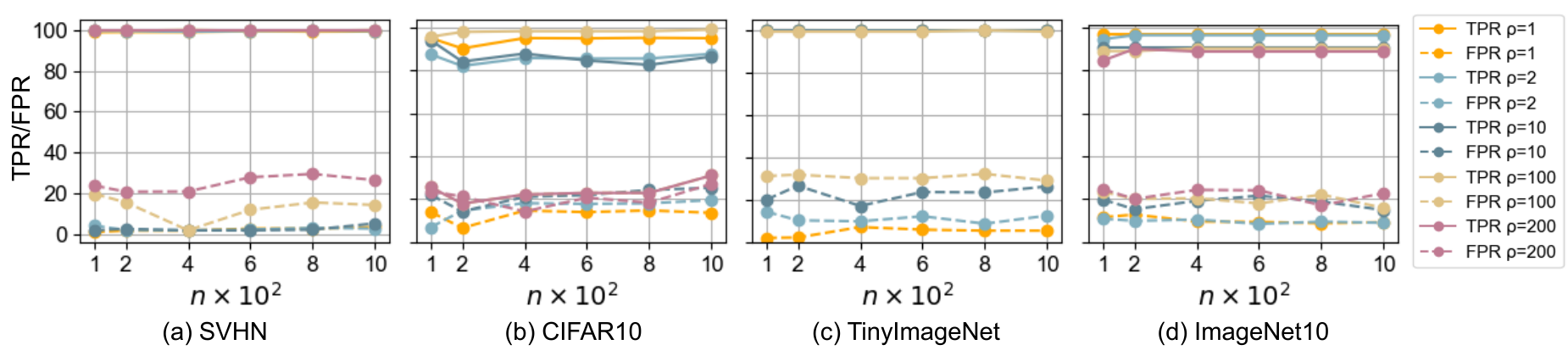}
    \caption{Performance of RPP against Badnets backdoor attacks with
 perturbation magnitude  $\|\delta\|_2 \geq 0.8 $, measured by TPR and FPR across balanced dataset (\(\rho = 1\)), and varying imbalance ratios ($\mu$ = 0.9, \(\rho = 2, 10, 100, 200\)) for a range of $n$ with \(\alpha = 0.05\) on SVHN, CIFAR-10, TinyImageNet and ImageNet10 datasets.}
    \label{fig_differentvalidatenumber}
\end{figure}

\subsection{Performance on MNIST}\label{mnist_performance}
We use a standard 2-layer convolutional neural network (App.~\ref{Training Model Architecture}) for MNIST and report TPR/FPR under a balanced setting ($\rho=1$) and multiple imbalance ratios ($\rho=2,10,100,200$). Fig.~\ref{fig:mnist_performance} shows as $\alpha$ increases from $0.05$ to $0.10$, TPR remains near ceiling across all $\rho$, while FPR rises slightly yet stays modest. Sweeping $\sigma$ from $0.05$ to $0.5$ yields TPR close to $100\%$ for most $\rho$, with only mild degradation at the largest $\sigma$ under severe imbalance; FPR typically remains below $35\%$. Increasing $n$ from $100$ to $1000$ keeps TPR essentially saturated and generally reduces FPR, indicating that larger calibration sets stabilize false-positive control. Finally, varying the number of Gaussian noise samples $J$ maintains TPR around $100\%$ across $\rho$.

\begin{figure}[ht]
    \centering
    \includegraphics[width=1.0\linewidth]{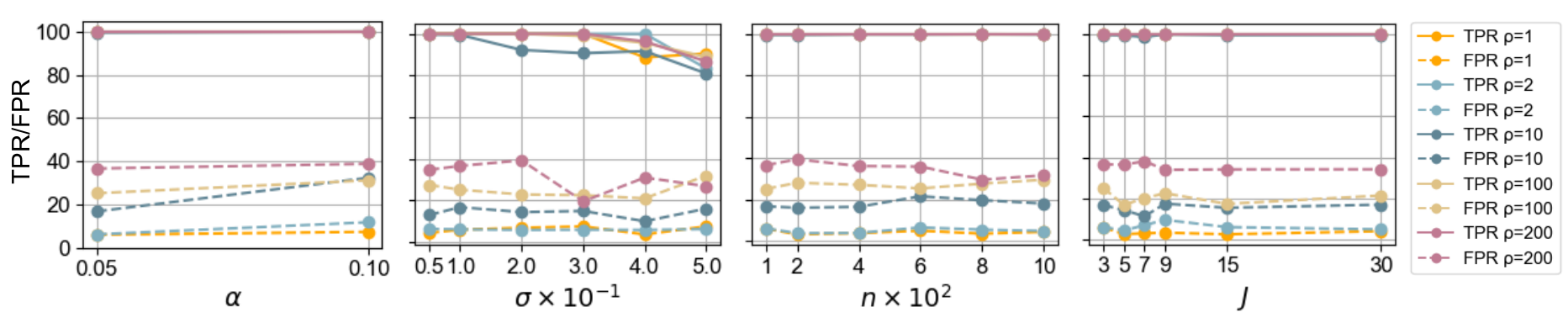}
    \caption{Performance of the RPP detection against Badnets backdoor attacks with perturbation magnitude  $\|\delta\|_2 \geq 0.8 $, measured by TPR and FPR across balanced dataset (\(\rho = 1\)) , and varying imbalance ratios ($\mu$ = 0.9, \(\rho = 2, 10, 100\)) on MNIST dataset. (a) For different \(\alpha\) values with \(\sigma=0.1\) and $n$ = 100. (b) For a range of \(\sigma\) values with \(\alpha=0.05\) and $n$ = 100. (c) For various calibration size $n$ with \(\alpha=0.05\) and \(\sigma=0.1\).}
    \label{fig:mnist_performance}
\end{figure}

\subsection{Performance on iNaturalist}\label{iNaturalist_performance}
To further evaluate the generalization of RPP in realistic long-tailed settings, we conduct experiments on the naturally imbalanced iNaturalist-2018 dataset~\citep{vanhorn2018inaturalist}. 
Ten classes are randomly selected from the training data while preserving the dataset’s inherent long-tailed distribution, forming an iNaturalist-10 subset. 
We employ a ViT-B/16 model as the classifier and evaluate RPP against multiple backdoor attack types with $n = 100$, $\alpha = 0.05$, and $\sigma = 1.0$. 
As summarized in Table~\ref{tab:iNat_results}, RPP maintains comparable detection performance on this naturally imbalanced dataset, highlighting its robustness.

\begin{table}[t]
\centering
\begin{tabular}{lcccccc}
\toprule
 & Badnets & Blend & Trojan & SIG & ISSBA & WaNet \\
\midrule
TPR & 81.1 & 82.2 & 74.8 & 77.0 & 78.5 & 87.1 \\
FPR & 18.7 & 17.2 & 15.6 & 10.9 & 24.3 & 12.8 \\
\bottomrule
\end{tabular}
\caption{Detection performance (TPR/FPR, \%) of RPP against different backdoor attack types with $n = 100$, $\alpha = 0.05$, and $\sigma = 1.0$ on the iNaturalist dataset.}
\label{tab:iNat_results}
\end{table}

\subsection{Performance under Imbalanced Calibration Set}\label{imbalanced validate set}
In Tab. \ref{table:Imbalancedvalidateset}, we present the detection performance of RPP with $\alpha=0.05$ when the validation set consists of imbalanced clean data that are i.i.d. with the training set. For the imbalanced calibration set, we include 100 samples from the majority class and 1 sample from each minority class, resulting in $n=109$ for MNIST, SVHN, and CIFAR-10, and $n=149$ for TinyImageNet. The results suggest that the detection efficacy of RPP is unaffected by whether the calibration set is balanced or imbalanced.

\begin{table*}[ht]
\centering
\footnotesize 
\renewcommand{\arraystretch}{1.3} 
\begin{tabular}{c cc cc cc cc cc}
\toprule
 &  \multicolumn{2}{c}{$\rho = 1$} 
  & \multicolumn{2}{c}{$\rho = 2$} 
  & \multicolumn{2}{c}{$\rho = 10$} 
  & \multicolumn{2}{c}{$\rho = 100$} 
  & \multicolumn{2}{c}{$\rho = 200$} \\
\cmidrule(lr){2-3} \cmidrule(lr){4-5} \cmidrule(lr){6-7} 
\cmidrule(lr){8-9} \cmidrule(lr){10-11}
 &  TPR & FPR & TPR & FPR & TPR & FPR & TPR & FPR & TPR & FPR \\
\midrule
MNIST        & 100.0  & 5.7 & 100.0  & 6.9  & 82.4  & 0.5  & 100.0  & 1.9  & 100.0  & 6.7 \\
SVHN         &  99.7  & 11.6  &  98.8  & 3.6  & 100.0  & 5.9  &  98.9  & 2.9 &  89.7  & 4.6  \\
CIFAR10      &  98.0  & 5.8 &  93.1  &  3.0 & 97.4  & 7.0  & 86.1  &  3.8 &  87.0  & 12.4 \\
TinyImageNet &  100.0  & 1.7  &  100.0  & 2.9  & 98.0  & 2.6  &  85.7  & 4.7  &   --    & --   \\
\bottomrule
\end{tabular}\vspace{-1 mm}
\caption{Performance of the RPP detection against Badnets backdoor attacks with perturbation magnitude  $\|\delta\|_2 \geq 0.8 $, measured by TPR and FPR across balanced dataset (\(\rho = 1\)) and varying imbalance ratios ($\mu$ = 0.9, \(\rho = 2, 10, 100, 200\)) with \(\alpha=0.05\) on MNIST, SVHN, CIFAR-10, and TinyImageNet datasets.}
\label{table:Imbalancedvalidateset}
\end{table*}

\subsection{Resistance to Potential Adaptive Attacks}\label{adaptive_attacks}

In realistic attack scenarios, attackers typically minimize the poisoning rate to preserve stealth while still achieving a high ASR. This assumption is well grounded in the backdoor literature; for example, \citep{souri2022sleeper, Zeng2023Narcissus} adopt triggers with a norm $\ell_\infty$ bounded by 16/255 to maintain a high ASR under a poisoning budget of 0.5\% or 1\%. 
Following this common assumption, we adopt low poisoning rates in all experiments to reflect realistic threat settings. Fig.~\ref{fig:imbalance_auc_asr} demonstrates that with a trigger satisfying $\|\delta\|_2 \geq 0.8$, a 0.3\% poisoning rate under step imbalance can yield an ASR exceeding 95\%. In contrast, using smaller triggers under the same poisoning rate drops the ASR to 20\% (App. ~\ref{asr when l2lessthan0.8}), highlighting the necessity of slightly larger perturbation norms for maintaining attack effectiveness in low-poisoning regimes.

We probe robustness und er a worst-case adaptive setting in which the adversary has full knowledge of our defense and deliberately chooses smaller triggers with $\lVert \delta\rVert_2 < 0.8$. To preserve high ASR in this regime, the attacker must increase the poisoning rate (we set $p=10\%$), achieving $\mathrm{ASR}>90\%$. Even under these adversarial conditions, RPP often empirically detects poisoned samples (Tab.~\ref{tab:adaptive_attacks}). It is worthy mentioning that, when the trigger is extremely small, it is significantly more difficult to inject backdoor with a reasonable attack buddget, thus not practical in many threat scenarios.

\begin{table}[t]
\centering
\begin{tabular}{lccccc}
\toprule
 & $\rho{=}1$ & $\rho{=}2$ & $\rho{=}10$ & $\rho{=}100$ & $\rho{=}200$ \\
\midrule
TPR & 99.7 & 92.0 & 76.4 & 71.0 & 46.5 \\
FPR & 15.9 & 12.4 & 22.6 & 22.8 & 23.0 \\
\bottomrule
\end{tabular}
\caption{Performance of RPP against Badnets across class-imbalance ratios $\rho$ under $\lVert \delta\rVert_2 < 0.8$ on CIFAR-10.}
\label{tab:adaptive_attacks}
\end{table}

We also consider a different form of adaptive attack, termed the \textit{hybrid-label adaptive attack}, where the adversary, aware of our detection mechanism, poisons 80\% of samples via standard label flipping and 20\% with triggers combined with Gaussian noise while keeping their original labels. This hybrid design introduces prediction instability in poisoned samples and aims to evade detection by inducing ambiguity in the predicted probabilities. Despite this, RPP remains robust, as shown in Tab.~\ref{tab:hybrid_adaptive_attacks}. The conformal prediction framework is calibrated exclusively on clean data, it adaptively rejects outliers that display irregular confidence fluctuations, thereby maintaining strong detection performance even under this hybrid-label adaptive scenario.

\begin{table}[t]
\centering
\begin{tabular}{lccccc}
\toprule
 & $\rho{=}1$ & $\rho{=}2$ & $\rho{=}10$ & $\rho{=}100$ & $\rho{=}200$ \\
\midrule
TPR & 90.0 & 81.5 & 79.0 & 61.8 & 59.7 \\
FPR & 6.8 & 13.1 & 12.5 & 16.9 & 26.4 \\
\bottomrule
\end{tabular}
\caption{Performance of RPP against Badnets adaptive attack across class-imbalance ratios $\rho$ on CIFAR-10.}
\label{tab:hybrid_adaptive_attacks}
\end{table}

\subsection{Algorithm for RPP Detection and Certification}\label{algorithm}
In this section, we present Algorithm~\ref{alg:1}, which details the procedure for identifying backdoored samples using the RPP detection method. The algorithm accepts an input $x$ from $\mathcal{D}(x + \delta)$ to be inspected and a clean calibration dataset, $\mathcal{V}$. By computing the empirical $\tilde{\Delta}P(x \mid w,\sigma)$, over $J$ noise-injected samples. Then compared $\tilde{\Delta}P(x \mid w,\sigma)$ against a threshold derived from the calibration set $\mathcal{S}$, which consists of RPP values calculated for each instance in $\mathcal{V}$. The threshold is determined by finding the quantile specified by the significance level $\alpha$ and the calibration set size of $n$. The sample is classified as poisoned if its RPP is less than or equal to this quantile threshold, otherwise, it is deemed clean. 

\begin{algorithm}[h]
\caption{Identification of Backdoored Samples}
\label{alg:1}
\begin{algorithmic}[1]
\setlength{\itemsep}{0pt}       
\setlength{\parskip}{0pt}        
\setlength{\parsep}{0pt}        
\setlength{\topsep}{0pt}         
\STATE \textbf{Input:} $x$: data from $\mathcal{D}(x + \delta)$ to be inspected, $\delta$: backdoor trigger, $\mathcal{V}$: clean calibration data set.
\STATE \textbf{Parameters:} $n$: size of $\mathcal{V}$, $\sigma$: noise standard deviation, $J$: number of noise injections, $\alpha$: significance level.
\STATE \textbf{Step 1:} Pretrain a classifier $f(\cdot\mid w)$ using $\mathcal{D}(x + \delta)$; Compute the  empirical RPP $ \tilde{\Delta}P(x\mid w,\sigma)$ over $J$ samples as defined in \cref{eq:eRPP}.
\STATE \textbf{Step 2:} Form the calibration set $ \mathcal{S} = \{s_1, \dots, s_n\} $, where $ s_i = \tilde{\Delta}P(x_i\mid w, \sigma) $ for each $ x_i \in \mathcal{V} $.
\STATE \textbf{Step 3:} Find the $\frac{\lceil \alpha(n+1)\rceil}{n}$ quantile as 
\begin{align*}
    \hat{q} = \inf\left\{q:\frac{|\{i:s_i\leq q\}|}{n}\geq \frac{\lceil \alpha(n+1)\rceil}{n} \right\}
\end{align*}
\STATE \textbf{Output:} Classify $x$ as poisoned if $\tilde{\Delta}P(x\mid w, \sigma) \leq \hat{q}$; otherwise, classify $x$ as clean.

\end{algorithmic}
\end{algorithm}

In  Algorithm~\ref{alg:2}, we outline a method for certifying the detection of poisoned samples using the RPP approach. Estimating the probability \( p(x) \) that the modified data point \( x + \delta \) is classified as a target class \( y_t \) under the pre-trained classifier $f(\cdot \mid w)$. We estimate $ \zeta(x, \delta) $ using the $ \frac{\lceil \alpha(n+1) \rceil}{n} $-quantile of the calibration set $ \Scal $ and let $y_t$ denote the class that appears most frequently among $J$ noise-injected copies of $x$, where $\epsilon_1, \dots, \epsilon_J \sim \mathcal{N}(0, \sigma^2 I)$. Specifically, for each \( x + \epsilon_j \), we process the sample through the classifier \( f(\cdot \mid w) \), tally the frequency of each class across all \( J \) predictions, and identify \( y_t \) as the class with the highest occurrence. The count of occurrences for \( y_t \) within these \( J \) samples, denoted as \( n_t = \text{counts}[y_t] \), is used to determine the upper confidence bound for this probability, taking into account the noise injections and the significance level \( \alpha \). If the calculated bounds on \( \delta \) allow for guaranteed detection of the poisoned samples, providing a clear criterion for certifying the presence of backdoors.

\begin{algorithm}
\caption{RPP Certification}
\label{alg:2}
\begin{algorithmic}[1]
\STATE \textbf{Input:} $x$: data point to be inspected, $\mathcal{V}$: clean calibration data set, $f(\cdot\mid w)$: pre-trained classifier, $\delta$: backdoor trigger, $\Phi$: a standard Gaussian CDF.
\STATE \textbf{Parameters:} $n$: size of $\mathcal{V}$, $\sigma$: noise standard deviation, $J$: number of noise injections, $\alpha$: significance level.
\STATE $p(x) \gets \mathbb{P}(f(x + \delta \mid w) = y_t)$
\STATE Get $ \zeta(x, \delta)  \gets \frac{\lceil \alpha(n+1)\rceil}{n}$ quantile from the calibration set $\mathcal{S}$
\STATE $y_t \gets$ top class in $f(x+\varepsilon_j \mid w)$
\STATE $n_t \gets$ {counts}$[y_t]$
\STATE $\overline{p_t} \gets \textsc{UpperConfBound}(n_t, J, 1 - \alpha)$
\STATE \textbf{Output:} If $ \sigma \left(\Phi^{-1}(p(x) - \zeta(x, \delta)) - \Phi^{-1}(\overline{p_t})\right) \leq \|\delta\|_2 \leq \sigma \left(\Phi^{-1}(p(x)) - \Phi^{-1}(\overline{p_t})\right) $, the poisoned samples are guaranteed to be detected; otherwise, the poisoned samples may be detected but without guarantee.
\end{algorithmic}
\end{algorithm}

\subsection{Baselines for Backdoor Sample Detection}\label{Backdoor Sample Detection}
We compare our method against 12 empirical backdoor defenses on both balanced ($\rho = 1$) and imbalanced CIFAR-10 datasets ($\rho = 10, 200$), as summarized in Tab.~\ref{table:baselinedefense1}. Detection results for RPP are reported under $\alpha = 0.05$, $n = 100$, and $\sigma = 1.0$, across 10 types of backdoor attacks. In all cases, we ensure ASR exceeds 90\%. RPP consistently achieves comparable or superior TPRs relative to SOTA defenses across all trigger types. We also briefly review these baseline methods and analyze why they fall short in challenging imbalanced settings.

SS\citep{tran2018spectral} uses singular value decomposition to detect anomalies by selecting a subset of data comprising 1.5 times the number of samples with the highest anomaly scores. However, the anomaly detection strategies struggle when the number of poisoned samples is extremely low and the dataset is imbalanced.
AC \citep{chen2018detecting}  captures activation data from the terminal hidden layer of a trained neural network and applies clustering algorithms to pinpoint and expunge anomalous instances from the training dataset. However, the method's efficacy is limited due to the uniform clustering effect and the substantial ratio disparity between poisoned and clean samples, which hinders its performance.
ABL\citep{li2021anti} was initially developed as a robust training defense and subsequently repurposed as a detection method based on output losses. This approach exhibits limited efficacy due to the significantly lower isolation rate of poisoned samples compared to the actual poisoning rate. With only a minimal number of poisoned samples isolated, often as few as two or three, the effectiveness of this method is substantially compromised.
CT \citep{qi2023towards} utilized confusion training for detecting backdoored samples. When the dataset exhibits extreme imbalance and the target label corresponds to the majority class, clean samples are predominantly predicted as belonging to this majority class, resulting in low training losses for these clean samples.
ASSET \citep{pan2023asset}: In the imbalanced datasets and low poison ratio settings, when training with mini-batches, the number of poisoned samples in each mini-batch is very small, or even nonexistent. In the inner offset loop, different mini-batches may contain different amounts of poisons and even no poisoned samples, so it is hard to determine the number of most suspicious samples within each mini-batch; Additionally, when using the gradient ascent algorithm, the optimization process tends to overlook the characteristics of these few samples. This means that the model may not adequately learn the features of these poisoned samples during the learning process.
SCALE-UP \citep{guo2023scale}: When the dataset is imbalanced, although the model still exhibits some consistency in classifying backdoor samples, the bias towards the majority class also results in consistent classification predictions for clean samples, which leads to a high rate of misdiagnosis.
MSPC \citep{pal2024backdoor} employs a bi-level optimization technique designed to minimize the cross-entropy loss for clean samples while simultaneously maximizing it for backdoor samples, facilitating the identification of backdoor samples. However, when the dataset is highly imbalanced and the poisoning rate is low, the use of upper-level optimization to increase the loss for backdoor samples yields suboptimal results.
STRIP \citep{gao2019strip} relies on entropy-based thresholds, which become unreliable when minority-class benign samples naturally exhibit low entropy, leading to higher false positives.
TED \citep{mo2024robust} assumes stable topological evolution for benign samples, but class imbalance distorts layer-wise activation patterns, making poisoned and minority-class samples harder to distinguish.
BBCaL \citep{hu2024bbcal} uses random perturbations to probe behavior changes, but its fixed calibration thresholds are not robust across imbalanced distributions, reducing detection sensitivity.
IBD-PSC \citep{hou2024ibd} detects backdoor presence based on maximum margin statistics, but this method becomes unstable under severe imbalance due to a similar overfitting phenomenon and biased logit distributions across classes. RE \citep{xiang2021reverse} introduces an optimization-based defense that detects backdoors, identifies the target class, and removes poisoned samples by estimating minimal source-to-target perturbations. However, its performance degrades under data imbalance, since minority classes yield unstable perturbation estimates and biased anomaly statistics, making the inferred backdoor signal less distinguishable.

\begin{table*}[ht]\vspace{-3mm}
\centering
\scriptsize
\begin{adjustbox}{width=\textwidth}
\begin{tabular}{@{\extracolsep{\fill}} l l *{10}{cc}}
\toprule
& & \multicolumn{2}{c}{Badnets} & \multicolumn{2}{c}{Blend} & \multicolumn{2}{c}{Trojan} & \multicolumn{2}{c}{SIG} & \multicolumn{2}{c}{ISSBA} & \multicolumn{2}{c}{WaNet} & \multicolumn{2}{c}{Sleeper Agent} & \multicolumn{2}{c}{AdaPatch} & \multicolumn{2}{c}{AdaBlend} & \multicolumn{2}{c}{Narci.} \\
\cmidrule(r){3-4}\cmidrule(r){5-6}\cmidrule(r){7-8}\cmidrule(r){9-10}\cmidrule(r){11-12}\cmidrule(r){13-14}\cmidrule(r){15-16}\cmidrule(r){17-18}\cmidrule(r){19-20}\cmidrule(r){21-22}
& \textbf{Defenses} & TPR & FPR & TPR & FPR & TPR & FPR & TPR & FPR & TPR & FPR & TPR & FPR & TPR & FPR & TPR & FPR & TPR & FPR & TPR & FPR \\
\midrule
\multirow{13}{*}{$\rho=1$}
& SS        & 93.1 & 5.6 & 91.8 & 4.9 & 60.2 & 10.7 & 61.0 & 20.1 & 45.5 & 4.2 & 59.6 & 8.8 & 61.3 & 5.1 & 85.9 & 6.7 & 83.6 & 8.2 & 12.1 & 8.5 \\
& AC        & 97.8 & 2.8 & 96.4 & 5.7 & 73.6 & 16.0 & 78.9 & 10.4 & 51.2 & 7.0 & 23.1 & 10.6 & 45.6 & 17.9 & 91.7 & 6.6 & 88.0 & 9.3 & 9.3 & 26.7 \\
& RE        & 92.1 & 10.3 & 90.8 & 8.9 & 88.2 & 12.4 & 64.2 & 9.5 & 80.1 & 11.9 & 78.5 & 15.0 & 66.4 & 9.0 & 82.3 & 10.4 & 80.5 & 7.5 & 14.3 & 5.8 \\
& ABL       & 86.4 & 3.9 & 90.2 & 10.3 & 85.0 & 4.8 & 85.1 & 8.3 & 52.8 & 3.5 & 78.6 & 8.2 & 20.3 & 1.7 & 55.8 & 3.6 & 50.2 & 3.9 & 1.8 & 2.8 \\
& CT        & 93.9 & 8.7 & 95.2 & 6.3 & 88.9 & 5.0 & 92.0 & 19.6 & 95.2 & 11.0 & 92.6 & 9.5 & 82.1 & 11.2 & 84.0 & 4.3 & 86.7 & 6.5 & 22.1 & 17.5 \\
& ASSET     & 89.9 & 17.8 & 83.1 & 16.0 & 90.4 & 18.3 & 86.7 & 20.0 & 91.2 & 8.8 & 93.0 & 11.7 & 52.9 & 15.4 & 55.7 & 21.3 & 80.3 & 10.6 & 92.1 & 4.3 \\
& SCALE\text{-}UP & 100.0 & 19.3 & 77.1 & 20.1 & 67.4 & 18.8 & \textbf{98.8} & 32.3 & 92.5 & 21.2 & 84.2 & 19.9 & 45.1 & 8.5 & 80.7 & 30.0 & 79.9 & 22.4 & 67.3 & 19.5 \\
& MSPC      & 92.2 & 22.7 & 94.8 & 12.4 & 66.7 & 16.0 & 91.4 & 22.3 & 89.5 & 15.5 & 94.9 & 8.4 & 79.1 & 21.2 & 80.3 & 29.6 & 84.4 & 18.5 & 82.4 & 23.6 \\
& BBCaL     & 98.9 & 10.1 & 96.8 & 12.4 & 80.3 & 9.5 & 98.3 & 13.1 & \textbf{99.7} & 9.6 & 90.2 & 14.2 & 91.0 & 16.6 & 75.1 & 15.7 & 72.7 & 13.6 & 86.6 & 9.4 \\
& STRIP     & 98.5 & 10.3 & 84.4 & 15.6 & 33.1 & 11.0 & 94.3 & 22.1 & 48.6 & 19.7 & 13.3 & 29.9 & 22.6 & 16.1 & 88.8 & 19.6 & 87.6 & 20.5 & 0.0 & 0.8 \\
& TED       & \textbf{100.0} & 2.9 & 94.3 & 6.4 & 80.6 & 9.3 & 91.1 & 13.6 & 93.6 & 12.5 & 92.6 & 8.8 & 90.3 & 11.4 & 88.4 & 10.3 & 86.2 & 13.5 & 91.7 & 10.5 \\
& IBD\text{-}PSC & 99.4 & 4.5 & 97.9 & 5.1 & \textbf{93.0} & 8.6 & 87.2 & 10.4 & 98.6 & 8.7 & \textbf{95.6} & 7.3 & 69.0 & 9.5 & 92.0 & 10.1 & 86.7 & 9.6 & 93.1 & 24.7 \\
\rowcolor{blue!10}
& Ours      & 98.5 & 5.9 & \textbf{98.8} & 6.1 & 92.3 & 9.6 & 98.5 & 3.2 & 94.6 & 7.7 & 93.7 & 6.5 & \textbf{95.0} & 16.2 & \textbf{92.2} & 6.0 & \textbf{89.0} & 10.4 & \textbf{96.1} & 19.6 \\
\midrule
\multirow{13}{*}{$\rho=10$}
& SS        & 35.2 & 7.3 & 37.7 & 8.2 & 35.3 & 4.1 & 60.7 & 0.2 & 15.1 & 10.0 & 38.7 & 6.6 & 40.9 & 31.2 & 22.5 & 3.7 & 13.3 & 6.6 & 0.0 & 0.6 \\
& AC        & 50.8 & 3.8 & 34.1 & 4.8 & 17.9 & 3.0 & 60.0 & 36.1 & 35.5 & 16.8 & 50.2 & 12.4 & 20.1 & 16.2 & 42.5 & 43.6 & 20.8 & 4.9 & 0.0 & 0.0 \\
& RE        & 41.6 & 18.8 & 40.3 & 15.5 & 43.7 & 13.0 & 29.6 & 15.4 & 39.6 & 17.4 & 35.0 & 12.6 & 25.9 & 11.4 & 40.7 & 14.5 & 37.6 & 13.0 & 13.9 & 17.6 \\
& ABL       & 0.0 & 0.2 & 0.0 & 0.6 & 0.0 & 0.1 & 0.0 & 0.0 & 9.2 & 2.7 & 47.5 & 7.3 & 0.0 & 2.6 & 0.6 & 0.2 & 0.0 & 0.3 & 0.0 & 0.8 \\
& CT        & 68.9 & 0.9 & 60.4 & 2.9 & 60.4 & 8.9 & 59.8 & 0.1 & 75.9 & 8.9 & 80.6 & 13.9 & 26.1 & 0.0 & 0.0 & 0.0 & 32.5 & 22.4 & 0.0 & 8.9 \\
& ASSET     & 60.0 & 6.9 & 51.2 & 18.4 & 39.3 & 41.5 & 10.6 & 19.7 & 78.6 & 18.8 & 69.5 & 14.1 & 71.4 & 15.0 & 38.3 & 31.1 & 40.6 & 31.3 & 51.7 & 13.9 \\
& SCALE\text{-}UP & 95.9 & 84.3 & 67.3 & 35.1 & 67.2 & 41.5 & 71.1 & 34.0 & 86.2 & 22.6 & 78.0 & 19.6 & 47.3 & 21.5 & 68.3 & 32.5 & 62.9 & 37.7 & 39.0 & 12.0 \\
& MSPC      & 89.6 & 16.5 & \textbf{89.5} & 11.3 & 55.7 & 21.6 & \textbf{83.9} & 10.0 & 80.1 & 20.3 & 76.9 & 9.4 & 63.7 & 0.9 & 51.2 & 24.3 & 41.0 & 9.5 & 49.6 & 29.4 \\
& BBCaL     & 83.2 & 17.9 & 88.7 & 22.0 & 67.3 & 19.8 & 79.7 & 14.5 & 78.6 & 21.3 & 72.2 & 20.7 & 69.4 & 25.3 & 60.1 & 22.4 & 59.9 & 25.6 & 66.4 & 15.0 \\
& STRIP     & 63.4 & 19.7 & 50.0 & 26.8 & 18.3 & 17.7 & 60.6 & 28.5 & 28.4 & 21.6 & 6.5 & 28.9 & 26.8 & 16.3 & 64.8 & 26.8 & 44.3 & 28.8 & 0.0 & 0.5 \\
& TED       & 90.4 & 12.2 & 82.5 & 14.3 & 54.3 & 17.6 & 72.0 & 16.9 & 79.9 & 21.0 & 76.5 & 16.4 & 70.1 & 19.4 & 66.8 & 21.1 & 61.2 & 20.3 & 76.9 & 18.6 \\
& IBD\text{-}PSC & 91.2 & 14.1 & 87.4 & 12.7 & 83.6 & 17.5 & 70.3 & 19.6 & 85.9 & 19.4 & 77.4 & 20.2 & 61.4 & 20.4 & 74.8 & 16.6 & 75.5 & 18.9 & 80.3 & 28.6 \\
\rowcolor{blue!10}
& Ours      & \textbf{96.7} & 3.2 & 85.2 & 10.4 & \textbf{96.4} & 9.1 & 80.0 & 4.2 & \textbf{88.1} & 13.6 & \textbf{87.5} & 11.7 & \textbf{78.7} & 19.5 & \textbf{82.4} & 3.3 & \textbf{82.6} & 12.5 & \textbf{83.7} & 20.1 \\
\midrule
\multirow{13}{*}{$\rho=200$}
& SS        & 0.1 & 0.0 & 4.9 & 0.7 & 14.7 & 4.9 & 0.0 & 5.0 & 4.4 & 8.1 & 7.1 & 6.0 & 0.0 & 0.1 & 7.5 & 1.4 & 9.5 & 3.6 & 0.0 & 0.3 \\
& AC        & 0.0 & 0.0 & 0.0 & 0.5 & 0.0 & 1.0 & 0.0 & 9.4 & 0.6 & 2.4 & 0.0 & 5.1 & 0.2 & 0.0 & 0.0 & 0.0 & 0.0 & 0.0 & 0.0 & 0.0 \\
& RE        & 16.4 & 7.0 & 12.8 & 7.4 & 8.6 & 7.8 & 10.6 & 0.0 & 6.9 & 12.2 & 10.5 & 11.5 & 0.0 & 10.3 & 0.0 & 0.7 & 2.1 & 1.9 & 0.0 & 6.2 \\
& ABL       & 0.0 & 0.2 & 0.0 & 0.0 & 0.0 & 0.0 & 0.0 & 0.1 & 0.0 & 0.0 & 0.0 & 0.5 & 0.0 & 0.3 & 0.0 & 0.3 & 0.0 & 0.0 & 0.0 & 2.9 \\
& CT        & 0.0 & 3.2 & 0.0 & 3.6 & 0.0 & 0.0 & 0.0 & 0.0 & 0.0 & 1.6 & 0.0 & 6.7 & 0.0 & 1.6 & 0.0 & 0.0 & 0.0 & 0.0 & 0.0 & 15.7 \\
& ASSET     & 0.0 & 3.3 & 7.3 & 2.8 & 0.0 & 0.0 & 0.0 & 0.0 & 9.3 & 2.2 & 14.8 & 5.5 & 6.7 & 1.8 & 3.2 & 0.4 & 5.5 & 3.1 & 2.0 & 1.3 \\
& SCALE\text{-}UP & 67.3 & 54.1 & 51.4 & 48.3 & 0.0 & 0.0 & 0.0 & 0.0 & 33.8 & 19.9 & 48.6 & 30.1 & 0.2 & 42.1 & 6.3 & 0.6 & 29.7 & 30.1 & 10.1 & 21.8 \\
& MSPC      & 10.6 & 2.8 & 20.2 & 6.1 & 15.3 & 3.1 & 19.2 & 2.9 & 18.8 & 5.6 & 32.6 & 17.3 & 10.6 & 5.8 & 8.2 & 1.7 & 9.4 & 3.2 & 21.7 & 32.8 \\
& BBCaL     & 41.5 & 33.6 & 42.2 & 35.1 & 30.6 & 26.7 & 37.0 & 35.8 & 30.1 & 31.9 & 29.3 & 36.2 & 49.2 & 27.7 & 22.8 & 28.6 & 23.0 & 29.1 & 39.3 & 18.9 \\
& STRIP     & 18.6 & 20.1 & 20.4 & 31.4 & 0.0 & 12.7 & 22.6 & 33.8 & 9.5 & 21.7 & 0.3 & 8.4 & 1.2 & 10.5 & 21.0 & 30.6 & 10.3 & 21.8 & 0.0 & 1.2 \\
& TED       & 46.7 & 12.6 & 44.9 & 19.7 & 12.7 & 18.3 & 50.4 & 22.1 & 46.8 & 14.5 & 29.7 & 22.7 & 29.7 & 22.7 & 30.2 & 22.6 & 27.5 & 28.9 & 50.2 & 20.4 \\
& IBD\text{-}PSC & 51.7 & 17.8 & 50.0 & 24.7 & 55.8 & 26.3 & 50.0 & 24.7 & 41.9 & 23.8 & 40.5 & 22.0 & 30.4 & 24.1 & 32.7 & 22.8 & 34.0 & 25.5 & 44.8 & 30.0 \\
\rowcolor{blue!10}
& Ours      & \textbf{83.3} & 14.3 & \textbf{75.5} & 19.8 & \textbf{58.2} & 9.1 & \textbf{50.5} & 9.1 & \textbf{57.1} & 10.6 & \textbf{56.8} & 23.3 & \textbf{69.7} & 3.8 & \textbf{42.8} & 6.2 & \textbf{41.5} & 10.2 & \textbf{56.7} & 23.2 \\
\bottomrule
\end{tabular}
\end{adjustbox}
\caption{Comparison with SoTA defenses on balanced ($\rho = 1$) and imbalanced CIFAR-10 datasets ($\mu = 0.9$, $\rho = 10, 200$).}
\label{table:baselinedefense1}
\end{table*}

\subsection{Certified Defenses under Imbalanced Dataset}\label{Certified Defense}
RAB \citep{weber2023rab}, the inaugural robust training methodology, certifies its robustness against backdoor attacks via randomized smoothing. DPA \citep{levine2020deep} offers a certifiable defense against general and label-flipping poisoning attacks by partitioning the training set into k segments.
These methodologies provide either probabilistic or deterministic assurances that a model will produce the intended outputs under specific adversarial conditions or when subjected to a certain number of label flips. As shown in Tab. \ref{table:certifieddefense}, the efficacy of these certified defenses tends to decline as the dataset becomes more imbalanced. For RAB, we configure $\|\delta\|_2 = 0.1$ with a smoothing parameter $\sigma = 1.0$ and a 2\% poisoning ratio against Badnets attacks. For DPA, we established 50 partitions with 9 sample labels flipped.

\begin{table}[ht]
\centering
\footnotesize 
\setlength{\tabcolsep}{5pt} 
\renewcommand{\arraystretch}{1.2} 
\begin{tabular}{lccccc}
\toprule
& $\rho = 1$ & $\rho = 2$ & $\rho = 10$ & $\rho = 100$ & $\rho = 200$ \\
\midrule
RAB  & 41.5  & 40.2  & 38.8  & 21.0  & 15.7  \\
DPA  & 56.2  & 46.1  & 26.6  & 22.7  & 14.8  \\
\bottomrule
\end{tabular}\vspace{1 mm}
\caption{Certified accuracy (\%) of RAB and DPA against Badnets attack on the balanced CIFAR-10 training dataset ($\rho=1$) and imbalanced CIFAR-10 training datasets ($\mu=0.9$, $\rho=2, 10, 100, 200$).}
\label{table:certifieddefense}
\end{table}

\subsection{Impact of Target Label Class}\label{Different Target label}

We evaluated the performance of our method when the backdoor target label was assigned to either a minority class (class 9) or an intermediate-frequency class (class 4). In all experiments, we ensured that the ASR remained above 90\% to maintain a consistent attack strength. As shown in Tab.~\ref{tab:classification_rates}, the proposed RPP method consistently achieved high TPR across various imbalance ratios $\rho$, while maintaining low FPR.

\begin{table}[ht]
\centering
\begin{tabular}{@{}cccccc@{}}
\toprule
\textbf{} & \multicolumn{2}{c}{Minority class (9)} & \multicolumn{2}{c}{Intermediate class (4)} \\
\cmidrule(r){2-3} \cmidrule(r){4-5}
& TPR & FPR & TPR & FPR \\ \midrule
$\rho$ = 2   & 99.4 & 7.3 & 93.9 & 3.1 \\
$\rho$ = 10  & 93.4 & 10.1 & 96.2 & 3.3 \\
$\rho$ = 100 & 99.9 & 2.9 & 99.3 & 1.5 \\
$\rho$ = 200 & 90.0 & 6.0 & 79.6 & 4.6 \\ \bottomrule
\end{tabular}\vspace{1mm}
\caption{
Performance of RPP against the BadNets backdoor attack on imbalanced CIFAR-10 training datasets ($\rho = 2, 10, 100, 200$) with a $0.3\%$ poisoning ratio, where the target class is either a minority class (class 9) or an intermediate-frequency class (class 4).}
\label{tab:classification_rates}
\end{table}

\subsection{Performance under Logits-Adjusted Training}\label{Logits Adjustment} 

To address the performance degradation caused by class imbalance, we adopted logits adjustment (LA) \citep{menon2020long} as a long-tailed training strategy. This technique explicitly calibrates the logits of each class based on its frequency, thereby mitigating the dominance of majority classes and enhancing the learning of minority-class features. While it slightly reduces ASR under severe imbalance, our RPP method still achieves high TPR (83–99\%) across all settings, as shown in Tab.~\ref{tab:rpp_logit_adjustment}. 
The result demonstrates the robustness of our method under long-tailed training, and its ability to remain effective even when the backdoor attack becomes slightly weakened by class rebalancing techniques.

\begin{table}[ht]
\centering
\begin{tabular}{@{}cc|ccc@{}}
\toprule
 & ASR (Without LA) & ASR (With LA) & TPR  & FPR  \\
\midrule
$\rho$ = 2   & 100.0 & 99.9 & 99.2 & 22.2 \\
$\rho$ = 10  & 99.8  & 99.4 & 96.2 & 20.3 \\
$\rho$ = 100 & 100.0 & 97.4 & 97.8 & 26.1 \\
$\rho$ = 200 & 99.7  & 94.3 & 83.0 & 10.4 \\
\bottomrule
\end{tabular}\vspace{1mm}
\caption{Performance of RPP against BadNets attack under logits adjustment on imbalanced CIFAR-10 datasets ($\rho = 2, 10, 100, 200$) with a poisoning ratio of 0.3.}
\label{tab:rpp_logit_adjustment}
\end{table}

\subsection{Performance against Different Trigger Sizes}\label{Different Trigger Size}

In Tab. \ref{tab:differenttriggersize}, we present the ASR and AUC metrics for various trigger sizes, and we evaluate the efficacy of RPP when the ASR exceeds 50\%. Our findings indicate that for $\|\delta\|_2 = 3$, in imbalanced datasets with $\rho = 100, 200$, the ASR surpasses 50\%, and our RPP achieves a TPR of 80\% with an FPR of less than 30\%. When $\|\delta\|_2 = 6$, for both the balanced dataset ($\rho = 1$) and imbalanced datasets ($\rho = 2, 10$), the ASR is above 50\%, and RPP attains TPRs of 70.6\%, 89.9\%, and 87.4\%, respectively, with an FPR below 15\%. These results align well with our certification.

\begin{table}[ht]
\centering
\resizebox{\textwidth}{!}{ 
\begin{tabular}{c cc|cc|cccc|cccc|cccc|cccc}
\toprule
& \multicolumn{2}{c|}{$\|\delta\|_2 = 0.8$} & \multicolumn{2}{c|}{$\|\delta\|_2 = 2$} & \multicolumn{4}{c|}{$\|\delta\|_2 = 3$} & \multicolumn{4}{c|}{$\|\delta\|_2 = 4$} & \multicolumn{4}{c|}{$\|\delta\|_2 = 6$} & \multicolumn{4}{c}{$\|\delta\|_2 = 8$} \\
\cmidrule(lr){2-3} \cmidrule(lr){4-5} \cmidrule(lr){6-9} \cmidrule(lr){10-13} \cmidrule(lr){14-17} \cmidrule(lr){18-21}
& \multicolumn{2}{c|}{No Defense}  & \multicolumn{2}{c|}{No Defense}  & \multicolumn{2}{c}{No Defense} & \multicolumn{2}{c|}{RPP} & \multicolumn{2}{c}{No Defense} & \multicolumn{2}{c|}{RPP} & \multicolumn{2}{c}{No Defense} & \multicolumn{2}{c|}{RPP} & \multicolumn{2}{c}{No Defense} & \multicolumn{2}{c}{RPP} \\
\cmidrule(lr){2-3} \cmidrule(lr){4-5} \cmidrule(lr){6-7} \cmidrule(lr){8-9} \cmidrule(lr){10-11} \cmidrule(lr){12-13} \cmidrule(lr){14-15} \cmidrule(lr){16-17} \cmidrule(lr){18-19} \cmidrule(lr){20-21}
& ASR & AUC  & ASR & AUC  & ASR & AUC & TPR & FPR & ASR & AUC & TPR & FPR & ASR & AUC & TPR & FPR & ASR & AUC & TPR & FPR \\
\midrule
$\rho=1$   & 14.3 & 100.0 & 33.7 & 100.0 & 35.6 & 100.0 & --   & --   & 39.2 & 100.0 & --   & --   & 52.3 & 100.0 & 73.1 & 4.1  & 90.5 & 100.0 & 97.2 & 8.0  \\
$\rho=2$   & 14.8 & 100.0 & 37.8 & 100.0 & 33.5 & 100.0 & --   & --   & 47.9 & 100.0 & --   & --   & 69.6 & 100.0 & 88.3 & 6.9  & 90.3 & 100.0 & 96.0 & 6.1  \\
$\rho=10$  & 15.4 & 100.0 & 30.5 & 100.0 & 38.3 & 100.0 & --   & --   & 49.4 & 100.0 & --   & --   & 84.8 & 100.0 & 90.1 & 9.6 & 99.8 & 100.0 & 97.9 & 7.7  \\
$\rho=100$ & 23.0 & 99.3  & 39.6 & 99.4  & 52.1 & 98.6  & 73.8 & 10.1 & 100.0 & 99.4  & 87.5 & 10.7 & 100.0 & 99.6  & 95.2 & 10.1  & 100.0 & 99.5  & 100.0 & 10.9  \\
$\rho=200$ & 36.0 & 92.7  & 49.2 & 92.1  & 68.0 & 96.2  & 89.5 & 14.8 & 99.5  & 97.5  & 96.2 & 15.3 & 100.0 & 97.1  & 100.0 & 20.1 & 100.0 & 97.8  & 100.0 & 16.4 \\
\bottomrule
\end{tabular}
}\vspace{1 mm}
\caption{Performance of RPP under various trigger sizes with a $0.3\%$ poisoning ratio on the balanced MNIST (\(\rho = 1\)) and imbalanced MNIST datasets ($\mu$ = 0.9,  \(\rho = 2, 10, 100, 200\)) for $n = 100$ and $\alpha = 0.05$.}
\label{tab:differenttriggersize}
\end{table}

\subsection{Validation of the FPR Upper Bound Guarantee}\label{higher t increase the fpr}
We validate the probabilistic upper bound for the FPR in Thm. \ref{upperbound} by conducting experiments on the CIFAR-10 dataset. Specifically, we first set aside $1,000$ clean images from CIFAR-10 as a validation pool. From this pool, we randomly sample $n\in\{100,200,400,600\}$ images as the calibration dataset and the remaining images are used to evaluate the FPR. We consider two significance level $\alpha=0.05$ and $\alpha=0.1$. For each $(n,\alpha)$ pair, we repeat the experiments 300 times and report the $0.95$-quantile FPR in  Tab. \ref{tab:fpr_values}. The results show that the FPR in the validation set is bounded by the theoretically derived upper bound.

\begin{table}[ht]
\centering
\resizebox{\textwidth}{!}{%
\begin{tabular}{c rrrr|rrrr}
\toprule
& \multicolumn{4}{c|}{$\alpha = 0.05$} 
& \multicolumn{4}{c}{$\alpha = 0.1$} \\
\cmidrule(lr){2-5} \cmidrule(lr){6-9}
& $n=100$ & $n=200$ & $n=400$ & $n=600$
& $n=100$ & $n=200$ & $n=400$ & $n=600$ \\
\midrule
$\rho=1$   & 4.0  & 5.3  & 5.1  & 4.4  & 10.7  & 9.9  & 9.1  & 8.8  \\
$\rho=2$   & 4.9  & 4.1  & 4.6  & 4.0  & 9.3   & 9.9  & 10.1 & 9.0  \\
$\rho=10$  & 5.5  & 4.9  & 4.6  & 4.2  & 10.7  & 9.8  & 9.5  & 9.1  \\
$\rho=100$ & 5.5  & 4.7  & 4.8  & 5.0  & 9.9   & 9.2  & 10.2 & 10.0 \\
$\rho=200$ & 5.1  & 4.9  & 4.8  & 4.7  & 10.4  & 10.0 & 9.8  & 10.1 \\
\midrule
\textbf{Upper bound} & \textbf{6.0} & \textbf{5.5} & \textbf{5.3} & \textbf{5.2} 
                     & \textbf{11.0} & \textbf{10.5} & \textbf{10.3} & \textbf{10.2} \\
\bottomrule
\end{tabular}
} \vspace{1 mm} 
\caption{The FPR (\%) under varying imbalance ratio ($\rho$), isotropic Gaussian noise standard deviation ($\alpha$), and calibration set size ($n$) on CIFAR10 dataset.}
\label{tab:fpr_values}
\end{table}

\section{Differences Between RPP and CBD}\label{differencewithcbd}

We delineate the key differences between RPP and CBD~\citep{xiang2023cbd} from both theoretical and methodological perspectives.

\paragraph{Theoretical Perspective.}
Although both RPP and CBD leverage randomized smoothing and draw on the Neyman–Pearson lemma as formal justification (as inspired by~\citep{cohen2019certified}), the objects being certified and the proof strategies differ substantially:

\begin{itemize}
    \item Theorem 4.1 in CBD finds the lower bound of $s(w)$ (i.e. $\ell_\infty$ norm of Local Dominant Probability), while we bound below the RPP (expectation of the difference probability vectors). These certified quantities reflect fundamentally different detection criteria.
    
    \item CBD applies part (1) of Lemma 4 from ~\citep{cohen2019certified}) to derive an \emph{upper bound} on the response to backdoor triggers. RPP, by contrast, uitilize part (2) of Lemma 4 from ~\citep{cohen2019certified}) to construct a \emph{lower bound} for reliably detecting poisoned samples. These choices lead to distinct forms of certification, with differing robustness implications.
\end{itemize}

\paragraph{Methodological Setting (Threat Model).}
CBD is designed for \textbf{post-training model inspection}: it aims to determine whether a given \emph{model} has been compromised by a backdoor, without requiring access to the training data. In contrast, RPP addresses a more proactive threat model—\textbf{pre-training poisoned sample identification}. It operates directly on the training data, identifying and filtering out backdoor instances \emph{prior} to any model training, enabling safe downstream learning even under stealthy poisoning attacks.

\section{Computational Complexity}\label{complexity}
On a single NVIDIA A100-SXM4-80GB device, the detection time per sample varies across different datasets as follows: 0.006 seconds for MNIST, 0.0154 seconds for SVHN, 0.0162 seconds for CIFAR10, and 0.0261 seconds for TinyImageNet. And, on an NVIDIA GeForce RTX 4090 device, the detection times per sample for the same datasets are observed to be 0.021, 0.0272, 0.0281, and 0.0404 seconds, respectively.

\section{Broader Impact}\label{impact}
RPP offers a robust poisoned samples detection against backdoor threats in machine learning pipelines. It effectively maintain high performance even under challenging adversarial conditions, reducing the risk of malicious exploits while ensuring safer deployment of AI technologies. Consequently, this work has the potential to set new benchmarks in content moderation and ethical AI practices, fostering more secure and socially responsible applications across diverse domains.

\section{Limitations \& Future Work}
\label{limitations}
Our theoretical analysis relies on the assumption of i.i.d. Gaussian perturbations. Extending this analysis to encompass non-Gaussian or correlated perturbations could offer a more comprehensive understanding of robustness under realistic conditions and serves as a promising direction for future work.

\section{Use of Large Language Models (LLMs)}\label{llms}
We used LLMs for editorial assistance (clarity, grammar, wording, and minor reorganizations). LLMs were not used to generate ideas, design experiments, or manipulate results, or draft related-work claims. The authors take full responsibility for all content.

\end{document}